\selectfont\symbol{60}\fontencoding{\encodingdefault}}
\selectfont\symbol{62}\fontencoding{\encodingdefault}}
\newcommand{\assign}{:=}
\newcommand{\mathD}{\mathrm{D}}
\newcommand{\mathd}{\mathrm{d}}
\newcommand{\nobracket}{}
\newcommand{\tmdummy}{$\mbox{}$}
\newcommand{\tmmathbf}[1]{\ensuremath{\boldsymbol{#1}}}
\newcommand{\tmop}[1]{\ensuremath{\operatorname{#1}}}
\newcommand{\tmscript}[1]{\text{\scriptsize{$#1$}}}
\newcommand{\tmstrong}[1]{\textbf{#1}}
\newcommand{\tmtextit}[1]{\text{{\itshape{#1}}}}
\newenvironment{itemizedot}{\begin{itemize} }{\end{itemize}}
\newenvironment{itemizeminus}{\begin{itemize} }{\end{itemize}}
\newenvironment{proof}{\noindent\textbf{Proof\ }}{\hspace*{\fill}$\Box$\medskip}
\newcounter{nnacknowledgments}
\newtheorem{acknowledgments*}[nnacknowledgments]{Acknowledgments}}
\newcounter{nnconvention}
\newtheorem{convention*}[nnconvention]{Convention}}
\newtheorem{definition}{Definition}[subsection]
\newcounter{nndefinition}
\newtheorem{definition*}[nndefinition]{Definition}
{\theorembodyfont{\rmfamily}\newtheorem{example}{Example}[subsection]}
\newtheorem{lemma}{Lemma}[subsection]
\newcommand{\nonconverted}[1]{\mbox{}}
\newtheorem{proposition}{Proposition}[subsection]
{\theorembodyfont{\rmfamily}\newtheorem{remark}{Remark}[subsection]}
\newcounter{nnremark}
\newtheorem{remark*}[nnremark]{Remark}}
\newtheorem{theorem}{Theorem}[subsection]
\numberwithin{equation}{section}
\begin{document}

\title{Perturbative BV-BFV formalism with homotopic renormalization: a case
study}

\author{Minghao Wang and Gongwang Yan}

\maketitle

\begin{abstract}
  We report a rigorous quantization of topological quantum mechanics on
  $\mathbb{R}_{\geqslant 0}$ and $\mathbf{I}= [0, 1]$ in perturbative BV-BFV
  formalism. Costello's homotopic renormalization is extended, and
  incorporated in our construction. Moreover, BV quantization of the same
  model studied in previous work {\cite{wangyantqm202203}} is derived from the
  BV-BFV quantization, leading to a comparison between different
  interpretations of ``state'' in these two frameworks.
\end{abstract}

{\tableofcontents}

\section{Introduction}

Among mathematical frameworks for quantum field theory (QFT), the one
developed by Costello {\cite{costellorenormalization}} is a candidate to
systematically study perturbative gauge theories. This framework uses
Batalin-Vilkovisky (BV) formalism {\cite{BATALIN198127}} to describe gauge
symmetry, and contains a procedure which we call ``homotopic renormalization''
to deal with UV problem. These two ingredients are combined together in the
language of homological algebra. A successful example quantized in this formalism is BCOV(Bershadsk-Cecotti-Ooguri-Vafa) theory, see {\cite{Costello:2012cy}}{\cite{Costello:2015xsa}}{\cite{Costello:2019jsy}}. The corresponding string field theory at all topology can only be quantized by this method so far, which is a degenerate BV theory coupled with large N gauge field.

However, this framework needs modification when the spacetime manifold has
boundary. In order to define the theory still within BV formalism, we need to
impose boundary condition and work on a ``restricted bulk field space'' only
(see e.g.,
{\cite{rabinovich2021factorization,wangyantqm202203,gwilliam2021factorization,gwilliam2019onelpqtzncs,zeng2021monopole}}
for this approach). In this setting, renormalization is less developed than
the closed spacetime case, and we can find discussions for specific models,
for examples, in {\cite{albert2016heatmfdwtbdr,rabinovich2021factorization}}.

Alternatively, we could generalize BV formalism itself to study QFT on
manifold with boundary. A potential generalization is proposed by Cattaneo,
Mnev and Reshetikhin {\cite{cattaneo2014classical,cattaneo2018perturbative}},
called ``BV-BFV formalism''\footnote{``BFV'' for Batalin-Fradkin-Vilkovisky
{\cite{batalin1983generalizedbf0904,fradkin1975quantization0904}}.}. Its
central notion, called ``modified quantum master equation (mQME)'',
characterizes that the anomaly to quantize the theory in BV formalism is
controlled by certain boundary data. This formalism also contains formulation
for (gauge) field theories on manifold with corners, hence may help with
topics such as functorial QFT and bulk-boundary correspondence.

While BV-BFV formalism has been shown fruitful even only at classical level
(see e.g.,
{\cite{cattaneo2016bvbfvgreh,canepa2019fullybvbfvgr3d,schiavina2015bvthesisbvbfvgr,rejzner2021asymptotic,martinoli2022bvplykvngthy}}),
it has not incorporated a systematic renormalization for quantization. For topological field theories, a successful example in quantum level can be found in {\cite{Cattaneo_2020}}. For field theories which are not topological, we need to add counter terms to make the contributions of Feynman graphs finite. Counter terms on manifolds with boundaries in homotopic renormalization has been discussed in {\cite{rabinovich2021factorization}}. To quantize field theories which are not topological in BV-BFV formalism, We hope to adapt Costello's homotopic renormalization to BV-BFV formalism. As the first step, we would like to clarify the relation between BV-BFV formalism and the approach within BV formalism mentioned above.

\subsection{Main results}

We use AKSZ type {\cite{1997aksz}} topological quantum mechanics (TQM) as the
toy model to study the above questions.

Based on previous work {\cite{wangyantqm202203}}, we obtain a rigorous BV-BFV
description of TQM on $\mathbb{R}_{\geqslant 0}$ and $\mathbf{I}= [0, 1]$,
with homotopic renormalization incorporated. The mQME's are stated in
Definition \ref{mqmetqmdfntn36rlylstlbl} and Definition
\ref{mqmeintvlcsdfntn42plslstlbl}. Their generic solutions are described in
Theorem \ref{thm310gnrcsltntqmrplus} and Theorem
\ref{ppstn42gnrcsltntqmintvllbbububu}, respectively. Then we derive and
sharpen the BV description in {\cite{wangyantqm202203}} from these BV-BFV
constructions (see Proposition \ref{thm51bvdscrptntqmrpstvlstplsend5151} and
Proposition \ref{qmesltntqmintvl56endendend}). We use $1$D BF theory to
demonstrate our result in Example \ref{eg37thycnstctn},
\ref{eg52savemestpstp}.

This brings the study on TQM to a new stage, and we would like to present the
result for TQM on $\mathbb{R}_{\geqslant 0}$ (Theorem
\ref{thm310gnrcsltntqmrplus} and Proposition
\ref{thm51bvdscrptntqmrpstvlstplsend5151}) here. Concretely, the TQM is of
AKSZ type with target being a finite dimensional graded symplectic vector
space $V$. A Lagrangian decomposition $V = L \oplus L'$ is chosen. We have:
\begin{itemizedot}
  \item Given a functional $I_0 = \pi^{\ast} (J^{\partial}) +
  \int_{\mathbb{R}_{\geqslant 0}} I^{\partial}$ with certain $I^{\partial} \in
  \mathcal{O} (V) [[\hbar]], J^{\partial} \in \mathcal{O} (L) [[\hbar]]$,
  \footnote{$\mathcal{O} (V), \mathcal{O} (L)$ are function rings on $V, L$,
  respectively. $\hbar$ is the formal ``quantum parameter''. $\pi^{\ast}
  (J^{\partial})$ is a boundary term and $\int_{\mathbb{R}_{\geqslant 0}}
  I^{\partial}$ denotes the $\Omega^{\bullet} (\mathbb{R}_{\geqslant
  0})$-linear extension of $I^{\partial}$ followed by integration over
  $\mathbb{R}_{\geqslant 0}$.} it induces a consistent BV-BFV interactive
  theory with polarization $V = L \oplus L'$ and a BFV operator
  $\tmmathbf{\Omega}_{L'}^{\tmop{left}} (H^{\partial}, -)$ if and only if
  \[ I^{\partial} \star_{\hbar} I^{\partial} = 0 \quad \text{and} \quad
     H^{\partial} = - e^{J^{\partial} / \hbar} \star_{\hbar} I^{\partial}
     \star_{\hbar} e^{- J^{\partial} / \hbar}, \]
  where $\star_{\hbar}$ denotes the Moyal product on $\mathcal{O} (V)
  [[\hbar]]$, and $\tmmathbf{\Omega}_{L'}^{\tmop{left}} (H^{\partial}, -)$
  denotes the Weyl quantization of $H^{\partial} \in \mathcal{O} (V)
  [[\hbar]]$ on $\mathcal{O} (L') [[\hbar]]$.
  
  \item The functional $I_0 = \pi^{\ast} (J^{\partial}) +
  \int_{\mathbb{R}_{\geqslant 0}} I^{\partial}$ above induces a consistent BV
  interactive theory with ``boundary condition $L$'' if and only
  if\footnote{By imposing ``boundary condition $L$'' we define a ``restricted
  bulk field space'' $\mathcal{E}_L \subset \mathcal{E}$ in
  (\ref{rstctdfldspctqmrpstvsctn3fml4}).}
  \[ I^{\partial} \star_{\hbar} I^{\partial} = 0, \quad \text{and} \quad
     \tmmathbf{\Omega}_L^{\tmop{right}} (e^{J^{\partial} / \hbar},
     I^{\partial}) = 0, \]
  where $\tmmathbf{\Omega}_L^{\tmop{right}} (-, I^{\partial})$ denotes the
  Weyl quantization of $I^{\partial}$ on $\mathcal{O} (L) [[\hbar]]$.
\end{itemizedot}
We would like to stress the following aspects of the story, which should
persist in more general settings.

\subsubsection*{Homotopic renormalization in BV-BFV formalism}

Perturbative BV-BFV formalism involves a BV structure, a splitting and a BFV
operator which all need to be properly regularized (or, renormalized) in order
to rigorously quantize a generic theory. For TQM, we only need to solve this
problem for the former two structures.

A renormalized BV structure has been constructed in
{\cite{rabinovich2021factorization}} using homotopic renormalization, endowing
the ``restricted bulk field space'' $\mathcal{E}_L$ with a differential BV
algebra $(\mathcal{O} (\mathcal{E}_L), \mathd, \partial_{K_t})$ for each
renormalization scale $t > 0$, see (\ref{rnmlzdbvel36fmllstlbl}). As for the
splitting, we propose a notion:
\begin{itemizedot}
  \item The {\tmstrong{renormalized splitting}} {\tmstrong{(at scale $t$)}} is
  the map $\theta_t : L' \rightarrow \mathcal{E}$ determined by
  \[ \theta_t (l') = 2 (\mathbb{I} (\tmmathbf{\alpha}) (l' \otimes -) \otimes
     1) \bar{P} (0, t) |_{C_1} \]
  for $l' \in L'$ (details see Definition \ref{dfntn34rnzdsplttglstlt}).
\end{itemizedot}
This is defined according to the renormalized BV structure associated to
$\mathcal{E}_L$, as it depends on the ``propagator from scale $0$ to scale
$t$'' $\bar{P} (0, t)$. $\theta_t$ flows with $t$ in a way compatible with the
homotopic renormalization group flow (\ref{cnjgtnfreebvalgtqmrstctfldspc},
\ref{bvbfvtotalfldspchmtpyrgflowfree}) of relevant BV structures:
\begin{itemizedot}
  \item For $\forall \varepsilon, \Lambda > 0$, the renormalized splittings
  $\theta_{\varepsilon}, \theta_{\Lambda}$ make this diagram commute:
  \begin{eqnarray*}
    \mathcal{O} (\mathcal{E}) [[\hbar]] \quad &
    \xrightarrow{\LARGE{\mathbb{I}_{\theta_{\varepsilon}}}} & \quad
    \mathcal{O} (L') \otimes \mathcal{O} (\mathcal{E}_L) [[\hbar]]\\
    \downarrow \enspace e^{\hbar \partial_{P (\varepsilon, \Lambda)}} &  &
    \hspace{3em} \downarrow \enspace e^{-\mathbb{I} (\tmmathbf{\alpha}) /
    \hbar} (1 \otimes e^{\hbar \partial_{P (\varepsilon, \Lambda)}})
    e^{\mathbb{I} (\tmmathbf{\alpha}) / \hbar}\\
    \mathcal{O} (\mathcal{E}) [[\hbar]] \quad &
    \xrightarrow{\LARGE{\mathbb{I}_{\theta_{\Lambda}}}} & \quad \mathcal{O}
    (L') \otimes \mathcal{O} (\mathcal{E}_L) [[\hbar]]
  \end{eqnarray*}
  where $\mathbb{I}_{\theta_{\varepsilon}}, \mathbb{I}_{\theta_{\Lambda}} :
  \mathcal{O} (\mathcal{E}) \rightarrow \mathcal{O} (L') \otimes \mathcal{O}
  (\mathcal{E}_L)$ denote the algebraic isomorphisms induced by
  $\theta_{\varepsilon}, \theta_{\Lambda}$, respectively (details see Theorem
  \ref{ppstnrgrnofspltgmycvtn}).
\end{itemizedot}
As a consistency check, the renormalized splitting interpolates between the
ill-defined ``extension by zero'' in the original work
{\cite{cattaneo2018perturbative}} and the ``bulk to boundary propagator''
known to physicists:
\[ 
   \raisebox{-0.361672163294411\height}{\includegraphics[width=14.4518398268398cm,height=0.919831431195067cm]{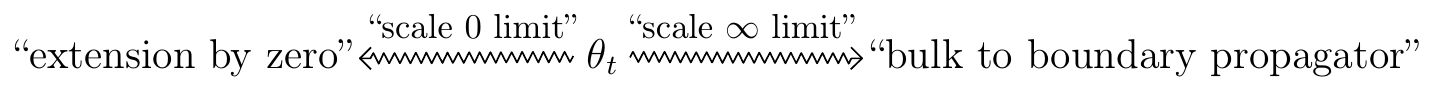}}
\]
(the ``scale $\infty$'' here is not taking naive $t \rightarrow + \infty$ in
our setting, but corresponds to the ``$\Delta^{\infty} $'' in the proof of
{\cite[Proposition 2.3.1]{wangyantqm202203}}).

Above should be the content of homotopic renormalization in BV-BFV formalism
which applies in general.

In this paper, structures purely on boundary (e.g., the BFV operator) do not
need regularization. If the dimension of spacetime is larger than one, the
boundary field space will be typically infinite dimensional, which suggests
that homotopic renormalization should also involve the BFV operator. We leave
this consideration for later study.

\subsubsection*{BV-BFV formalism and the approach within BV formalism}

From the BV-BFV description of TQM we read out
(\ref{bvqmerstctdfldspctqmrplus51}), which is the QME written in
{\cite[Section 3]{wangyantqm202203}}. Moreover, we characterize its generic
solutions in Proposition \ref{thm51bvdscrptntqmrpstvlstplsend5151} based on
discussions for mQME. This suggests that BV-BFV formalism could imply the
approach within BV formalism.

Besides, if (\ref{bvqmerstctdfldspctqmrplus51}) is satisfied, the mQME in
Definition \ref{mqmetqmdfntn36rlylstlbl} can be reinterpreted as a condition
that the map $\mathbb{I}_{(0, t)}$ defined in (\ref{wvfctnisamp56}) is a
cochain map:
\[ \mathbb{I}_{(0, t)} : \left( \mathcal{O} (L) [[\hbar]], \frac{- 1}{\hbar}
   \tmmathbf{\Omega}_L^{\tmop{right}} (-, H^{\partial}) \right) \rightarrow
   (\mathcal{O} (\mathcal{E}_L) [[\hbar]], \mathd + \hbar \partial_{K_t} + \{
   I_t |_{\mathcal{E}_L}, - \}_t), \]
see (\ref{cchnmpeftbdrtoglblobsvbls57}) and its following discussion. This
translation connects the ``wave function'' interpretation of ``state''
inherited in BV-BFV formalism and the ``structure map'' interpretation of
``state'' from factorization algebra perspective in
{\cite{costello_gwilliam_2016,costello_gwilliam_2021}}. It may inspire further
comparative studies between BV-BFV formalism and other frameworks.

\subsubsection*{Configuration space techniques}

The space $\mathbb{R}_{\geqslant 0} [n]$ introduced in Definition
\ref{dfntn31ptlcmpctfctnofcfgrtnspc310} simplifies our analysis of crucial
properties of TQM, including the UV finiteness (Proposition
\ref{tqmuvfntthmprf}) and the BV anomaly (Lemma \ref{bvanmlcmptnn37}). Such a
configuration space technique originates from
{\cite{kontsevich1994feynman,axelrod1994chern,getzler1994operads}}, and
reflects the power of geometric considerations. For QFT's in $2$D, another
geometric renormalization method of regularized integral introduced in
{\cite{rglzditglrmsrfcmdlfmlszj2021}} (see also {\cite{guisili2021elliptic}})
may applies. Maybe this method can be incorporated in BV-BFV formalism as
well.

\subsection{Organization of the paper}

The paper is organized as follows.

In Section \ref{sectn2algprp220307} we briefly introduce perturbative BV
formalism and perturbative BV-BFV formalism, and fix notations on structures
such as Moyal product and Weyl quantization for later use.

Section \ref{sctnbvbfvtqmpstvrllne3} is devoted to TQM on
$\mathbb{R}_{\geqslant 0}$. In Section \ref{cttoffreethytqmsubsectn31guagu},
we formulate homotopic renormalization in BV-BFV formalism for free theory.
Rigorous mQME for interactive theory is stated in Section
\ref{tctntofitctnthytqmpstvrllnsctn32}, followed by a description of its
generic solutions. For TQM on interval, parallel results are obtained in
Section \ref{bvbfvdscptntqmintvlsctn4ttl}.

In Section \ref{sctn5555555lst}, we extract the BV description of TQM in
{\cite{wangyantqm202203}} from the BV-BFV description in Section
\ref{sctnbvbfvtqmpstvrllne3} and Section \ref{bvbfvdscptntqmintvlsctn4ttl}.
Then, mQME is reinterpreted from the perspective of factorization algebra
developed in {\cite{costello_gwilliam_2016,costello_gwilliam_2021}}.

\begin{acknowledgments*}
  We would like to thank Si Li, Nicolai Reshetikhin, Kai Xu, Eugene
  Rabinovich, Philsang Yoo, Brian Williams, Owen Gwilliam, Keyou Zeng for
  illuminating discussion. We especially thank Si Li for invaluable
  conversation and guidance on this work. This work was supported by National
  Key Research and Development Program of China (NO. 2020YFA0713000). Part of
  this work was done in Spring 2022 while M. W. was visiting Center of
  Mathematical Sciences and Applications at Harvard and Perimeter Institute.
  He thanks for their hospitality and provision of excellent working
  enviroment. Research at Perimeter Institute is supported in part by the
  Government of Canada through the Department of Innovation, Science and
  Economic Development Canada and by the Province of Ontario through the
  Ministry of Colleges and Universities.
\end{acknowledgments*}

\begin{convention*}
  {\tmdummy}
  
  \begin{itemizedot}
    \item Let $V$ be a $\mathbb{Z}$-graded $k$-vector space. We use $V_m$ to
    denote its degree $m$ component. Given homogeneous element $a \in V_m$, we
    let $| a | = m$ be its degree.
    \begin{itemizeminus}
      \item $V [n]$ denotes the degree shifting of $V$ such that $V [n]_m =
      V_{n + m}$.
      
      \item $V^{\ast}$ denotes its linear dual such that $V^{\ast}_m =
      \tmop{Hom}_k (V_{- m}, k)$. Our base field $k$ will mainly be
      $\mathbb{R}$.
      
      \item $\tmop{Sym}^m (V)$ and $\wedge^m (V)$ denote the $m$-th power
      graded symmetric product and graded skew-symmetric product respectively.
      We also denote
      \[ \tmop{Sym} (V) \assign \bigoplus_{m \geqslant 0} \tmop{Sym}^m (V),
         \quad \widehat{\tmop{Sym}} (V) \assign \prod_{m \geqslant 0}
         \tmop{Sym}^m (V) . \]
      The latter is a graded symmetric algebra with the former being its
      subalgebra. We will omit the multiplication mark for this product in
      expressions (unless confusion occurs).
      
      \item We call $\mathcal{O} (V) \assign \widehat{\tmop{Sym}} (V^{\ast})$
      the function ring on $V$.
      
      \item $V [[\hbar]], V ((\hbar))$ denote formal power series and Laurent
      series respectively in a variable $\hbar$ valued in $V$.
    \end{itemizeminus}
    \item We use the Einstein summation convention throughout this work.
    
    \item We use $(\pm)_{\tmop{Kos}}$ to represent the sign factors determined
    by Koszul sign rule. We always assume this rule in dealing with graded
    objects.
    \begin{itemizeminus}
      \item Example: let $j$ be a homogeneous linear map on $V$, then
      $j^{\ast}$ denotes the induced linear map on $V^{\ast}$: for $\forall f
      \in V^{\ast}, a \in V$ being homogeneous,
      \[ j^{\ast} f (a) \assign (\pm)_{\tmop{Kos}} f (j (a)) \quad \text{with
         } (\pm)_{\tmop{Kos}} = (- 1)^{| j | | f |} \text{ here.} \]
      \item Example: let $f, g, h \in V^{\ast}$ be homogeneous elements, then
      $f \otimes g \otimes h \in (V^{\ast})^{\otimes 3}$ is regarded as an
      element in $(V^{\otimes 3})^{\ast}$: for $\forall a, b, c \in V$ being
      homogeneous,
      \[ (f \otimes g \otimes h) (a \otimes b \otimes c) \assign
         (\pm)_{\tmop{Kos}} f (a) g (b) h (c) \hspace{1.5em} \text{with }
         (\pm)_{\tmop{Kos}} = (- 1)^{| h | | a | + | h | | b | + | g | | a |}
         \text{ here.} \]
      \item Example: let $(\mathcal{A}, \cdot)$ be a graded algebra, then $[-,
      -]$ means the graded commutator, i.e, for homogeneous elements $a, b$,
      \[ [a, b] \assign a \cdot b - (\pm)_{\tmop{Kos}} b \cdot a \quad
         \text{with } (\pm)_{\tmop{Kos}} = (- 1)^{| a | | b |} \text{ here.}
      \]
    \end{itemizeminus}
    \item We fix an embedding of vector spaces $\tmop{Sym}^m (V)
    \hookrightarrow V^{\otimes m}$ by
    \[ a_1 a_2 \cdots a_m \rightarrow \sum_{\sigma \in \mathbf{S}_m}
       (\pm)_{\tmop{Kos}} a_{\sigma (1)} \otimes a_{\sigma (2)} \otimes \cdots
       \otimes a_{\sigma (m)}, \]
    where $\mathbf{S}_m$ denotes the symmetric group. Accordingly, any $f_1
    f_2 \cdots f_m \in \tmop{Sym}^m (V^{\ast})$ is regarded as an element in
    $(\tmop{Sym}^m (V))^{\ast}$: for $\forall a_1 a_2 \cdots a_m \in
    \tmop{Sym}^m (V)$,
    \[ f_1 f_2 \cdots f_m (a_1 a_2 \cdots a_m) = m! \sum_{\sigma \in
       \mathbf{S}_m} (\pm)_{\tmop{Kos}} f_1 (a_{\sigma (1)}) f_2 (a_{\sigma
       (2)}) \cdots f_m (a_{\sigma (m)}) . \]
    \item We call $(V, d)$ a cochain complex if $d$ is a degree $1$ map on the
    graded vector space $V$ such that $d^2 = 0$. Such $d$ is called a
    differential. A cochain map $f : (V, d) \rightarrow (W, b)$ is a degree
    $0$ map from $V$ to $W$ such that $b f = f d$.
    
    \item Given a manifold $X$, we denote the space of real smooth forms by
    \[ \Omega^{\bullet} (X) = \bigoplus_k \Omega^k (X) \]
    where $\Omega^k (X)$ is the subspace of $k$-forms, lying at degree $k$.
    
    \item Now that we have mentioned differential forms, definitely we will
    work with infinite dimensional functional spaces that carry natural
    topologies. The above notions for $V$ will be generalized as follows. We
    refer the reader to {\cite{treves2006topological}} or {\cite[Appendix
    2]{costellorenormalization}} for further details. Besides, {\cite[Appendix
    A]{rabinovich2021factorization}} contains specialized discussion for
    sections of vector bundles with boundary conditions.
    \begin{itemizeminus}
      \item All topological vector spaces we consider will be nuclear and we
      still use $\otimes$ to denote the completed projective tensor product.
      For example, given two manifolds $X, Y$, we have a canonical isomorphism
      \[ C^{\infty} (X) \otimes C^{\infty} (Y) = C^{\infty} (X \times Y) . \]
      \item In the involved categories, dual space is defined to be the
      continuous linear dual, equipped with the topology of uniform
      convergence of bounded subsets. We still use $(-)^{\ast}$ to denote
      taking such duals.
    \end{itemizeminus}
  \end{itemizedot}
\end{convention*}

\section{Algebraic Preliminaries}\label{sectn2algprp220307}

In this section we collect basics and fix notations on perturbative BV
formalism and perturbative BV-BFV formalism.

\subsection{Perturbative BV quantization}

In BV formalism, the algebraic prototype of a free QFT is the following:

\begin{definition}
  \label{dfntndgbvalgbrasec2229}A {\tmstrong{differential Batalin-Vilkovisky
  (BV) algebra}} is a triple $(\mathcal{A}, Q, \Delta)$ where
  \begin{itemizedot}
    \item $\mathcal{A}$ is a $\mathbb{Z}$-graded commutative associative
    unital algebra. Assume the base field is $\mathbb{R}$.
    
    \item $Q : \mathcal{A} \rightarrow \mathcal{A}$ is a derivation of degree
    $1$ such that $Q^2 = 0$.
    
    \item $\Delta : \mathcal{A} \rightarrow \mathcal{A}$ is a linear operator
    of degree $1$ such that $\Delta^2 = 0$, and $[Q, \Delta] = Q \Delta +
    \Delta Q = 0$. We call $\Delta$ the {\tmstrong{BV operator}}.
    
    \item $\Delta$ is a ``second-order'' operator w.r.t. the product of
    $\mathcal{A}$. Precisely, define the {\tmstrong{BV bracket}} $\{ -, - \} :
    \mathcal{A} \otimes \mathcal{A} \rightarrow \mathcal{A}$ to be the failure
    of $\Delta$ being a derivation:
    \[ \{ a, b \} \assign \Delta (a b) - (\Delta a) b - (- 1)^{| a |} a \Delta
       b, \qquad \text{for } \forall a, b \in \mathcal{A}. \]
    Then for $\forall a \in \mathcal{A}$, $\{ a, - \}$ is a derivation of
    degree $(| a | + 1)$: for $\forall b, c \in \mathcal{A}$
    \[ \{ a, b c \} = \{ a, b \} c + (\pm)_{\tmop{Kos}} b \{ a, c \}, \qquad
       \text{with } (\pm)_{\tmop{Kos}} = (- 1)^{| b | | a | + | b |} \text{
       here} . \]
  \end{itemizedot}
\end{definition}

Let $\hbar$ be a formal variable of degree $0$. We can extend the above $Q,
\Delta$ to $\mathbb{R} [[\hbar]]$-linear operators on $\mathcal{A} [[\hbar]]$.
Then, $(\mathcal{A}, Q, \Delta)$ being a differential BV algebra implies $Q +
\hbar \Delta$ is a differential on $\mathcal{A} [[\hbar]]$. The cochain
complex $(\mathcal{A} [[\hbar]], Q + \hbar \Delta)$ models ``the observables
of the free theory''.

There is a systematic way to twist (i.e., perturb) this complex, sketched in
the following.

\begin{definition}
  Let $(\mathcal{A}, Q, \Delta)$ be a differential BV algebra. A degree $0$
  element $I \in \mathcal{A} [[\hbar]]$ is said to satisfy {\tmstrong{quantum
  master equation (QME)}} if
  \begin{equation}
    Q I + \hbar \Delta I + \frac{1}{2} \{ I, I \} = 0, \label{qmesmplestcase}
  \end{equation}
  or formally,
  \[ (Q + \hbar \Delta) e^{I / \hbar} = 0. \]
\end{definition}

It is direct to check that (\ref{qmesmplestcase}) implies this formal
conjugation of operators on $\mathcal{A} [[\hbar]]$:
\[ Q + \hbar \Delta + \{ I, - \} = e^{- I / \hbar} (Q + \hbar \Delta) e^{I /
   \hbar}, \]
which implies $(Q + \hbar \Delta + \{ I, - \})^2 = 0$. The cochain complex
$(\mathcal{A} [[\hbar]], Q + \hbar \Delta + \{ I, - \})$ models ``the
observables of the interactive theory''. So, roughly speaking, perturbative BV
quantization amounts to find such ``interactive action functional'' $I$ that
solves the QME.

\begin{remark*}
  To ensure $(Q + \hbar \Delta + \{ I, - \})^2 = 0$, the QME
  (\ref{qmesmplestcase}) can be relaxed to the condition that $Q I + \hbar
  \Delta I + \frac{1}{2} \{ I, I \}$ is a central element with respect to the
  BV bracket. We will not take this into consideration here, but it can be
  crucial in certain cases (see e.g., {\cite[Section
  3.3]{2017qtztnalgindexsiliqinli}}).
\end{remark*}

The differential BV algebras in this paper are all function rings on cochain
complexes, with the BV operators being ``contraction with closed rank-$2$
symmetric tensors''. Precisely, let $(\mathcal{E}, Q)$ be a cochain complex,
$K \in \tmop{Sym}^2 (\mathcal{E})$ satisfying $Q K = 0$, $| K | = 1$, then
\[ (\mathcal{O} (\mathcal{E}) \assign \widehat{\tmop{Sym}}
   (\mathcal{E}^{\ast}), Q, \Delta \assign \partial_K) \]
is a differential BV algebra, where
\begin{itemizedot}
  \item we still use $Q$ to denote the derivation on $\mathcal{O}
  (\mathcal{E})$ extended from $Q^{\ast}$ on $\mathcal{E}^{\ast}$;
  
  \item $\partial_K (\tmop{Sym}^{\leqslant 1} (\mathcal{E}^{\ast})) = 0$, and
  for $n \geqslant 2$, $\forall f_1, f_2, \ldots, f_n \in \mathcal{E}^{\ast}$,
  \begin{equation}
    \partial_K (f_1 f_2 \cdots f_n) = \sum_{i < j} (\pm)_{\tmop{Kos}}  (f_i
    f_j (K)) f_1 \ldots \widehat{f_i} \ldots \widehat{f_j} \ldots f_n .
    \label{2tnsrctctndfntnclrfy}
  \end{equation}
\end{itemizedot}
To really use such a differential BV algebra to describe a free QFT, the
choice of $K$ is not arbitrary.

The cochain $(\mathcal{E}, Q)$ models the ``field space'' of the QFT. If the
spacetime manifold is closed, there should be a nondegenerate pairing $\omega
\in \wedge^2 (\mathcal{E}^{\ast}), | \omega | = - 1$ such that $Q \omega = 0$.
$(\mathcal{E}, Q, \omega)$ is called a {\tmstrong{dg $(- 1)$-symplectic vector
space}}, and $K$ should be chosen according to this structure. For example, if
$\mathcal{E}$ is finite dimensional, $\omega$ will induce a vector space
isomorphism $\mathcal{E} \rightarrow \mathcal{E}^{\ast}$ by sending $e \in
\mathcal{E}$ to $\omega (e, -) \in \mathcal{E}^{\ast}$. This further induces a
vector space isomorphism:
\[ \tmop{Sym}^2 (\mathcal{E}) \rightarrow \wedge^2 (\mathcal{E}^{\ast}) . \]
Then, $K \in \tmop{Sym}^2 (\mathcal{E})$ is the preimage of $\frac{1}{2}\omega \in
\wedge^2 (\mathcal{E}^{\ast})$ under this map. For more general field space
$\mathcal{E}$, the construction of $K$ is less straightforward. We will adopt
Costello's homotopic renormalization method {\cite{costellorenormalization}}
to deal with the models appearing in this paper.

\subsection{Perturbative BV-BFV quantization}\label{ptbtvbvbfvsktchmyway}

As mentioned above, we expect the field space of a perturbative QFT on a
closed manifold to be modelled by a dg $(- 1)$-symplectic vector space. If the
spacetime has boundary, this picture needs modification.

In {\cite{cattaneo2014classical,cattaneo2018perturbative}}, Cattaneo, Mnev and
Reshetikhin proposed a candidate modification. Their proposal is called
``BV-BFV formalism'' (BFV for Batalin-Fradkin-Vilkovisky
{\cite{batalin1983generalizedbf0904,fradkin1975quantization0904}}), which
generalizes BV formalism to encompass the presence of spacetime boundary. The
process of using this formalism to perturbatively quantize a QFT is sketched
as follows.

\begin{definition}
  \label{freebvbfvptbtvmydefntn}A {\tmstrong{free BV-BFV pair}} $(\mathcal{E},
  Q, \omega, \mathcal{E}^{\partial}, Q^{\partial}, \omega^{\partial}, \pi)$ is
  the following:
  \begin{itemizedot}
    \item $(\mathcal{E}, Q)$, $(\mathcal{E}^{\partial}, Q^{\partial})$ are
    cochain complexes.
    
    \item $\pi : (\mathcal{E}, Q) \rightarrow (\mathcal{E}^{\partial},
    Q^{\partial})$ is a cochain map, and it is surjective.
    
    \item $\omega^{\partial} \in \wedge^2 ((\mathcal{E}^{\partial})^{\ast}), |
    \omega | = 0$ is a nondegenerate pairing, such that $Q^{\partial}
    \omega^{\partial} = 0$.
    
    \item $\omega \in \wedge^2 (\mathcal{E}^{\ast}), | \omega | = - 1$ is a
    pairing, such that $Q \omega = \pi^{\ast} \omega^{\partial}$.
  \end{itemizedot}
\end{definition}

The physical meaning is that, $\mathcal{E}$ is the ``bulk field space'',
$\mathcal{E}^{\partial}$ is the ``boundary field space'', and $\pi$ is
``restriction to the boundary''. Usually $\omega$ is nondegenerate, then the
condition $Q \omega = \pi^{\ast} \omega^{\partial}$ says that, the failure of
$(\mathcal{E}, Q, \omega)$ being a dg $(- 1)$-symplectic vector space is
controlled by the boundary data $(\mathcal{E}^{\partial}, Q^{\partial},
\omega^{\partial})$.

\begin{remark*}
  We refer the reader to {\cite[Section 7]{cattaneo2020introduction}} for the
  global version of this notion, which is not necessary in the scope of this
  paper.
\end{remark*}

\begin{definition}
  \label{plrztnandspltgmycvtn}Given a free BV-BFV pair $(\mathcal{E}, Q,
  \omega, \mathcal{E}^{\partial}, Q^{\partial}, \omega^{\partial}, \pi)$, a
  {\tmstrong{polarization}} is a Lagrangian decomposition of
  $\mathcal{E}^{\partial}$ compatible with $Q^{\partial}$, namely,
  \begin{equation}
    \mathcal{E}^{\partial} = L \oplus L' \quad \text{such that} \quad
    Q^{\partial} L \subset L, Q^{\partial} L' \subset L', \label{plrztnbvbfv}
  \end{equation}
  and $\omega^{\partial} \in (L^{\ast} \otimes (L')^{\ast}) \oplus
  ((L')^{\ast} \otimes L^{\ast})$ only pairs $L$ with $L'$.
  
  A {\tmstrong{splitting}} compatible with the polarization
  $\mathcal{E}^{\partial} = L \oplus L'$ is a degree $0$ linear map
  \begin{equation}
    \theta : L' \rightarrow \mathcal{E}, \label{cptbsplttng}
  \end{equation}
  such that $\pi \theta : L' \rightarrow \mathcal{E}^{\partial} = L \oplus L'$
  is $(0, \tmop{id}_{L'})$.
\end{definition}

Given a polarization as (\ref{plrztnbvbfv}), let $p_L, p_{L'}$ denote the
projections from $\mathcal{E}^{\partial}$ to $L, L'$, respectively. The
restricted (bulk) field space with boundary condition $L$ is defined as:
\[ \mathcal{E}_L \assign \ker (p_{L'} \pi) \subset \mathcal{E}. \]
Then, $(\mathcal{E}_L, Q)$ is a cochain complex, and $(Q \omega)
|_{\mathcal{E}_L} = 0$. Usually, $\omega$ is nondegenerate on $\mathcal{E}_L$,
hence $(\mathcal{E}_L, Q, \omega)$ is a dg $(- 1)$-symplectic vector space.
Suppose we have chosen a $Q$-closed tensor $K \in \tmop{Sym}^2
(\mathcal{E}_L)$ according to $(\mathcal{E}_L, Q, \omega)$, then $(\mathcal{O}
(\mathcal{E}_L), Q, \partial_K)$ is a differential BV algebra. We could use it
to define the theory still within BV formalism (which is the method in
{\cite{wangyantqm202203}}). However, the spirit of BV-BFV formalism is that,
we should work on $\mathcal{E}$ instead of the subspace $\mathcal{E}_L \subset
\mathcal{E}$ only.

The splitting (\ref{cptbsplttng}) induces an isomorphism $\mathcal{E} \simeq
L' \oplus \mathcal{E}_L$, which further induces an algebraic isomorphism
\begin{equation}
  \mathbb{I}_{\theta} : \mathcal{O} (\mathcal{E}) \rightarrow \mathcal{O} (L')
  \otimes \mathcal{O} (\mathcal{E}_L) . \label{ismbtwnfctnrgonfldspc}
\end{equation}
Then, the consistency condition of a perturbative QFT in BV-BFV formalism is
the following:

\begin{definition*}
  A degree $0$ functional $I \in \mathcal{O} (\mathcal{E}) [[\hbar]]$ is said
  to satisfy the {\tmstrong{modified quantum master equation (mQME)}} if there
  is a certain $\mathbb{R} [[\hbar]]$-linear differential $\tmmathbf{\Omega}$
  on $\mathcal{O} (L') [[\hbar]]$, called the {\tmstrong{BFV operator}}, such
  that
  \[ (1 \otimes (\hbar Q + \hbar^2 \partial_K) +\tmmathbf{\Omega} \otimes 1)
     \mathbb{I}_{\theta} (e^{I / \hbar}) = 0 \]
  for proper choice of $K \in \tmop{Sym}^2 (\mathcal{E}_L)$ and $\theta : L'
  \rightarrow \mathcal{E}$.
\end{definition*}

Here we do not explain the meaning of ``proper choice'' and constraints on
$\tmmathbf{\Omega}$. Roughly speaking, mQME says that the BV anomaly of the
theory defined by $I$ on the (full) bulk field space $\mathcal{E}$ is
cancelled by some boundary data. Precise form of mQME for the concrete models
we study will be stated in later sections.

\subsection{Moyal product and Weyl
quantization}\label{mylprdctwlqtztnalgprlmlry}

We list several standard algebraic structures here for later convenience.

Let
\begin{equation}
  (V = L \oplus L', \omega^{\partial}) \label{gsplctcvcspclgrgndcmp}
\end{equation}
be a finite dimensional graded symplectic vector space endowed with a
Lagrangian decomposition. Namely, $\omega^{\partial} \in \wedge^2 (V^{\ast}),
| \omega | = 0$ is a nondegenerate pairing that only pairs $L$ with $L'$. For
$v \in V$, the map $v \rightarrow \omega^{\partial} (v, -)$ induces a graded
vector space isomorphism $V \simeq V^{\ast}$, hence also $\wedge^2 V \simeq
\wedge^2 (V^{\ast})$. Let
\begin{equation}
  K^{\partial} \in \wedge^2 (V) \label{invrssplctc27}
\end{equation}
be the image of $\omega^{\partial}$ under $\wedge^2 (V^{\ast}) \simeq \wedge^2
V$. Regarded as an element in $V^{\otimes 2}$, we can write
\begin{equation}
  K^{\partial} = K^{\partial}_- + K^{\partial}_+, \quad K^{\partial}_- \in L
  \otimes L', \enspace K^{\partial}_+ \in L' \otimes L, \text{ s.t. } \sigma
  K^{\partial}_- = - K^{\partial}_+, \label{splctckrnldecmpstn}
\end{equation}
where $\sigma$ permutes the two factors of $V^{\otimes 2}$.

\subsubsection*{Moyal product}

There is a graded associative algebra structure $(\mathcal{O} (V) [[\hbar]],
\star_{\hbar})$: for $J, F \in \mathcal{O} (V) [[\hbar]]$,
\begin{equation}
  J \star_{\hbar} F \assign e^{- \hbar \partial_{K^{\partial} /
  2}}_{\tmop{cross}} (J, F), \label{moyalprodmynttn}
\end{equation}
where for $\forall G \in V^{\otimes 2}$ of degree $0$, $\forall j_1, \ldots,
j_m, f_1, \ldots, f_n \in V^{\ast}$,
\begin{eqnarray}
  &  & e^{\partial_G}_{\tmop{cross}} (j_1 j_2 \cdots j_m, f_1 f_2 \cdots f_n)
  \nonumber\\
  & \assign & \sum_{s = 0}^{+ \infty} {\sum_{1 \leqslant l_1 < \cdots < l_s
  \leqslant m}}  \sum_{\tmscript{\begin{array}{c}
    1 \leqslant r_1, \ldots, r_s \leqslant n\\
    r_a \neq r_b \text{ for } a \neq b
  \end{array}}} (\pm)_{\tmop{Kos}}  ((j_{l_1} \otimes f_{r_1}) (G)) \cdots
  ((j_{l_s} \otimes f_{r_s}) (G)) \nonumber\\
  &  & \hspace{13em} \times \left( j_1 \cdots \widehat{j_{l_1}} \cdots
  \widehat{j_{l_s}} \cdots j_m f_1 \cdots \widehat{f_{r_1}} \cdots
  \widehat{f_{r_s}} \cdots f_n \right) .  \label{crossexpcontractnmycvtn}
\end{eqnarray}
$\star_{\hbar}$ is called the {\tmstrong{Moyal product}}.

\subsubsection*{Weyl quantization}

Given the Lagrangian decomposition $V = L \oplus L'$, {\tmstrong{Weyl
quantization}} defines an action:
\[ \tmmathbf{\Omega}_{L'}^{\tmop{left}} (-, -) : \mathcal{O} (V) [[\hbar]]
   \otimes \mathcal{O} (L') [[\hbar]] \rightarrow \mathcal{O} (L') [[\hbar]],
\]
for $J \in \mathcal{O} (V) [[\hbar]], g \in \mathcal{O} (L') [[\hbar]]$,
\begin{equation}
  \tmmathbf{\Omega}_{L'}^{\tmop{left}} (J, g) \assign p_{L'} \left( e^{- \hbar
  \partial_{K_-^{\partial}}}_{\tmop{cross}} \left( e^{\hbar
  \partial_{(K^{\partial}_+ - K^{\partial}_-) / 4}} J, g \right) \right)
  \label{weylqtztnccrtfmlmycnvntn}
\end{equation}
where $p_{L'}$ here denotes the projection $\mathcal{O} (V) \rightarrow
\mathcal{O} (L')$ induced by $V = L \oplus L'$, and $\partial_{(K^{\partial}_+
- K^{\partial}_-) / 4}$ acting on $\mathcal{O} (V)$ is defined in the same way
as (\ref{2tnsrctctndfntnclrfy}). $e^{\hbar \partial_{(K^{\partial}_+ -
K^{\partial}_-) / 4}}$ should be regarded as the transformation from
``symmetric (Weyl) ordering'' to ``normal ordering''. It is direct to verify
that $\tmmathbf{\Omega}_{L'}^{\tmop{left}} (-, -)$ makes $\mathcal{O} (L')
[[\hbar]]$ a graded left module over $(\mathcal{O} (V) [[\hbar]],
\star_{\hbar})$.

Similarly, $\mathcal{O} (L) [[\hbar]]$ has a graded right module structure
over $(\mathcal{O} (V) [[\hbar]], \star_{\hbar})$:
\[ \tmmathbf{\Omega}_L^{\tmop{right}} (-, -) : \mathcal{O} (L) [[\hbar]]
   \otimes \mathcal{O} (V) [[\hbar]] \rightarrow \mathcal{O} (L) [[\hbar]], \]
for $J \in \mathcal{O} (V) [[\hbar]], f \in \mathcal{O} (L) [[\hbar]]$,
\begin{equation}
  \tmmathbf{\Omega}_L^{\tmop{right}} (f, J) \assign p_L \left( e^{- \hbar
  \partial_{K_-^{\partial}}}_{\tmop{cross}} \left( f, e^{\hbar
  \partial_{(K^{\partial}_+ - K^{\partial}_-) / 4}} J \right) \right) .
  \label{weylqtztnrightactnmycnvntn}
\end{equation}
Moreover, we have a nondegenerate pairing
\[ \ll -, - \gg : \mathcal{O} (L) [[\hbar]] \otimes \mathcal{O} (L') [[\hbar]]
   \rightarrow \mathbb{R} [[\hbar]], \]
for $f \in \mathcal{O} (L) [[\hbar]], g \in \mathcal{O} (L') [[\hbar]]$,
\begin{equation}
  \ll f, g \gg \assign p_L p_{L'} \left( e^{- \hbar
  \partial_{K_-^{\partial}}}_{\tmop{cross}} (f, g) \right) .
  \label{hilbertpairingnondegnmycnvntn}
\end{equation}
It is direct to verify
\begin{equation}
  \ll f, p_{L'} \left( e^{- \hbar \partial_{K_-^{\partial}}}_{\tmop{cross}}
  (J, g) \right) \gg = \ll p_L \left( e^{- \hbar
  \partial_{K_-^{\partial}}}_{\tmop{cross}} (f, J) \right), g \gg
  \label{hilbtprcptbwithmdleactn}
\end{equation}
for $\forall J \in \mathcal{O} (V) [[\hbar]]$. So this pairing is compatible
with the module structures.

\subsubsection*{Constraints on the BFV operator}

In Section \ref{ptbtvbvbfvsktchmyway} we mentioned that there should be a BFV
operator $\tmmathbf{\Omega}$ on $\mathcal{O} (L') [[\hbar]]$ appearing in
mQME. Here we describe it more concretely for the case
$\mathcal{E}^{\partial}$ equals to $V$ in (\ref{gsplctcvcspclgrgndcmp}).

It is direct to check that, for an $\mathbb{R} [[\hbar]]$-linear operator
$\tmmathbf{\Omega}$ on $\mathcal{O} (L') [[\hbar]]$, these two statements are
equivalent:
\begin{itemizedot}
  \item $\tmmathbf{\Omega}$ can be expanded as
  \[ \tmmathbf{\Omega}= \sum_{n = 0}^{+ \infty} \hbar^n \tmmathbf{\Omega}_n \]
  where $\tmmathbf{\Omega}_n$ is a differential operator on $\mathcal{O} (L')$
  of order $\leqslant n$.
  
  \item $\exists J \in \mathcal{O} (V) [[\hbar]]$, such that
  $\tmmathbf{\Omega}=\tmmathbf{\Omega}_{L'}^{\tmop{left}} (J, -)$.
\end{itemizedot}
By compatibility between $\tmmathbf{\Omega}_{L'}^{\tmop{left}} (-, -)$ and
$\star_{\hbar}$, for $J \in \mathcal{O} (V) [[\hbar]]$, these two statements
are equivalent:
\begin{itemizedot}
  \item $\tmmathbf{\Omega}_{L'}^{\tmop{left}} (J, -)$ is a differential, i.e.,
  $| \tmmathbf{\Omega}_{L'}^{\tmop{left}} (J, -) | = 1,
  (\tmmathbf{\Omega}_{L'}^{\tmop{left}} (J, -))^2 = 0$.
  
  \item $| J | = 1, J \star_{\hbar} J = 0$.
\end{itemizedot}
In summary, when the data of boundary field space and polarization is given by
(\ref{gsplctcvcspclgrgndcmp}), we demand that the BFV operator on $\mathcal{O}
(L') [[\hbar]]$ should be $\tmmathbf{\Omega}_{L'}^{\tmop{left}} (J, -)$ for
some $J \in \mathcal{O} (V) [[\hbar]]$ satisfying $| J | = 1, J \star_{\hbar}
J = 0$.

\section{BV-BFV Description of TQM on $\mathbb{R}_{\geqslant
0}$}\label{sctnbvbfvtqmpstvrllne3}

We have discussed topological quantum mechanics (TQM) on
$\mathbb{R}_{\geqslant 0}$ in {\cite{wangyantqm202203}} with Costello's
homotopic renormalization incorporated, based on
{\cite{rabinovich2021factorization}}. There, we only work on ``restricted bulk
field space'', so that the theory can be defined within BV formalism. In this
section, we will refine that discussion using BV-BFV formalism. Moreover, we
will find generic solutions to the mQME for TQM.

\subsection{The content of free theory}\label{cttoffreethytqmsubsectn31guagu}

\subsubsection*{Field spaces}

Given $(V = L \oplus L', \omega^{\partial})$ as (\ref{gsplctcvcspclgrgndcmp}),
let
\begin{equation}
  (\mathcal{E}^{\partial} \assign V, Q^{\partial} \assign 0,
  \omega^{\partial}) \label{freebdrdtmytqmeg}
\end{equation}
be the boundary field data (now the spacetime is $\mathbb{R}_{\geqslant 0}$,
with boundary being $\{ 0 \}$),
\begin{equation}
  \left( \mathcal{E} \assign \Omega^{\bullet} (\mathbb{R}_{\geqslant 0})
  \otimes V, Q \assign \mathd, \omega \assign \int \omega^{\partial} \in
  \wedge^2 (\mathcal{E}_c^{\ast}) \right) \label{freeblkdtbvbfvmytqmeg}
\end{equation}
be the bulk field data, where $\mathd$ is the de Rham differential on
$\Omega^{\bullet} (\mathbb{R}_{\geqslant 0})$. Note that $\omega$ is
{\tmstrong{not}} a pairing on $\mathcal{E}$ now (since integration might be
divergent on $\mathbb{R}_{\geqslant 0}$). Instead, it is a pairing on the
space of compactly supported forms $\mathcal{E}_c \assign \Omega^{\bullet}_c
(\mathbb{R}_{\geqslant 0}) \otimes V \subset \mathcal{E}$. Let
\begin{equation}
  \pi : \mathcal{E} \rightarrow \mathcal{E}^{\partial}
  \label{freebvbfvdtrstctnmpmytqmeg}
\end{equation}
be the pullback of differential forms induced by $\{ 0 \} \hookrightarrow
\mathbb{R}_{\geqslant 0}$. Then, we can verify $(\mathcal{E}_c, \mathd,
\omega, \mathcal{E}^{\partial}, 0, \omega^{\partial}, \pi)$ is a free BV-BFV
pair as in Definition \ref{freebvbfvptbtvmydefntn}. The noncompactness of
$\mathbb{R}_{\geqslant 0}$ makes $\mathcal{E}_c$ rather than $\mathcal{E}$ fit
into this definition, and it also leads to several other technical subtleties
in the interactive theory (see Remark \ref{rmkncmpctpprsrptfctsitactvthy}). We
will not expand on this aspect of the story in this paper and just refer the
reader to {\cite[Section 3]{wangyantqm202203}} for relevant facts.

\subsubsection*{Renormalized BV structure with restricted bulk field space}

The polarization $\mathcal{E}^{\partial} = L \oplus L'$ allows us to define
\begin{equation}
  \mathcal{E}_L \assign \{ f \in \Omega^{\bullet} (\mathbb{R}_{\geqslant 0})
  \otimes V| \pi (f) \in L \}, \label{rstctdfldspctqmrpstvsctn3fml4}
\end{equation}
which is the field space we have studied in {\cite{wangyantqm202203}}. By
constructions in {\cite{rabinovich2021factorization}}, we have a family of
differential BV structures on $\mathcal{O} (\mathcal{E}_L)$, concluded in
{\cite[Proposition 3.0.2]{wangyantqm202203}}. Here we briefly describe it as
follows.

For $\forall t > 0$, let
\begin{eqnarray*}
  H_t & \assign & \frac{1}{\sqrt{4 \pi t}} \left( e^{- \frac{(x - y)^2}{4 t}}
  (\tmop{dy} - \tmop{dx}) - e^{- \frac{(x + y)^2}{4 t}} (\tmop{dy} +
  \tmop{dx}) \right),\\
  \widetilde{H_t} & \assign & \frac{1}{\sqrt{4 \pi t}} \left( \phi (x - y)
  e^{- \frac{(x - y)^2}{4 t}} (\tmop{dy} - \tmop{dx}) - \phi (x + y) e^{-
  \frac{(x + y)^2}{4 t}} (\tmop{dy} + \tmop{dx}) \right)
\end{eqnarray*}
denote two smooth forms on $\mathbb{R}_{\geqslant 0} \times
\mathbb{R}_{\geqslant 0}$, with $\phi \in C^{\infty} (\mathbb{R})$ being a
compactly supported even function which evaluates to $1$ in a neighborhood of
$\{ 0 \}$. Define the {\tmstrong{BV kernel at scale $t$}} to be
\begin{eqnarray}
  K_t & \assign & \frac{1}{2} \left( H_t - \frac{1}{2}  \left( 1 \otimes
  \text{d} + \text{d} \otimes 1 \right) \left( \text{d}^{\tmop{GF}} \otimes 1
  + 1 \otimes \mathd^{\tmop{GF}} \right) \int_0^t \tmop{ds} (\widetilde{H_s} -
  H_s) \right) \otimes K^{\partial}_+ \nonumber\\
  &  & - \frac{1}{2} \sigma \left( H_t - \frac{1}{2}  \left( 1 \otimes
  \text{d} + \text{d} \otimes 1 \right) \left( \text{d}^{\tmop{GF}} \otimes 1
  + 1 \otimes \mathd^{\tmop{GF}} \right) \int_0^t \tmop{ds} (\widetilde{H_s} -
  H_s) \right) \otimes K^{\partial}_-  \label{bvknldfntnbvbfvwrk}
\end{eqnarray}
where $K^{\partial}_+, K^{\partial}_-$ are defined in
(\ref{splctckrnldecmpstn}), $\sigma$ here permutes variables $x$ and $y$, and
$\text{d}^{\tmop{GF}}$ is the Hodge dual to $\mathd$ induced by metric
$\langle \partial_x, \partial_x \rangle = 1$ on $\mathbb{R}_{\geqslant 0}$:
\[ \text{d}^{\tmop{GF}} (f \tmop{dx}) = - \partial_x f \quad \text{for } f \in
   \Omega^0 (\mathbb{R}_{\geqslant 0}) . \]
Then, by direct computation we can verify
\[ K_t \in \tmop{Sym}^2 (\mathcal{E}_L), \quad | K_t | = 1, \mathd K_t = 0. \]
So,
\begin{equation}
  (\mathcal{O} (\mathcal{E}_L), \mathd, \partial_{K_t})
  \label{rnmlzdbvel36fmllstlbl}
\end{equation}
is a differential BV algebra for $\forall t > 0$. Moreover, for $\forall
\varepsilon, \Lambda > 0$, define the {\tmstrong{propagator from scale
$\varepsilon$ to scale $\Lambda$}} to be
\begin{eqnarray}
  P (\varepsilon, \Lambda) & \assign & \left( \frac{- 1}{4}  \left(
  \text{d}^{\tmop{GF}} \otimes 1 + 1 \otimes \mathd^{\tmop{GF}} \right)
  \int_{\varepsilon}^{\Lambda} \tmop{dt} \widetilde{H_t} \right) \otimes
  K^{\partial}_+ \nonumber\\
  &  & + \sigma \left( \frac{1}{4}  \left( \text{d}^{\tmop{GF}} \otimes 1 + 1
  \otimes \mathd^{\tmop{GF}} \right) \int_{\varepsilon}^{\Lambda} \tmop{dt}
  \widetilde{H_t} \right) \otimes K^{\partial}_- . 
  \label{ppgtdfntnbvbfvwrkmycvtn}
\end{eqnarray}
It satisfies
\begin{equation}
  P (\varepsilon, \Lambda) \in \tmop{Sym}^2 (\mathcal{E}_L), \quad K_{\Lambda}
  = K_{\varepsilon} + \mathd P (\varepsilon, \Lambda),
  \label{dppgtisdiffofbvknl}
\end{equation}
hence leading to a conjugation of cochain complexes
\begin{equation}
  (\mathcal{O} (\mathcal{E}_L) [[\hbar]], \mathd + \hbar
  \partial_{K_{\varepsilon}}) \begin{array}{c}
    e^{\hbar \partial_{P (\varepsilon, \Lambda)}}\\
    \rightleftharpoons\\
    e^{- \hbar \partial_{P (\varepsilon, \Lambda)}}
  \end{array} (\mathcal{O} (\mathcal{E}_L) [[\hbar]], \mathd + \hbar
  \partial_{K_{\Lambda}}) . \label{cnjgtnfreebvalgtqmrstctfldspc}
\end{equation}
This relation exhibits Costello's homotopic renormalization (for the free
theory) within BV formalism.

For later discussion we further present several properties of the propagator.

\begin{definition}
  \label{dfntn31ptlcmpctfctnofcfgrtnspc310}For $n \geqslant 1$, define
  $\mathbb{R}_{\geqslant 0} [n]$ to be
  \begin{equation}
    \mathbb{R}_{\geqslant 0} [n] \assign \sqcup_{\sigma \in \mathbf{S}_n} \{
    (x_1, \ldots, x_n) |0 \leqslant x_{\sigma (1)} \leqslant \cdots \leqslant
    x_{\sigma (n)} \} . \label{cnfgspcmyversnpts}
  \end{equation}
  Namely, $\mathbb{R}_{\geqslant 0} [n]$ is the {\tmstrong{disjoint union}} of
  $n!$ connected components. Note that there are $n!$ {\tmstrong{different}}
  copies of $(0, 0, \ldots, 0)$ in $\mathbb{R}_{\geqslant 0} [n]$, labelled by
  each $\sigma \in \mathbf{S}_n$.
\end{definition}

We can regard $\mathbb{R}_{\geqslant 0} [n]$ as a ``partial compactification''
of configuration space of $n$ pairwise different points in
$\mathbb{R}_{\geqslant 0}$. Particularly, we can write
\begin{equation}
  \mathbb{R}_{\geqslant 0} [2] = C_1 \sqcup C_2, \quad \text{where } C_1
  \assign \{ (x, y) |0 \leqslant x \leqslant y \}, C_2 \assign \{ (x, y) | 0
  \leqslant y \leqslant x \} . \label{2ptscfgrtnspccvtn}
\end{equation}
Let $\mathD_2$ denote the diagonal in $\mathbb{R}_{\geqslant 0} \times
\mathbb{R}_{\geqslant 0}$, then there is a canonical embedding
$(\mathbb{R}_{\geqslant 0} \times \mathbb{R}_{\geqslant 0}) \backslash
\mathD_2 \hookrightarrow \mathbb{R}_{\geqslant 0} [2]$, where the image of the
former contains the interior of the latter. By direct check we obtain:

\begin{proposition}
  \label{extddppgtrltnwithlmtofppgtmycvtn}On $(\mathbb{R}_{\geqslant 0} \times
  \mathbb{R}_{\geqslant 0}) \backslash \mathD_2$, the expression
  (\ref{ppgtdfntnbvbfvwrkmycvtn}) is well defined for $\varepsilon = 0,
  \Lambda > 0$. The resulting $P (0, \Lambda)$ is a smooth function on
  $(\mathbb{R}_{\geqslant 0} \times \mathbb{R}_{\geqslant 0}) \backslash
  \mathD_2$, and can be smoothly extended to $\mathbb{R}_{\geqslant 0} [2]$.
  We denote this extension by $\bar{P} (0, \Lambda)$.
  
  Given any $(x, y) \in \mathbb{R}_{\geqslant 0} \times \mathbb{R}_{\geqslant
  0}$, the limit
  \[ \lim_{\varepsilon \rightarrow 0} P (\varepsilon, \Lambda) (x, y) \]
  exists, and
  \begin{eqnarray*}
    P (\varepsilon, \Lambda) (0, 0) & = & 0 ;\\
    \lim_{\varepsilon \rightarrow 0} P (\varepsilon, \Lambda) (x, y) & = &
    \bar{P} (0, \Lambda) (x, y) \quad \text{for } x \neq y ;\\
    \lim_{\varepsilon \rightarrow 0} P (\varepsilon, \Lambda) (x, x) & = &
    (\bar{P} (0, \Lambda) |_{C_1} (x, x) + \bar{P} (0, \Lambda) |_{C_2} (x,
    x)) / 2 \quad \text{for } x > 0,
  \end{eqnarray*}
  where $C_1, C_2$ are defined in (\ref{2ptscfgrtnspccvtn}).
\end{proposition}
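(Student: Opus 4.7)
The plan is to reduce the proposition to the asymptotics of explicit scalar heat-type integrals. Substituting the formula for $\widetilde{H_t}$ and applying $\mathd^{\tmop{GF}} \otimes 1 + 1 \otimes \mathd^{\tmop{GF}}$ term by term (using $\mathd^{\tmop{GF}}(f\,\mathd x) = -\partial_x f$ and noting $\mathd y$ sits on the other factor), I would show that the scalar coefficient of $K_+^\partial$ in the integrand decomposes into four terms: two ``regular'' pieces involving $\phi'(x\pm y)$ and two ``singular'' pieces of the shape $-\frac{u\,\phi(u)}{\sqrt{4\pi}\,t^{3/2}}\,e^{-u^2/4t}$ with $u = x - y$ or $u = x + y$. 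The $\phi'$-terms are bounded by $O(t^{-1/2})$ with $\phi'$ supported away from the origin, so their $t$-integrals converge at $t = 0$ and yield smooth functions on $\mathbb{R}_{\geqslant 0}^2$. The $K_-^\partial$ component is related to the $K_+^\partial$ component by the flip $\sigma$, so the analysis of the latter suffices.

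The decisive step is the change of variable $w = |u|/\sqrt{t}$ applied to each singular piece, which converts the $t$-integral into a pure Gaussian:
\[
\int_\varepsilon^\Lambda \frac{u}{t^{3/2}}\,e^{-u^2/4t}\,\mathd t \;=\; 2\,\tmop{sgn}(u) \int_{|u|/\sqrt{\Lambda}}^{|u|/\sqrt{\varepsilon}} e^{-w^2/4}\,\mathd w.
\]
From this closed form essentially all the claims fall out. For $u \neq 0$, the upper bound tends to $+\infty$ as $\varepsilon \to 0$, and the limit $2\,\tmop{sgn}(u)\int_{|u|/\sqrt{\Lambda}}^\infty e^{-w^2/4}\,\mathd w$ is well-defined. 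The piece with $u = x+y$ has $u \geqslant 0$ on all of $\mathbb{R}_{\geqslant 0}^2$, so it is smooth globally. The piece with $u = x-y$ is smooth off the diagonal and has a finite nonzero jump across $\{x = y\}$ with opposite signs on the two sides; this is exactly what is resolved by separating $\{x \leqslant y\}$ and $\{y \leqslant x\}$ as the disjoint components $C_1, C_2$ of $\mathbb{R}_{\geqslant 0}[2]$, and yields the smooth extension $\bar{P}(0,\Lambda)$. The vanishing $P(\varepsilon,\Lambda)(0,0) = 0$ follows because at the origin the prefactors $u = 0$ and $v = 0$ kill both singular terms, while $\phi'$ vanishes in a neighborhood of $0$. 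For the averaging formula at $(x,x)$ with $x > 0$: the $u = x-y$ piece is identically zero along $\{u = 0\}$ (hence has $\varepsilon$-limit zero), while its one-sided limits from $C_1$ and $C_2$ are opposites that average to zero; the remaining pieces are continuous on $\mathbb{R}_{\geqslant 0}^2$ and contribute identically on both sides of the identity.

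The main technical obstacle is verifying smoothness of the extension at the new boundary $\{u = 0\}$ of each component. On $C_2$ one has $\tmop{sgn}(u) \equiv 1$ and one must show that $u \mapsto \phi(u)\int_{u/\sqrt{\Lambda}}^\infty e^{-w^2/4}\,\mathd w$ is smooth on $[0,\infty)$; this is immediate by differentiating under the integral sign (each derivative pulls out a polynomial in $u$ times the Gaussian $e^{-u^2/4\Lambda}$, which is smooth down to $u = 0$). A symmetric argument works on $C_1$. Combined with the smoothness of the $v$-piece (for which $v$ never crosses $0$ in the interior) and of the $\phi'$-terms, this produces the global smooth extension $\bar{P}(0,\Lambda)$ on $\mathbb{R}_{\geqslant 0}[2]$, and the remaining claims follow.
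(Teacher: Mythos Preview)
Your proposal is correct and is precisely the ``direct check'' that the paper alludes to but does not spell out: the paper simply writes ``By direct check we obtain'' before stating the proposition, with no further argument. Your substitution $w=|u|/\sqrt{t}$ reducing the singular pieces to explicit error-function-type integrals, together with the observation that the $\phi'$-terms are supported where the Gaussian is exponentially small in $t$, is exactly the computation needed; the sign jump of $\tmop{sgn}(u)$ across the diagonal is what forces passage to $\mathbb{R}_{\geqslant 0}[2]$ and yields the averaging identity at $(x,x)$.
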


It is straightforward to verify:
\begin{equation}
  \bar{P} (0, \Lambda) |_{C_1} (x, x) - \bar{P} (0, \Lambda) |_{C_2} (x, x) =
  - K^{\partial} / 2 \quad \text{for } x \geqslant 0,
  \label{bdrdiffofextddppgt}
\end{equation}
and
\begin{equation}
  \bar{P} (0, \Lambda) |_{C_1} (0, 0) = - K^{\partial}_- / 2, \qquad \bar{P}
  (0, \Lambda) |_{C_2} (0, 0) = K^{\partial}_+ / 2.
  \label{spclbdrvleofextddppgt}
\end{equation}
We can also restrict the relation $K_{\Lambda} = K_{\varepsilon} + \mathd P
(\varepsilon, \Lambda)$ in (\ref{dppgtisdiffofbvknl}) to
$(\mathbb{R}_{\geqslant 0} \times \mathbb{R}_{\geqslant 0}) \backslash
\mathD_2$. Then its $\varepsilon \rightarrow 0$ limit reads
\begin{equation}
  K_{\Lambda} = \mathd \bar{P} (0, \Lambda)
  \label{bvknrlisdextddppgtonconfgspc}
\end{equation}
as a relation between smooth forms. Since $(\mathbb{R}_{\geqslant 0} \times
\mathbb{R}_{\geqslant 0}) \backslash \mathD_2$ contains the interior of
$\mathbb{R}_{\geqslant 0} [2]$, (\ref{bvknrlisdextddppgtonconfgspc}) holds on
the entire $\mathbb{R}_{\geqslant 0} [2]$, where $K_{\Lambda}$ is regarded as
a form on $\mathbb{R}_{\geqslant 0} [2]$ by pullback of the canonical
projection $\mathbb{R}_{\geqslant 0} [2] \rightarrow \mathbb{R}_{\geqslant 0}
\times \mathbb{R}_{\geqslant 0}$.

\begin{remark*}
  The idea of using (compactified) configuration space to analyse computations
  in QFT follows
  {\cite{kontsevich1994feynman,axelrod1994chern,getzler1994operads}}, while
  our Definition \ref{dfntn31ptlcmpctfctnofcfgrtnspc310} is specialized for
  TQM on $\mathbb{R}_{\geqslant 0}$.
\end{remark*}

\subsubsection*{Renormalized splitting}

Now we come to BV-BFV formalism and consider structures on $\mathcal{O}
(\mathcal{E})$. Since $(\mathcal{E}_L, \mathd)$ is a subcomplex of
$(\mathcal{E}, \mathd)$, $(\mathcal{O} (\mathcal{E}), \mathd, \partial_{K_t})$
is also a differential BV algebra, and there is also a conjugation
\begin{equation}
  (\mathcal{O} (\mathcal{E}) [[\hbar]], \mathd + \hbar
  \partial_{K_{\varepsilon}}) \begin{array}{c}
    e^{\hbar \partial_{P (\varepsilon, \Lambda)}}\\
    \rightleftharpoons\\
    e^{- \hbar \partial_{P (\varepsilon, \Lambda)}}
  \end{array} (\mathcal{O} (\mathcal{E}) [[\hbar]], \mathd + \hbar
  \partial_{K_{\Lambda}}) . \label{bvbfvtotalfldspchmtpyrgflowfree}
\end{equation}
However, homotopic renormalization in BV-BFV formalism has additional content,
involving the splitting and the BFV operator. For the current model, the BFV
operator does not need regularization because the boundary field space
$\mathcal{E}^{\partial}$ is finite dimensional. We focus on the splitting in
the following.

For $\forall J \in \mathcal{O} (V) \simeq \mathcal{O} (L') \otimes \mathcal{O}
(L)$, define $\mathbb{I} (J) \in \mathcal{O} (L') \otimes \mathcal{O}
(\mathcal{E}_L)$ to be
\begin{equation}
  \mathbb{I} (J) \assign (1 \otimes (\pi |_{\mathcal{E}_L})^{\ast}) (J),
  \label{nttnspcliso316}
\end{equation}
where $\pi |_{\mathcal{E}_L} : \mathcal{E}_L \rightarrow L$ is the restriction
of (\ref{freebvbfvdtrstctnmpmytqmeg}) to $\mathcal{E}_L \subset \mathcal{E}$.
It is direct to see that for any splitting $\theta$ as in Definition
\ref{plrztnandspltgmycvtn}, $\mathbb{I}_{\theta} (\pi^{\ast} (J)) =\mathbb{I}
(J)$ where $\mathbb{I}_{\theta}$ is the algebraic isomorphism
(\ref{ismbtwnfctnrgonfldspc}).

Let $\tmmathbf{\alpha} \in (L')^{\ast} \otimes L^{\ast}$ be determined by
\begin{equation}
  \tmmathbf{\alpha} (l' \otimes l) \assign \omega^{\partial} (l', l) \text{
  for } l' \in L', l \in L. \label{spclbdrfctnlmycnvtn}
\end{equation}
$\tmmathbf{\alpha}$ can be regarded as an element in $\tmop{Sym}^2
(V^{\ast})$.\footnote{So, $\pi^{\ast} (\tmmathbf{\alpha}) \in \tmop{Sym}^2
(\mathcal{E}^{\ast})$ looks like ``$q (0) p (0)$'' in Darboux coordinates on
boundary field space.} Then,
\[ \mathbb{I} (\tmmathbf{\alpha}) \in (L')^{\ast} \otimes
   (\mathcal{E}_L)^{\ast}, \quad \mathbb{I} (\tmmathbf{\alpha}) (l' \otimes f)
   = \omega^{\partial} (l', \pi (f)) . \]
Recall that for $\forall H \in \mathcal{O} (V) [[\hbar]]$ there is an operator
$\tmmathbf{\Omega}_{L'}^{\tmop{left}} (H, -)$ on $\mathcal{O} (L') [[\hbar]]$
defined in (\ref{weylqtztnccrtfmlmycnvntn}). Then it is direct to see that for
$\forall J \in \mathcal{O} (V) [[\hbar]]$,
\begin{equation}
  e^{-\mathbb{I} (\tmmathbf{\alpha}) / \hbar}
  (\tmmathbf{\Omega}_{L'}^{\tmop{left}} (H, -) \otimes 1) \mathbb{I}
  (e^{\tmmathbf{\alpha}/ \hbar} J) =\mathbb{I} \left( e^{- \hbar
  \partial_{K_-^{\partial}}}_{\tmop{cross}} \left( e^{\hbar
  \partial_{(K^{\partial}_+ - K^{\partial}_-) / 4}} H, J \right) \right)
  \label{usefulfactabta2mycvtn1}
\end{equation}
Now we state the choice of splitting for our model. By
(\ref{spclbdrvleofextddppgt}) we have:

\begin{proposition}
  \label{scltspltg31532ppstn}The map $\theta_t : L' \rightarrow \mathcal{E}$
  determined by
  \begin{equation}
    \theta_t (l') (x) \assign 2 (\tmmathbf{\alpha} (l' \otimes -) \otimes 1)
    (\bar{P} (0, t) |_{C_1} (0, x)) \quad \text{for } \forall l' \in L', x
    \geqslant 0 \label{spclspltscltmycnvtn}
  \end{equation}
  is a splitting in the sense of Definition \ref{plrztnandspltgmycvtn}.
  Namely, $\theta_t (l')$ is a $V$-valued $0$-form on $\mathbb{R}_{\geqslant
  0}$, with value at $x$ being the result of contracting $2\tmmathbf{\alpha}
  (l' \otimes -)$ with the first tensor factor of $\bar{P} (0, t) |_{C_1} (0,
  x) \in V \otimes V$.
\end{proposition}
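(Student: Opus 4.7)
The plan is to verify the three conditions of a splitting from Definition \ref{plrztnandspltgmycvtn}: $\theta_t$ should be a linear map $L'\to\mathcal{E}$ of degree $0$ satisfying $\pi\theta_t=(0,\mathrm{id}_{L'})$. Most of these conditions follow immediately from the machinery set up earlier in the subsection; the genuine content lies in the boundary identification.

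For well-definedness, Proposition \ref{extddppgtrltnwithlmtofppgtmycvtn} provides the smooth extension $\bar P(0,t)$ on $\mathbb{R}_{\geqslant 0}[2]$, and inspection of the formula (\ref{ppgtdfntnbvbfvwrkmycvtn}) shows that $P(\varepsilon,\Lambda)$ is a $0$-form on $\mathbb{R}_{\geqslant 0}\times\mathbb{R}_{\geqslant 0}$ valued in $V\otimes V$ (the operator $\mathd^{\tmop{GF}}\otimes 1 + 1\otimes\mathd^{\tmop{GF}}$ converts the $1$-form $\widetilde{H_t}$ into a $0$-form). Consequently $\bar P(0,t)|_{C_1}(0,x)\in V\otimes V$ depends smoothly on $x\in\mathbb{R}_{\geqslant 0}$, and contracting the first $V$-factor with $2\,\tmmathbf{\alpha}(l'\otimes-)\in L^{\ast}$ yields a smooth $V$-valued $0$-form on $\mathbb{R}_{\geqslant 0}$, i.e., an element of $\mathcal{E}$. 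Linearity in $l'$ is immediate from the formula, and the degree count is routine: $|\tmmathbf{\alpha}|=0$, and $|P(\varepsilon,\Lambda)|=0$ (since $|K_t|=1$ and $K_t=K_{\varepsilon}+\mathd\,P(\varepsilon,\Lambda)$), so $\theta_t$ sends a degree-$|l'|$ input to a degree-$|l'|$ output, making $|\theta_t|=0$.

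The identity $\pi\theta_t=(0,\mathrm{id}_{L'})$ is the only substantial step. Since $\pi$ is the pullback along $\{0\}\hookrightarrow\mathbb{R}_{\geqslant 0}$ and $\theta_t(l')$ is a $0$-form, $\pi\theta_t(l')=\theta_t(l')(0)$. Substituting the boundary value $\bar P(0,t)|_{C_1}(0,0)=-K^{\partial}_-/2$ from (\ref{spclbdrvleofextddppgt}) yields
\[
\pi\theta_t(l')\;=\;-(\tmmathbf{\alpha}(l'\otimes-)\otimes 1)(K^{\partial}_-).
\]
Because $K^{\partial}_-\in L\otimes L'$, this element automatically lies in $L'$, so $p_L(\pi\theta_t(l'))=0$ for free. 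It remains to show that the $L'$-component equals $l'$, i.e.\ that $-(\tmmathbf{\alpha}(l'\otimes-)\otimes 1)(K^{\partial}_-)=l'$. I would carry this out by choosing homogeneous dual bases $\{l_i\}\subset L$ and $\{l'^{j}\}\subset L'$ with $\omega^{\partial}(l'^{j},l_i)=\delta^{j}_{i}$, expanding $K^{\partial}_-$ explicitly via the isomorphism (\ref{invrssplctc27}) $\wedge^2 V^{\ast}\simeq\wedge^2 V$, and evaluating the contraction directly.

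The main obstacle is the Koszul sign bookkeeping in that final computation: in the graded setting, the image of $\omega^{\partial}$ under (\ref{invrssplctc27}) produces $K^{\partial}_-$ with basis-degree-dependent signs, and one must verify that these conspire with the overall minus sign from (\ref{spclbdrvleofextddppgt}) and the factor $2$ in the definition of $\theta_t$ to produce exactly $+l'$. Everything else reduces to unpacking the definitions, so this sign check is the only place where genuine care is required.
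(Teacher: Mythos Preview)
Your proposal is correct and follows exactly the approach the paper intends: the paper's entire argument is the phrase ``By (\ref{spclbdrvleofextddppgt}) we have,'' i.e.\ the boundary value $\bar P(0,t)|_{C_1}(0,0)=-K^{\partial}_-/2$ is the sole input, and everything else is unpacking definitions. Your breakdown into well-definedness, degree count, and the boundary identity $\pi\theta_t(l')=-(\tmmathbf{\alpha}(l'\otimes-)\otimes 1)(K^{\partial}_-)=l'$ is precisely what is implicit in that one line, and your plan to finish the last equality via dual bases and the defining isomorphism (\ref{invrssplctc27}) is the standard verification.
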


\begin{definition}
  \label{dfntn34rnzdsplttglstlt}We define the {\tmstrong{renormalized
  splitting}} {\tmstrong{(at scale $t$)}} for the current TQM on
  $\mathbb{R}_{\geqslant 0}$ to be the map $\theta_t$ described in
  (\ref{spclspltscltmycnvtn}). We can rewrite it as
  \[ \theta_t (l') = 2 (\mathbb{I} (\tmmathbf{\alpha}) (l' \otimes -) \otimes
     1) \bar{P} (0, t) |_{C_1} \]
  which makes sense although $\bar{P} (0, t) |_{C_1}$ is not in $\mathcal{E}_L
  \otimes \mathcal{E}$.
\end{definition}

Then, for this $\theta_t$ we can write $\mathbb{I}_{\theta_t}$ as
\begin{equation}
  \mathbb{I}_{\theta_t} (f) = e^{-\mathbb{I} (\tmmathbf{\alpha}) / \hbar}
  \left( e^{\hbar \partial_{2 \bar{P} (0, t) |_{C_1}}}_{\tmop{cross}}
  (e^{\mathbb{I} (\tmmathbf{\alpha}) / \hbar}, f) \right)_{\mathcal{E}_L}
  \quad \text{for } f \in \mathcal{O} (\mathcal{E}) \label{spltgalgismmycvtn}
\end{equation}
where $e^{\hbar \partial_{2 \bar{P} (0, t) |_{C_1}}}_{\tmop{cross}} (-, -)$ is
defined in the same way as (\ref{crossexpcontractnmycvtn}), and the
restriction of functional $(-)_{\mathcal{E}_L}$ only acts on factors coming
from $f$. Note that by definition the first factor of $\mathbb{I}
(\tmmathbf{\alpha})$ does not contract with $2 \bar{P} (0, t) |_{C_1}$ in
(\ref{spltgalgismmycvtn}).

By (\ref{bvknrlisdextddppgtonconfgspc}) we can verify:
\begin{equation}
  (1 \otimes \mathd + \{ \mathbb{I} (\tmmathbf{\alpha}), - \}_t)
  \mathbb{I}_{\theta_t} =\mathbb{I}_{\theta_t} \mathd
  \label{cptbltycdtnofspltgandbvdtmyycccvtn}
\end{equation}
as maps from $\mathcal{O} (\mathcal{E})$ to $\mathcal{O} (L') \otimes
\mathcal{O} (\mathcal{E}_L)$, where $\{ -, - \}_t$ is the BV bracket of
$\partial_{K_t}$ on $\mathcal{O} (\mathcal{E}_L)$. We may regard this relation
as certain compatibility between $\theta_t$ and the differential BV algebra
structure associated to $\mathd$ and $K_t$.

The {\tmstrong{homotopic renormalization of splitting}} in our model reads:

\begin{theorem}
  \label{ppstnrgrnofspltgmycvtn}For $\forall \varepsilon, \Lambda > 0$, the
  renormalized splittings $\theta_{\varepsilon}, \theta_{\Lambda}$ in
  Definition \ref{dfntn34rnzdsplttglstlt} and the conjugation map $e^{\hbar
  \partial_{P (\varepsilon, \Lambda)}}$ in
  (\ref{cnjgtnfreebvalgtqmrstctfldspc}) and
  (\ref{bvbfvtotalfldspchmtpyrgflowfree}) make this diagram commute:
  \begin{eqnarray*}
    \mathcal{O} (\mathcal{E}) [[\hbar]] \quad &
    \xrightarrow{\LARGE{\mathbb{I}_{\theta_{\varepsilon}}}} & \quad
    \mathcal{O} (L') \otimes \mathcal{O} (\mathcal{E}_L) [[\hbar]]\\
    \downarrow \enspace e^{\hbar \partial_{P (\varepsilon, \Lambda)}} &  &
    \hspace{3em} \downarrow \enspace e^{-\mathbb{I} (\tmmathbf{\alpha}) /
    \hbar} (1 \otimes e^{\hbar \partial_{P (\varepsilon, \Lambda)}})
    e^{\mathbb{I} (\tmmathbf{\alpha}) / \hbar}\\
    \mathcal{O} (\mathcal{E}) [[\hbar]] \quad &
    \xrightarrow{\LARGE{\mathbb{I}_{\theta_{\Lambda}}}} & \quad \mathcal{O}
    (L') \otimes \mathcal{O} (\mathcal{E}_L) [[\hbar]]
  \end{eqnarray*}
\end{theorem}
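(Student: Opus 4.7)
The plan is to unpack both compositions via formula~(\ref{spltgalgismmycvtn}) and reduce the commutativity to a combinatorial identity about cross-contractions, which I would verify by Feynman-graph bookkeeping. Applied to $f \in \mathcal{O}(\mathcal{E})[[\hbar]]$, the top-then-right composition gives
$$
\mathbb{I}_{\theta_\Lambda}\bigl(e^{\hbar\partial_{P(\varepsilon,\Lambda)}} f\bigr) = e^{-\mathbb{I}(\tmmathbf{\alpha})/\hbar}\Bigl(e^{\hbar\partial_{2\bar{P}(0,\Lambda)|_{C_1}}}_{\tmop{cross}}\bigl(e^{\mathbb{I}(\tmmathbf{\alpha})/\hbar},\, e^{\hbar\partial_{P(\varepsilon,\Lambda)}} f\bigr)\Bigr)_{\mathcal{E}_L},
$$
and, after a single cancellation of $e^{\mathbb{I}(\tmmathbf{\alpha})/\hbar} e^{-\mathbb{I}(\tmmathbf{\alpha})/\hbar}$, the right-then-bottom composition becomes
$$
e^{-\mathbb{I}(\tmmathbf{\alpha})/\hbar}\bigl(1 \otimes e^{\hbar\partial_{P(\varepsilon,\Lambda)}}\bigr)\Bigl(e^{\hbar\partial_{2\bar{P}(0,\varepsilon)|_{C_1}}}_{\tmop{cross}}\bigl(e^{\mathbb{I}(\tmmathbf{\alpha})/\hbar},\, f\bigr)\Bigr)_{\mathcal{E}_L}.
$$

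The crucial geometric input is the additivity $\bar{P}(0,\Lambda)|_{C_1} = \bar{P}(0,\varepsilon)|_{C_1} + P(\varepsilon,\Lambda)$, which follows from Proposition~\ref{extddppgtrltnwithlmtofppgtmycvtn} together with the fact that $P(\varepsilon,\Lambda)$ is already smooth on $\mathbb{R}_{\geqslant 0}^{\,2}$ for $\varepsilon > 0$. Plugging this decomposition into the definition~(\ref{crossexpcontractnmycvtn}) of the cross-exponential, each graph contributing to $e^{\hbar\partial_{2\bar{P}(0,\Lambda)|_{C_1}}}_{\tmop{cross}}\bigl(e^{\mathbb{I}(\tmmathbf{\alpha})/\hbar}, g\bigr)$ decomposes uniquely as a choice of edges labelled $2\bar{P}(0,\varepsilon)|_{C_1}$ together with edges labelled $2P(\varepsilon,\Lambda)$, each joining a leg of $e^{\mathbb{I}(\tmmathbf{\alpha})/\hbar}$ to a leg of $g$. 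Combined with the expansion of $e^{\hbar\partial_{P(\varepsilon,\Lambda)}}$ on $f$ (internal $P$-edges among $f$'s legs), the LHS becomes a sum over graphs whose edges are of three types: (i) $P(\varepsilon,\Lambda)$-edges between two $f$-legs, (ii) $2P(\varepsilon,\Lambda)|_{C_1}$-edges from an $f$-leg to a boundary $\mathbb{I}(\tmmathbf{\alpha})$-vertex, or (iii) $2\bar{P}(0,\varepsilon)|_{C_1}$-edges from an $f$-leg to a boundary $\mathbb{I}(\tmmathbf{\alpha})$-vertex.

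I would then check that the RHS produces exactly the same sum. The only nontrivial matching is between type (ii) edges on the LHS and the extra contractions coming from $(1 \otimes e^{\hbar\partial_{P(\varepsilon,\Lambda)}})$ on the RHS when it acts on $\mathcal{E}_L$-legs that have already been partially absorbed into $\mathbb{I}(\tmmathbf{\alpha})$-vertices by the first cross-contraction: a graph with a type (ii) edge on the LHS corresponds on the RHS to first making a type (iii) promotion of an $f$-leg via $2\bar{P}(0,\varepsilon)|_{C_1}$ and then applying a self-contraction via $P(\varepsilon,\Lambda)$ that pulls one of the resulting $\mathcal{E}_L$-legs to the boundary, using the normalisation in~(\ref{spclbdrvleofextddppgt}). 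The main obstacle is the weight-matching: the factor of $2$ in $2P(\varepsilon,\Lambda)|_{C_1}$ must be traced against the $\tmop{Sym}^2$-symmetry of $P$, the projection $(-)_{\mathcal{E}_L}$ against the implicit projection in $\mathbb{I}(\tmmathbf{\alpha})$, and Koszul signs must be tracked carefully. A cleaner alternative that avoids the combinatorics is a one-parameter flow argument in $\Lambda$: at $\Lambda = \varepsilon$ both sides trivially reduce to $\mathbb{I}_{\theta_\varepsilon}$, and differentiating in $\Lambda$ turns the diagram into a first-order relation that follows directly from Definition~\ref{dfntn34rnzdsplttglstlt} together with $\partial_\Lambda \bar{P}(0,\Lambda)|_{C_1} = \partial_\Lambda P(\varepsilon,\Lambda)|_{C_1}$, which can then be integrated back to the full statement.
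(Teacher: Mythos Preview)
Your approach is the same as the paper's in spirit---both rest on the additivity $\bar{P}(0,\varepsilon)+P(\varepsilon,\Lambda)=\bar{P}(0,\Lambda)$ on $\mathbb{R}_{\geqslant 0}[2]$---but the combinatorial execution has a genuine gap and a confused step.

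The gap: you never use $P(\varepsilon,\Lambda)(0,0)=0$ (Proposition~\ref{extddppgtrltnwithlmtofppgtmycvtn}). On the side where $(1\otimes e^{\hbar\partial_{P(\varepsilon,\Lambda)}})$ acts on $\bigl(e^{\hbar\partial_{2\bar{P}(0,\varepsilon)|_{C_1}}}_{\tmop{cross}}(e^{\mathbb{I}(\tmmathbf{\alpha})/\hbar},f)\bigr)_{\mathcal{E}_L}$, the operator $\partial_{P(\varepsilon,\Lambda)}$ can contract two $\mathcal{E}_L$-legs both coming from $e^{\mathbb{I}(\tmmathbf{\alpha})/\hbar}$. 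Such terms have no counterpart among your types (i)--(iii) on the other side and must vanish; they do, precisely because every $\mathcal{E}_L$-leg of $\mathbb{I}(\tmmathbf{\alpha})$ is supported at the boundary point and $P(\varepsilon,\Lambda)(0,0)=0$. This is the opening sentence of the paper's proof, and without it your graph enumeration does not close.

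The confused step: your account of the type~(ii) matching is incorrect. A type~(ii) edge (a cross-contraction by $2P(\varepsilon,\Lambda)$ between an $\tmmathbf{\alpha}$-leg and an $f$-leg) does \emph{not} arise on the other side from a type~(iii) edge followed by a later self-contraction, and formula~(\ref{spclbdrvleofextddppgt}) plays no role. It arises directly as the cross-term in the Leibniz expansion of $\partial_{P(\varepsilon,\Lambda)}$ acting on the product of the surviving $e^{\mathbb{I}(\tmmathbf{\alpha})/\hbar}$ with the restricted $f$-legs: one leg from each factor, contracted by $P(\varepsilon,\Lambda)$, with the factor $2$ coming from $P\in\tmop{Sym}^2(\mathcal{E}_L)$.

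Once $P(\varepsilon,\Lambda)(0,0)=0$ is in hand, the paper dispenses with graph bookkeeping entirely: since $(1\otimes\partial_{P(\varepsilon,\Lambda)})e^{\mathbb{I}(\tmmathbf{\alpha})/\hbar}=0$, applying $(1\otimes e^{\hbar\partial_{P(\varepsilon,\Lambda)}})$ to the $\varepsilon$-cross-contraction simply shifts the cross-kernel by $2P(\varepsilon,\Lambda)$ and applies $e^{\hbar\partial_{P(\varepsilon,\Lambda)}}$ to $f$, after which additivity finishes in one line. Your flow-in-$\Lambda$ alternative is a legitimate strategy, but it requires the same vanishing $\partial_\Lambda P(\varepsilon,\Lambda)(0,0)=0$ at the infinitesimal level, so it does not bypass the missing ingredient.
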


\begin{proof}
  Since $P (\varepsilon, \Lambda) (0, 0) = 0$, $(1 \otimes \partial_{P
  (\varepsilon, \Lambda)}) e^{\mathbb{I} (\tmmathbf{\alpha}) / \hbar} = 0$.
  Then, for $f \in \mathcal{O} (\mathcal{E}) [[\hbar]]$,
  \begin{eqnarray*}
    &  & e^{-\mathbb{I} (\tmmathbf{\alpha}) / \hbar} (1 \otimes e^{\hbar
    \partial_{P (\varepsilon, \Lambda)}}) (e^{\mathbb{I} (\tmmathbf{\alpha}) /
    \hbar} \mathbb{I}_{\theta_{\varepsilon}} (f))\\
    & = & e^{-\mathbb{I} (\tmmathbf{\alpha}) / \hbar} (1 \otimes e^{\hbar
    \partial_{P (\varepsilon, \Lambda)}}) \left( e^{\hbar \partial_{2 \bar{P}
    (0, \varepsilon) |_{C_1}}}_{\tmop{cross}} (e^{\mathbb{I}
    (\tmmathbf{\alpha}) / \hbar}, f) \right)_{\mathcal{E}_L}\\
    & = & e^{-\mathbb{I} (\tmmathbf{\alpha}) / \hbar} \left( e^{\hbar \left(
    \partial_{2 \bar{P} (0, \varepsilon) |_{C_1}} + \partial_{2 P
    (\varepsilon, \Lambda)} \right)}_{\tmop{cross}} (e^{\mathbb{I}
    (\tmmathbf{\alpha}) / \hbar}, e^{\hbar \partial_{P (\varepsilon,
    \Lambda)}} f) \right)_{\mathcal{E}_L}\\
    & = & e^{-\mathbb{I} (\tmmathbf{\alpha}) / \hbar} \left( e^{\hbar
    \partial_{2 \bar{P} (0, \Lambda) |_{C_1}}}_{\tmop{cross}} (e^{\mathbb{I}
    (\tmmathbf{\alpha}) / \hbar}, e^{\hbar \partial_{P (\varepsilon,
    \Lambda)}} f) \right)_{\mathcal{E}_L}\\
    & = & \mathbb{I}_{\theta_{\Lambda}} (e^{\hbar \partial_{P (\varepsilon,
    \Lambda)}} f),
  \end{eqnarray*}
  where we have used the fact $ \bar{P} (0, \varepsilon) + P (\varepsilon,
  \Lambda) = \bar{P} (0, \Lambda)$ on $\mathbb{R}_{\geqslant 0} [2]$.
\end{proof}

Theorem \ref{ppstnrgrnofspltgmycvtn}, together with
(\ref{cnjgtnfreebvalgtqmrstctfldspc}) and
(\ref{bvbfvtotalfldspchmtpyrgflowfree}), exhibits the homotopic
renormalization for free TQM in BV-BFV formalism.

\subsection{The content of interactive
theory}\label{tctntofitctnthytqmpstvrllnsctn32}

\subsubsection*{The precise form of mQME}

Now we consider the interactive theory. Let
\[ I^{\partial}, J^{\partial} \in \prod_{n \geqslant 2} \tmop{Sym}^n
   (V^{\ast}) \oplus \hbar \prod_{n > 0} \tmop{Sym}^n (V^{\ast}) [[\hbar]], \]
with degree $| I^{\partial} | = 1, | J^{\partial} | = 0$. Then we define the
``scale $0$'' interaction to be
\begin{equation}
  I_0 \in \mathcal{O}^{> 0} (\mathcal{E}_c) [[\hbar]], \quad I_0 \assign
  \pi^{\ast} (J^{\partial}) + \int_{\mathbb{R}_{\geqslant 0}} I^{\partial}
  \label{scl0itactnblkbdrtrmmycnvntn}
\end{equation}
where $\mathcal{O}^{> 0} (\mathcal{E}_c)$ denotes $\prod_{n > 0} \tmop{Sym}^n
((\mathcal{E}_c)^{\ast})$, with $\mathcal{E}_c \subset \mathcal{E}$ the space
of compactly supported forms. Here, $\pi^{\ast} (J^{\partial})$ is a boundary
term, and $\int_{\mathbb{R}_{\geqslant 0}} I^{\partial}$ denotes the
$\Omega^{\bullet} (\mathbb{R}_{\geqslant 0})$-linear extension of
$I^{\partial}$ followed by integration over $\mathbb{R}_{\geqslant 0}$.

The interaction at scale $t > 0$ should be a functional $I_t \in
\mathcal{O}^{> 0} (\mathcal{E}_c) [[\hbar]]$ generated from $I_0$ by homotopic
renormalization. Precisely, we have:

\begin{proposition}
  \label{tqmuvfntthmprf}``TQM is UV finite''
  
  For $I_0$ defined in (\ref{scl0itactnblkbdrtrmmycnvntn}), the limit
  \begin{equation}
    I_t \assign \lim_{\varepsilon \rightarrow 0} \hbar \log (e^{\hbar
    \partial_{P (\varepsilon, t)}} e^{I_0 / \hbar})
    \label{intactnatfnttscldfntnmm}
  \end{equation}
  is a well-defined element in $\mathcal{O}^{> 0} (\mathcal{E}_c) [[\hbar]]$.
  We call it the ``scale $t$ interaction''.
\end{proposition}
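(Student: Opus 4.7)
The plan is to expand the right-hand side of (\ref{intactnatfnttscldfntnmm}) as a sum over connected Feynman graphs and to show, graph by graph, that the $\varepsilon \to 0$ limit exists in $\mathcal{O}^{>0}(\mathcal{E}_c)[[\hbar]]$.

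First I would write $\hbar \log\!\bigl(e^{\hbar \partial_{P(\varepsilon,t)}} e^{I_0/\hbar}\bigr)$ in the standard way as a sum over connected graphs $\Gamma$ weighted by $\hbar^{L(\Gamma)}/|\mathrm{Aut}(\Gamma)|$, where the internal vertices come in two flavours --- \emph{bulk vertices} labelled by homogeneous components of $\int_{\mathbb{R}_{\geqslant 0}} I^{\partial}$ and positioned at points of $\mathbb{R}_{\geqslant 0}$, and \emph{boundary vertices} labelled by components of $\pi^{\ast}(J^{\partial})$ and fixed at $0$ --- internal edges carry the propagator $P(\varepsilon,t)$, and external half-edges are paired with test forms in $\mathcal{E}_c$. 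For a graph with $k$ bulk vertices, the analytic content of its weight $W_\Gamma$ is an integral over $(x_1, \ldots, x_k) \in \mathbb{R}_{\geqslant 0}^k$ of a product of propagator evaluations against the external wavefunctions.

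The heart of the argument is the $\varepsilon \to 0$ limit of $W_\Gamma$. I would lift the integrand to the partial compactification $\mathbb{R}_{\geqslant 0}[k]$ of Definition \ref{dfntn31ptlcmpctfctnofcfgrtnspc310} via the canonical projections from $\mathbb{R}_{\geqslant 0}[k]$ onto $\mathbb{R}_{\geqslant 0}[2]$ coming from each pair of vertex coordinates. By Proposition \ref{extddppgtrltnwithlmtofppgtmycvtn}, $P(\varepsilon,t)$ extends smoothly to $\bar{P}(0,t)$ on $\mathbb{R}_{\geqslant 0}[2]$; boundary-to-bulk edges sit in the component $C_1$ of (\ref{2ptscfgrtnspccvtn}), and boundary-to-boundary edges contribute $P(\varepsilon,t)(0,0) = 0$ which kills any graph containing them. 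Hence the pulled-back integrand has a smooth $\varepsilon \to 0$ limit on $\mathbb{R}_{\geqslant 0}[k]$, and the compact support of the external legs restricts the integration to a compact subset, so the integral converges absolutely and dominated convergence lets me pass the limit through. The result is a well-defined $I_t$; that $I_t \in \mathcal{O}^{>0}(\mathcal{E}_c)[[\hbar]]$ and that only finitely many graphs contribute at each $\hbar$-order and symmetric degree follows from the fact that $I_0$ carries at least one external leg per vertex and that $\log$ selects connected diagrams.

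The main obstacle is precisely the claim in the previous paragraph: that the potential singularities of $P(\varepsilon,t)$ along the diagonal and at the corner $(0, \ldots, 0)$ do not spoil integrability or the existence of the limit. The device that resolves them is exactly Definition \ref{dfntn31ptlcmpctfctnofcfgrtnspc310}, which splits $(0, \ldots, 0) \in \mathbb{R}_{\geqslant 0}^k$ into $k!$ labelled copies indexed by orderings and thereby separates the directions of approach along which the propagator would otherwise have distinct limits (cf.\ (\ref{bdrdiffofextddppgt})--(\ref{spclbdrvleofextddppgt})). Once this geometric resolution is in hand, no counterterms are required --- this is the content of the ``TQM is UV finite'' slogan.
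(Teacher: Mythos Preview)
Your approach is essentially the same as the paper's: expand over connected Feynman graphs, lift the integrand to the partial compactification $\mathbb{R}_{\geqslant 0}[k]$, invoke Proposition~\ref{extddppgtrltnwithlmtofppgtmycvtn} to replace each $P(\varepsilon,t)$ by the smooth extension $\bar P(0,t)$, and conclude that the $\varepsilon\to 0$ limit of each graph weight is an integral of a smooth form over a compact region of a manifold with corners. That is exactly the paper's argument.

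One imprecision is worth flagging. Your finiteness claim rests on ``$I_0$ carries at least one external leg per vertex'', but this is not literally true: a boundary vertex (from $\pi^\ast(J^\partial)$) may have all its half-edges contracted by propagators into bulk vertices. The paper's footnote gives the correct mechanism: since $P(\varepsilon,t)$ is a $0$-form, each \emph{bulk} vertex must carry at least one external leg to supply the $1$-form needed for the $dx_i$-integration, hence $|V_{\mathrm{bulk}}|\le n$; and since $P(\varepsilon,t)(0,0)=0$ (which you do note), every boundary half-edge that is internal must land on a bulk vertex. These two facts together bound both $|V_{\mathrm{bulk}}|$ and $|V_{\mathrm{bdry}}|$ in terms of the loop number and the symmetric degree. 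Your analytic argument is fine; just tighten the combinatorial bookkeeping to distinguish bulk from boundary vertices.
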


\begin{remark}
  \label{rmkncmpctpprsrptfctsitactvthy}Actually, being a ``local functional'',
  $I_0$ has ``proper support''. The BV kernel $K_t$ and propagator $P
  (\varepsilon, \Lambda)$ also have ``proper supports''. These facts make
  \begin{equation}
    \log (e^{\hbar \partial_{P (\varepsilon, t)}} e^{I_0 / \hbar})
    \label{tempoexprtninuvfntthm}
  \end{equation}
  well-defined, although $P (\varepsilon, t)$ is not compactly supported.
  Moreover, (\ref{tempoexprtninuvfntthm}) has ``proper support'' as well.
  Later calculations involving $\partial_{K_t}$ and its BV bracket also make
  sense by the same reasoning. We refer the reader to {\cite[Section
  3]{wangyantqm202203}} and references therein for relevant explanation.
  Clarification on this point will be omitted in the rest of this paper.
\end{remark}

\begin{proof}
  By standard arguments, the real content of the formula
  (\ref{tempoexprtninuvfntthm}) is a summation over connected Feynman graphs.
  It is direct to check that at any given $\hbar$-order and symmetric tensor
  order, only finite many terms contributes,\footnote{This uses the fact that
  $P (\varepsilon, \Lambda)$ is $0$-form with $P (\varepsilon, \Lambda) (0, 0)
  = 0$, so each bulk vertice will have at least one externel leg to pick up
  $1$-form, and two boundary vertices cannot be directly connected by a
  propagator.} hence (\ref{tempoexprtninuvfntthm}) is well defined.
  
  Consider a term contracting with $\varphi_1, \ldots, \varphi_4 \in
  \mathcal{E}_c$, depicted by the graph below (numeric coefficients are
  omitted):
  \[ 
     \raisebox{-0.802621982752905\height}{\includegraphics[width=14.3888888888889cm,height=3.70984192575102cm]{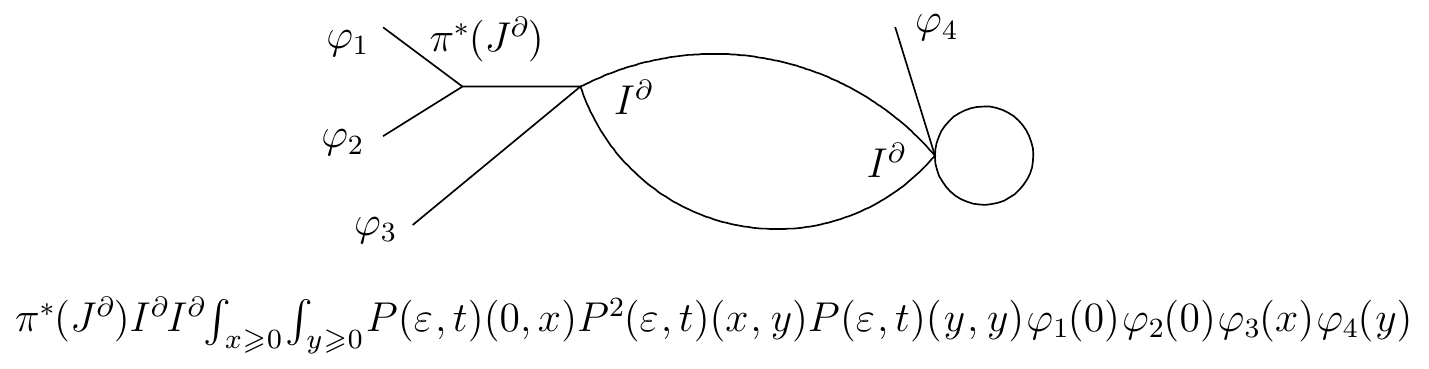}}
  \]
  From this example, it is clear that each term in
  (\ref{tempoexprtninuvfntthm}) is always an integration over
  $\mathbb{R}_{\geqslant 0}^{| V_{\tmop{bulk}} |}$ with the integrand
  constructed from propagators and input fields, where $| V_{\tmop{bulk}} |$
  is the number of ``bulk vertices'' corresponding to $I^{\partial}$. By
  Definition \ref{dfntn31ptlcmpctfctnofcfgrtnspc310}, we have a canonical
  projection $\mathbb{R}_{\geqslant 0} [n] \rightarrow \mathbb{R}_{\geqslant
  0}^n$. So, each term in (\ref{tempoexprtninuvfntthm}) can be expressed as
  integration over $\mathbb{R}_{\geqslant 0} [| V_{\tmop{bulk}} |]$.
  
  By Proposition \ref{extddppgtrltnwithlmtofppgtmycvtn}, while taking the
  $\varepsilon \rightarrow 0$ limit of a certain propagator $P (\varepsilon,
  t)$ in a graph formula, we can replace this $P (\varepsilon, t)$ with a
  linear combination of $\bar{P} (0, t)$. (The type of vertices at the
  endpoints of this propagator will tell us the choice of replacement.) So,
  the $\varepsilon \rightarrow 0$ limit of each term in
  (\ref{tempoexprtninuvfntthm}) is an integration of smooth form over a
  manifold with corners, namely the $I_t$ in (\ref{intactnatfnttscldfntnmm})
  is well defined.
\end{proof}

By definition, for $\varepsilon, \Lambda > 0$,
\begin{equation}
  e^{I_{\Lambda} / \hbar} = e^{\hbar \partial_{P (\varepsilon, \Lambda)}}
  e^{I_{\varepsilon} / \hbar} . \label{RGflowofrnmlzditactn321}
\end{equation}
Now we are ready to write down the mQME for our model.

\begin{definition}
  \label{mqmetqmdfntn36rlylstlbl}Given $I_0$ in
  (\ref{scl0itactnblkbdrtrmmycnvntn}), we further require
  \begin{equation}
    J^{\partial} \in \mathcal{O} (L) [[\hbar]] . \label{plrztncnsttonbdrtrm}
  \end{equation}
  Then, we say that $I_t$ in (\ref{intactnatfnttscldfntnmm}) satisfies
  {\tmstrong{modified quantum master equation (mQME) at scale $t$}} if there
  exists
  \[ H^{\partial} \in \mathcal{O} (V) [[\hbar]], \quad | H^{\partial} | = 1,
     H^{\partial} \star_{\hbar} H^{\partial} = 0, \]
  such that
  \begin{equation}
    (1 \otimes (\hbar \mathd + \hbar^2 \partial_{K_t})
    +\tmmathbf{\Omega}_{L'}^{\tmop{left}} (H^{\partial}, -) \otimes 1)
    \mathbb{I}_{\theta_t} (e^{(\pi^{\ast} (\tmmathbf{\alpha}) + I_t) / \hbar})
    = 0, \label{mqmemqmemqme322}
  \end{equation}
  where $\star_{\hbar}$ and $\tmmathbf{\Omega}_{L'}^{\tmop{left}} (-, -)$ are
  defined in Section \ref{mylprdctwlqtztnalgprlmlry}, $\tmmathbf{\alpha}$ is
  defined in (\ref{spclbdrfctnlmycnvtn}).
  $\tmmathbf{\Omega}_{L'}^{\tmop{left}} (H^{\partial}, -)$ is called the
  {\tmstrong{BFV operator}}. In equation (\ref{mqmemqmemqme322}),
  $\mathbb{I}_{\theta_t}$ is well defined on $\mathcal{O} (\mathcal{E}_c)$
  because the image of $\theta_t$ is a subset of $\mathcal{E}_c$.
\end{definition}

If $I_{\varepsilon}$ satisfies mQME at scale $\varepsilon$, Theorem
\ref{ppstnrgrnofspltgmycvtn} together with (\ref{RGflowofrnmlzditactn321})
implies $I_{\Lambda}$ satisfies mQME at scale $\Lambda$, with the same BFV
operator. We then say that the data $(I^{\partial}, J^{\partial},
H^{\partial})$ defines a consistent interactive TQM on $\mathbb{R}_{\geqslant
0}$ in BV-BFV formalism.

\begin{remark}
  The condition (\ref{plrztncnsttonbdrtrm}) corresponds to
  {\cite[(2.29)]{cattaneo2018perturbative}}, which reflects the fact that the
  action could be modified by proper boundary terms according to the
  polarization we choose.
\end{remark}

\subsubsection*{Generic solutions to mQME}

To solve the mQME, we first study the BV anomaly of the interaction.

\begin{lemma}
  \label{bvanmlcmptnn37}Given $I_t$ defined in
  (\ref{intactnatfnttscldfntnmm}), we have
  \begin{eqnarray*}
    &  & (\hbar \mathd + \hbar^2 \partial_{K_t}) e^{I_t / \hbar}\\
    & = & \lim_{\varepsilon \rightarrow 0} e^{\hbar \partial_{P (\varepsilon,
    t)}} \left( e^{I_0 / \hbar} \left( \pi^{\ast} \left( e^{- J^{\partial} /
    \hbar} e^{\hbar \partial_{K_+^{\partial}}}_{\tmop{cross}} \left( e^{\hbar
    \partial_{(K^{\partial}_+ - K^{\partial}_-) / 4}} I^{\partial},
    e^{J^{\partial} / \hbar} \right) \right) +  \hbar^{- 1}
    \int_{\mathbb{R}_{\geqslant 0}} I^{\partial} \star_{\hbar} I^{\partial}
    \right) \right)
  \end{eqnarray*}
  where the limit in RHS is well defined by the same reason for Proposition
  \ref{tqmuvfntthmprf}. It is straightforward to see that both sides are
  products of $e^{I_t / \hbar}$ with an element in $\mathcal{O}
  (\mathcal{E}_c) [[\hbar]]$.
\end{lemma}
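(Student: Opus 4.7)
The plan is to push the operator $\hbar\mathd + \hbar^2\partial_{K_t}$ through the renormalization flow back to scale zero, where it acts directly on $e^{I_0/\hbar}$, and then identify the resulting expression with the RHS.

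First, exponentiating (\ref{intactnatfnttscldfntnmm}) gives $e^{I_t/\hbar} = \lim_{\varepsilon\to 0} e^{\hbar\partial_{P(\varepsilon,t)}} e^{I_0/\hbar}$, so I would commute $\hbar\mathd + \hbar^2\partial_{K_t}$ past $e^{\hbar\partial_{P(\varepsilon,t)}}$. The key identity is
\[
(\hbar\mathd + \hbar^2\partial_{K_t})\, e^{\hbar\partial_{P(\varepsilon,t)}} \;=\; e^{\hbar\partial_{P(\varepsilon,t)}}\,(\hbar\mathd + \hbar^2\partial_{K_\varepsilon}),
\]
which follows from two facts: $[\partial_P, \partial_K] = 0$ for any $P, K \in \tmop{Sym}^2(\mathcal{E})$, since both are second-order contraction operators on $\mathcal{O}(\mathcal{E})$; and $[\mathd, \partial_{P(\varepsilon,t)}] = \partial_{\mathd P(\varepsilon,t)} = \partial_{K_t} - \partial_{K_\varepsilon}$ by (\ref{dppgtisdiffofbvknl}). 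Combined with the graded Baker--Campbell--Hausdorff expansion, which truncates at first order here because all higher nested brackets vanish, these yield the claimed conjugation.

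Next, I would compute $(\hbar\mathd + \hbar^2\partial_{K_\varepsilon}) e^{I_0/\hbar}$ using that $\mathd$ is a derivation and $\partial_{K_\varepsilon}$ is a second-order BV operator: the result is $e^{I_0/\hbar}\bigl(\mathd I_0 + \hbar\partial_{K_\varepsilon} I_0 + \tfrac{1}{2}\{I_0, I_0\}_\varepsilon\bigr)$. Combined with the previous step, the LHS equals
\[
\lim_{\varepsilon\to 0} e^{\hbar\partial_{P(\varepsilon,t)}}\Bigl(e^{I_0/\hbar}\bigl(\mathd I_0 + \hbar\partial_{K_\varepsilon} I_0 + \tfrac{1}{2}\{I_0,I_0\}_\varepsilon\bigr)\Bigr),
\]
and the remaining task is to identify the bracketed inhomogeneity with the RHS up to terms that vanish after applying $e^{\hbar\partial_{P(\varepsilon,t)}}$ and taking the limit.

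Third, I would substitute $I_0 = \pi^{\ast}(J^{\partial}) + \int_{\mathbb{R}_{\geqslant 0}} I^{\partial}$ and analyze each piece. Since $\pi\circ\mathd = 0$ on $\mathcal{E}$ (a $1$-form on $\mathbb{R}_{\geqslant 0}$ restricts to zero at $\{0\}$ and $\mathd$ annihilates $1$-forms in one dimension), $\mathd\pi^{\ast}(J^{\partial}) = 0$, while $\mathd \int_{\mathbb{R}_{\geqslant 0}} I^{\partial}$ becomes a pure boundary contribution at $\{0\}$ by Stokes' theorem. The bulk-bulk piece of $\tfrac{1}{2}\{I_0, I_0\}_\varepsilon$ concentrates, as $\varepsilon \to 0$, to a single integral over $\mathbb{R}_{\geqslant 0}$ whose integrand contracts two copies of $I^{\partial}$ via the $K^{\partial}_{\pm}$ factors of $K_\varepsilon$ in (\ref{bvknldfntnbvbfvwrk}); matching with (\ref{moyalprodmynttn})--(\ref{crossexpcontractnmycvtn}), this reproduces $\hbar^{-1}\int_{\mathbb{R}_{\geqslant 0}} I^{\partial} \star_{\hbar} I^{\partial}$. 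The boundary-boundary and bulk-boundary parts of $\tfrac{1}{2}\{I_0, I_0\}_\varepsilon$, together with $\hbar\partial_{K_\varepsilon} I_0$ and the boundary contribution from $\mathd\int I^{\partial}$, all localize at $x=0$; using (\ref{spclbdrvleofextddppgt}) and (\ref{bdrdiffofextddppgt}) they reorganize into the pullback $\pi^{\ast}$ of the Moyal-conjugate $e^{-J^{\partial}/\hbar}\, e^{\hbar\partial_{K^{\partial}_+}}_{\tmop{cross}}\!\bigl(e^{\hbar\partial_{(K^{\partial}_+ - K^{\partial}_-)/4}} I^{\partial},\; e^{J^{\partial}/\hbar}\bigr)$, with the normal-ordering factor $e^{\hbar\partial_{(K^{\partial}_+ - K^{\partial}_-)/4}}$ emerging precisely from the discrepancy between the two boundary limits $\bar{P}(0,\Lambda)|_{C_1}$ and $\bar{P}(0,\Lambda)|_{C_2}$.

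The main obstacle is step three: individual pieces of $\mathd I_0 + \hbar\partial_{K_\varepsilon} I_0 + \tfrac{1}{2}\{I_0,I_0\}_\varepsilon$ are singular as $\varepsilon \to 0$, and only their combination has a well-defined limit after applying $e^{\hbar\partial_{P(\varepsilon,t)}}$. The cleanest way to carry out the bookkeeping is Feynman-diagrammatic, reinterpreting each contribution as a graph on the compactified configuration space $\mathbb{R}_{\geqslant 0}[n]$ from Definition \ref{dfntn31ptlcmpctfctnofcfgrtnspc310}; one then checks that the diagrams carrying a ``defect'' (either a $K_t$-contraction at a single bulk vertex, or a Stokes boundary) reassemble into the compact Moyal-type expression on the RHS.
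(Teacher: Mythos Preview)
Your first two steps are correct and set up a genuine alternative to the paper's argument: the conjugation $(\hbar\mathd + \hbar^2\partial_{K_t})\,e^{\hbar\partial_{P(\varepsilon,t)}} = e^{\hbar\partial_{P(\varepsilon,t)}}(\hbar\mathd + \hbar^2\partial_{K_\varepsilon})$ and the expansion of the BV operator on an exponential both hold. The paper does \emph{not} push to scale $\varepsilon$; it stays at scale $t$, uses (\ref{bvknrlisdextddppgtonconfgspc}) to write $K_t = \mathd\bar P(0,t)$ on $\mathbb{R}_{\geqslant 0}[n]$, recognizes $(\hbar\mathd + \hbar^2\partial_{K_t})e^{I_t/\hbar}$ as an exact form there, and applies Stokes' theorem. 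So your route and the paper's diverge already at step one.

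The gap is in step three. Your claim that the bulk--bulk part of $\tfrac12\{I_0,I_0\}_\varepsilon$ ``reproduces $\hbar^{-1}\int I^\partial\star_\hbar I^\partial$'' is not correct as stated: a single BV bracket contracts exactly one pair of legs via $K_\varepsilon$, whereas $I^\partial\star_\hbar I^\partial = e^{-\hbar\partial_{K^\partial/2}}_{\tmop{cross}}(I^\partial,I^\partial)$ contains arbitrarily many $K^\partial$-contractions. Those higher Moyal terms are not present in $\mathd I_0 + \hbar\partial_{K_\varepsilon}I_0 + \tfrac12\{I_0,I_0\}_\varepsilon$ at all; in your scheme they can only be generated \emph{after} applying $e^{\hbar\partial_{P(\varepsilon,t)}}$, from the propagator edges that link the two colliding bulk vertices, and one must then extract the correct combinatorics from the near-diagonal behavior of $P(\varepsilon,t)$ together with the single $K_\varepsilon$-edge and the tadpole $\hbar\partial_{K_\varepsilon}I_0$. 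That resummation is precisely the ``main obstacle'' you name, and you have not carried it out; the term-by-term matching you sketch before that paragraph is wrong.

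By contrast, in the paper's Stokes argument the full Moyal product falls out in one step: on a collision face $x_i = x_{i+1}$ of $\partial\mathbb{R}_{\geqslant 0}[n]$, \emph{every} propagator edge between the two colliding vertices picks up the jump (\ref{bdrdiffofextddppgt}), and summing over graphs with any number of such edges exponentiates to $e^{-\hbar\partial_{K^\partial/2}}_{\tmop{cross}}$. The $x_i\to 0$ faces likewise give the $\pi^\ast(\cdots)$ term directly via (\ref{spclbdrvleofextddppgt}). Your final paragraph effectively recommends switching to this method; that instinct is right, but then steps one and two are detours rather than simplifications.
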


\begin{proof}
  Proposition \ref{extddppgtrltnwithlmtofppgtmycvtn} says that each $P
  (\varepsilon, t)$ in $e^{I_t / \hbar} \assign \lim_{\varepsilon \rightarrow
  0} e^{\hbar \partial_{P (\varepsilon, t)}} e^{I_0 / \hbar}$ can be replaced
  by a linear combination of $\bar{P} (0, t)$. So we use the following
  simplified notation:
  \begin{equation}
    e^{\hbar \partial_{\lfloor \bar{P} (0, t) \rfloor}} e^{I_0 / \hbar}
    \assign \lim_{\varepsilon \rightarrow 0} e^{\hbar \partial_{P
    (\varepsilon, t)}} e^{I_0 / \hbar} \label{rplcmtnttnfntsclactnmynttn}
  \end{equation}
  where we should regard the terms in LHS as integrations over spaces defined
  in (\ref{cnfgspcmyversnpts}).
  
  By (\ref{bvknrlisdextddppgtonconfgspc}), we have
  \[ (\hbar \mathd + \hbar^2 \partial_{K_t}) e^{I_t / \hbar} = (\hbar \mathd +
     \hbar^2 \partial_{\mathd \lfloor \bar{P} (0, t) \rfloor}) e^{\hbar
     \partial_{\lfloor \bar{P} (0, t) \rfloor}} e^{I_0 / \hbar}, \]
  which means the contraction with $K_t$ can be replaced by contraction with
  $\mathd$ of linear combination of $\bar{P} (0, t)$, while the replacement
  rule is the same as in (\ref{rplcmtnttnfntsclactnmynttn}).\footnote{For
  example, if $K_t$ contracts with two factors of a single bulk vertice, it
  will contribute as $K_t |_{\mathD_2}$ on $\mathbb{R}_{\geqslant 0}$ where
  $\mathD_2$ is the diagonal of $\mathbb{R}_{\geqslant 0}^2$. It is direct to
  verify $K_t |_{\mathD_2} = \tmop{dx} \partial_x (\bar{P} (0, t) (x, x)_{C_1}
  + \bar{P} (0, t) (x, x)_{C_2}) / 2$.} So, let $\varphi \in
  \mathcal{E}_c^{\otimes r}$ be some input field ($r > 0$),
  \begin{eqnarray*}
    &  & (\hbar \mathd + \hbar^2 \partial_{K_t}) e^{I_t / \hbar} (\varphi)\\
    & = & (\hbar \mathd + \hbar^2 \partial_{\mathd \lfloor \bar{P} (0, t)
    \rfloor}) e^{\hbar \partial_{\lfloor \bar{P} (0, t) \rfloor}}
    \sum_{\text{certain expansion}} \left( \text{$\tmop{certain}
    \tmop{coefficient}$} \right)\\
    &  & \hspace{6em} \times \int_{\mathbb{R}_{\geqslant 0} [n]}
    (I^{\partial})^n (\pi^{\ast} (J^{\partial}))^m (\varphi)\\
    & = & \sum_{\tmscript{\begin{array}{c}
      \tmop{certain} \tmop{expansion}\\
      \tmop{certain} \tmop{contraction}
    \end{array}}} \left( \text{$\tmop{certain} \tmop{coefficient}$} \right)\\
    &  & \hspace{6em} \times \int_{\mathbb{R}_{\geqslant 0} [n]}
    (I^{\partial})^n (\pi^{\ast} (J^{\partial}))^m ((\hbar \lfloor \bar{P} (0,
    t) \rfloor)^s \wedge \mathd \varphi + \mathd (\hbar \lfloor \bar{P} (0, t)
    \rfloor)^s \wedge \varphi)\\
    & = & e^{\hbar \partial_{\lfloor \bar{P} (0, t) \rfloor}}
    \sum_{\text{certain expansion}} \left( \text{$\tmop{certain}
    \tmop{coefficient}$} \right) \int_{\partial \mathbb{R}_{\geqslant 0} [n]}
    (I^{\partial})^n (\pi^{\ast} (J^{\partial}))^m (\varphi),
  \end{eqnarray*}
  where we have used Leibniz rule and Stokes' theorem for the last equality.
  More precisely,
  \begin{eqnarray*}
    &  & (\hbar \mathd + \hbar^2 \partial_{K_t}) e^{I_t / \hbar} (\varphi)\\
    & = & e^{I_t / \hbar} \sum_{\tmscript{\begin{array}{c}
      \text{contraction determined by}\\
      \text{connected graph } \Gamma
    \end{array}}} \left( \text{$\tmop{certain} \tmop{coefficient}$} \right)\\
    &  & \hspace{6em} \times \int_{\partial \mathbb{R}_{\geqslant 0} [|
    V_{\tmop{bulk}} (\Gamma) |]} (I^{\partial})^{| V_{\tmop{bulk}} (\Gamma) |}
    (\pi^{\ast} (J^{\partial}))^{| V_{\tmop{bdr}} (\Gamma) |} (\lfloor \bar{P}
    (0, t) \rfloor^{| E_{\tmop{int}} (\Gamma) |} \wedge \varphi)
  \end{eqnarray*}
  with $V_{\tmop{bulk}} (\Gamma), V_{\tmop{bdr}} (\Gamma), E_{\tmop{int}}
  (\Gamma)$ the set of bulk vertices, boundary vertices and internal edges of
  $\Gamma$, respectively.
  
  The boundary of $\mathbb{R}_{\geqslant 0} [| V_{\tmop{bulk}} (\Gamma) |]$
  has two kinds of components at codimension one. The first kind corresponds
  to a certain $x_i$ going to $0$, and the second kind corresponds to
  coincidence of some $x_i$ and $x_j$. So,
  \begin{eqnarray*}
    &  & (\hbar \mathd + \hbar^2 \partial_{K_t}) e^{I_t / \hbar}\\
    & = & \hbar e^{I_t / \hbar} (e^{\hbar \partial_{\lfloor \bar{P} (0, t)
    \rfloor}})_{\tmscript{\begin{array}{c}
      \text{contractions by}\\
      \text{connected graph}
    \end{array}}} e^{\pi^{\ast} (J^{\partial}) / \hbar}  \sum_{n = 1}^{+
    \infty} \hbar^{- n}\\
    &  & \left( I^{\partial} (x_1 = 0^+) \int_{x_2 \geqslant x_1}
    I^{\partial} (x_2) \ldots \int_{x_n \geqslant x_{n - 1}} I^{\partial}
    (x_n) \right.\\
    &  & + \sum_{i = 1}^{n - 1} \int_{x_1 \geqslant 0} I^{\partial} (x_1)
    \cdots \widehat{\int_{x_i}} \int_{x_{i + 1} \geqslant x_{i - 1}}
    (I^{\partial} (x_{i + 1}^-) - I^{\partial} (x_{i + 1}^+)) I^{\partial}
    (x_{i + 1}) \cdots \left. \int_{x_n \geqslant x_{n - 1}} I^{\partial}
    (x_n) \right)\\
    & = & e^{\hbar \partial_{\lfloor \bar{P} (0, t) \rfloor}} \left( e^{I_0 /
    \hbar} \left( \pi^{\ast} \left( e^{- J^{\partial} / \hbar} e^{\hbar
    \partial_{K_+^{\partial}}}_{\tmop{cross}} \left( e^{\hbar
    \partial_{(K^{\partial}_+ - K^{\partial}_-) / 4}} I^{\partial},
    e^{J^{\partial} / \hbar} \right) \right) +  \hbar^{- 1}
    \int_{\mathbb{R}_{\geqslant 0}} I^{\partial} \star_{\hbar} I^{\partial}
    \right) \right) .
  \end{eqnarray*}
  We have used (\ref{bdrdiffofextddppgt}) and (\ref{spclbdrvleofextddppgt}) to
  obtain the final result, while omitted details are left as an exercise.
\end{proof}

\begin{remark*}
  Similar BV anomaly computation for TQM on $S^1$ using configuration space
  can be found in {\cite[Section 3.4]{2017qtztnalgindexsiliqinli}}. For
  BF-like theories on manifold with boundary, {\cite[Section
  4.2]{cattaneo2018perturbative}} contains a rough description of such
  computations for partition functions. For Hamiltonian mechanics with constraints, {\cite[Appendix B]{Cattaneo_2022}} contains a proof of mQME for partition functions as well as a detailed consideration of the dependence of the solutions on gauge fixing conditions by using configuration space.
\end{remark*}

We can imagine an $\varepsilon \rightarrow 0$ version of Theorem
\ref{ppstnrgrnofspltgmycvtn}, which should imply:

\begin{lemma}
  \label{rltnlm38ok}Given $I_t$ defined in (\ref{intactnatfnttscldfntnmm}), we
  have
  \[ \lim_{\varepsilon \rightarrow 0} (1 \otimes e^{\hbar \partial_{P
     (\varepsilon, t)}}) e^{\left( \left( \int_{\mathbb{R}_{\geqslant 0}}
     I^{\partial} \right)_{\mathcal{E}_L} +\mathbb{I} (J^{\partial})
     +\mathbb{I} (\tmmathbf{\alpha}) \right) / \hbar} = e^{\mathbb{I}
     (\tmmathbf{\alpha}) / \hbar} \mathbb{I}_{\theta_t} (e^{I_t / \hbar}) . \]
\end{lemma}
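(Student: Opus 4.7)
The plan is to read the lemma off as the $\varepsilon \to 0$ limit of Theorem \ref{ppstnrgrnofspltgmycvtn} applied to $f = e^{I_0/\hbar}$ with $\Lambda = t$. Multiplying both sides of the commuting diagram there by $e^{\mathbb{I}(\tmmathbf{\alpha})/\hbar}$ yields
\[ e^{\mathbb{I}(\tmmathbf{\alpha})/\hbar}\, \mathbb{I}_{\theta_t}\bigl(e^{\hbar \partial_{P(\varepsilon, t)}} e^{I_0/\hbar}\bigr) = \bigl(1 \otimes e^{\hbar \partial_{P(\varepsilon, t)}}\bigr)\, e^{\mathbb{I}(\tmmathbf{\alpha})/\hbar}\, \mathbb{I}_{\theta_\varepsilon}(e^{I_0/\hbar}). \]
By Proposition \ref{tqmuvfntthmprf}, $e^{\hbar \partial_{P(\varepsilon, t)}} e^{I_0/\hbar} \to e^{I_t/\hbar}$ as $\varepsilon \to 0$, and since $\mathbb{I}_{\theta_t}$ is a continuous algebra map independent of $\varepsilon$, the left-hand side tends to $e^{\mathbb{I}(\tmmathbf{\alpha})/\hbar}\, \mathbb{I}_{\theta_t}(e^{I_t/\hbar})$, which is precisely the right-hand side of the lemma. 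Thus the task reduces to identifying $\lim_{\varepsilon \to 0} \bigl(1 \otimes e^{\hbar \partial_{P(\varepsilon, t)}}\bigr) e^{\mathbb{I}(\tmmathbf{\alpha})/\hbar}\, \mathbb{I}_{\theta_\varepsilon}(e^{I_0/\hbar})$ with the left-hand side of the lemma.

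Using that $\mathbb{I}_{\theta_\varepsilon}(\pi^\ast J^\partial) = \mathbb{I}(J^\partial)$ is independent of the splitting (as noted after (\ref{nttnspcliso316})), I would rewrite
\[ e^{\mathbb{I}(\tmmathbf{\alpha})/\hbar}\, \mathbb{I}_{\theta_\varepsilon}(e^{I_0/\hbar}) = \exp\!\Bigl(\bigl(\mathbb{I}(\tmmathbf{\alpha}) + \mathbb{I}(J^\partial) + \mathbb{I}_{\theta_\varepsilon}(\textstyle\int_{\mathbb{R}_{\geqslant 0}} I^\partial)\bigr)/\hbar\Bigr), \]
so the core claim becomes $\mathbb{I}_{\theta_\varepsilon}(\int_{\mathbb{R}_{\geqslant 0}} I^\partial) \to (\int_{\mathbb{R}_{\geqslant 0}} I^\partial)_{\mathcal{E}_L}$ as $\varepsilon \to 0$, in a sense strong enough to survive contraction by $(1 \otimes e^{\hbar \partial_{P(\varepsilon, t)}})$. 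Evaluating at $(l', f_L)$ and expanding $I^\partial(\theta_\varepsilon(l') + f_L)$ as a polynomial, the pure-$f_L$ piece is $(\int I^\partial)_{\mathcal{E}_L}(f_L)$, while every other term carries at least one factor $\theta_\varepsilon(l')(x_i)$ at an interior point. By Proposition \ref{extddppgtrltnwithlmtofppgtmycvtn} together with the heat-kernel form of $\bar P(0, \varepsilon)|_{C_1}(0, x)$, the $0$-form $\theta_\varepsilon(l')$ is uniformly bounded on $\mathbb{R}_{\geqslant 0}$ and converges to $0$ pointwise on $(0, +\infty)$, so dominated convergence kills each such "cross term".

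The main technical obstacle is promoting this pointwise convergence to an interchange with the action of $(1 \otimes e^{\hbar \partial_{P(\varepsilon, t)}})$. I would handle this by the configuration-space technique used in the proof of Proposition \ref{tqmuvfntthmprf}: at any fixed order in $\hbar$ and symmetric degree of the inputs, the expression expands into a finite sum of connected Feynman-graph integrals over the compactified spaces $\mathbb{R}_{\geqslant 0}[n]$, with integrands smoothly assembled from $\bar P(0, t)|_{C_\bullet}$, the external fields, the boundary factors $\pi^\ast J^\partial$ and $\mathbb{I}(\tmmathbf{\alpha})$, and the $\theta_\varepsilon(l')$ insertions. Every ingredient depends smoothly on $\varepsilon \geqslant 0$ after passing to the compactification, so the $\varepsilon \to 0$ limit can be taken under the integral sign graph-by-graph. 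The graphs with no $\theta_\varepsilon$ insertion reproduce precisely the Feynman-graph expansion of $\lim_{\varepsilon \to 0}(1 \otimes e^{\hbar \partial_{P(\varepsilon,t)}}) e^{((\int_{\mathbb{R}_{\geqslant 0}} I^\partial)_{\mathcal{E}_L} + \mathbb{I}(J^\partial) + \mathbb{I}(\tmmathbf{\alpha}))/\hbar}$, and those with at least one $\theta_\varepsilon$ insertion vanish by the dominated-convergence bound above, establishing the claimed equality.
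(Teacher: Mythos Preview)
Your proposal is correct and follows precisely the route the paper indicates: the paper's entire proof is the sentence ``We can imagine an $\varepsilon \to 0$ version of Theorem~\ref{ppstnrgrnofspltgmycvtn}, which should imply [the lemma]. The verification is left as an exercise,'' and you carry out exactly that exercise by applying Theorem~\ref{ppstnrgrnofspltgmycvtn} to $f=e^{I_0/\hbar}$, letting $\varepsilon\to 0$, and controlling the cross terms via the configuration-space/dominated-convergence argument from Proposition~\ref{tqmuvfntthmprf}. Your observation that the $\theta_\varepsilon(l')$ insertions sit in the $\mathcal{O}(L')$ factor and are untouched by $1\otimes e^{\hbar\partial_{P(\varepsilon,t)}}$, together with $\theta_\varepsilon(l')(x)\to 0$ for $x>0$ while remaining bounded, is the right mechanism for killing those terms.
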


The verification is left as an exercise.

Now we can write down generic solutions to mQME.

\begin{theorem}
  \label{thm310gnrcsltntqmrplus}Given
  \begin{eqnarray*}
    I^{\partial} & \in & \prod_{n \geqslant 2} \tmop{Sym}^n (V^{\ast}) \oplus
    \hbar \prod_{n > 0} \tmop{Sym}^n (V^{\ast}) [[\hbar]], \quad |
    I^{\partial} | = 1,\\
    J^{\partial} & \in & \prod_{n \geqslant 2} \tmop{Sym}^n (L^{\ast}) \oplus
    \hbar \prod_{n > 0} \tmop{Sym}^n (L^{\ast}) [[\hbar]], \quad |
    J^{\partial} | = 0,
  \end{eqnarray*}
  and $H^{\partial} \in \mathcal{O} (V) [[\hbar]]$ such that $| H^{\partial} |
  = 1, H^{\partial} \star_{\hbar} H^{\partial} = 0$, the mQME
  (\ref{mqmemqmemqme322}) is satisfied iff
  \[ I^{\partial} \star_{\hbar} I^{\partial} = 0 \quad \text{and} \quad
     H^{\partial} = - e^{J^{\partial} / \hbar} \star_{\hbar} I^{\partial}
     \star_{\hbar} e^{- J^{\partial} / \hbar} . \]
  (It is direct to check $e^{J^{\partial} / \hbar} \star_{\hbar} I^{\partial}
  \star_{\hbar} e^{- J^{\partial} / \hbar}$ is a well-defined element in
  $\mathcal{O} (V) [[\hbar]]$.)
\end{theorem}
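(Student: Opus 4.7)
The plan is to reduce the mQME (\ref{mqmemqmemqme322}) to an equation living purely on the boundary symplectic space $V$, by combining Lemma \ref{bvanmlcmptnn37} and Lemma \ref{rltnlm38ok} together with the translation identity (\ref{usefulfactabta2mycvtn1}) between the Weyl quantization $\tmmathbf{\Omega}_{L'}^{\tmop{left}}$ and the map $\mathbb{I}$. The reduced equation will split into two pieces that vanish independently: a ``bulk'' piece forcing $I^{\partial}\star_{\hbar}I^{\partial}=0$, and a ``boundary'' piece determining $H^{\partial}$ in terms of $I^{\partial}$ and $J^{\partial}$.

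First I would factor
\[ \mathbb{I}_{\theta_t}(e^{(\pi^{\ast}(\tmmathbf{\alpha})+I_t)/\hbar}) = e^{\mathbb{I}(\tmmathbf{\alpha})/\hbar}\,\mathbb{I}_{\theta_t}(e^{I_t/\hbar}), \]
using that $\mathbb{I}_{\theta_t}$ is an algebra homomorphism and $\mathbb{I}_{\theta_t}(\pi^{\ast}(\tmmathbf{\alpha}))=\mathbb{I}(\tmmathbf{\alpha})$. Next I would evaluate $(1\otimes(\hbar\mathd+\hbar^{2}\partial_{K_t}))\mathbb{I}_{\theta_t}(e^{I_t/\hbar})$ by applying $\mathbb{I}_{\theta_t}$ to the identity of Lemma \ref{bvanmlcmptnn37}, the non-commutation of $\mathbb{I}_{\theta_t}$ with $\hbar\mathd$ being absorbed into the $e^{\mathbb{I}(\tmmathbf{\alpha})/\hbar}$-prefactor via the compatibility (\ref{cptbltycdtnofspltgandbvdtmyycccvtn}). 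The output decomposes into a ``bulk anomaly'' proportional to $\int_{\mathbb{R}_{\geqslant 0}}I^{\partial}\star_{\hbar}I^{\partial}$, which is a local bulk functional, and a ``boundary anomaly'' given by a pullback $\pi^{\ast}(\cdots)$ from $\mathcal{O}(V)[[\hbar]]$; under $\mathbb{I}_{\theta_t}$ the latter becomes $\mathbb{I}(\cdots)$.

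Then I would treat the BFV term. Since $\tmmathbf{\Omega}_{L'}^{\tmop{left}}(H^{\partial},-)\otimes 1$ acts only on the $\mathcal{O}(L')$ factor and $e^{\mathbb{I}(\tmmathbf{\alpha})/\hbar}$ collects all dependence on $L'$ coming from $\pi^{\ast}(\tmmathbf{\alpha})$, I would apply (\ref{usefulfactabta2mycvtn1}) to rewrite this contribution as $e^{\mathbb{I}(\tmmathbf{\alpha})/\hbar}\cdot\mathbb{I}(\tilde{H}^{\partial})\cdot\mathbb{I}_{\theta_t}(e^{I_t/\hbar})$ for an explicit Moyal-modified $\tilde{H}^{\partial}$ extracted from $H^{\partial}$ through the cross-contraction by $(K_{+}^{\partial}-K_{-}^{\partial})/4$ and $K_{-}^{\partial}$. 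Equating the two sides of the mQME and cancelling the common factor $e^{\mathbb{I}(\tmmathbf{\alpha})/\hbar}\,\mathbb{I}_{\theta_t}(e^{I_t/\hbar})/\hbar$, I am left with the vanishing of a sum of an $\mathbb{I}$-valued boundary contribution and a bulk-integrated local functional. These live in complementary subspaces of the target, so they must vanish independently. The bulk one gives $\int_{\mathbb{R}_{\geqslant 0}}I^{\partial}\star_{\hbar}I^{\partial}=0$, and by nondegeneracy of integration against compactly supported bulk test fields this forces $I^{\partial}\star_{\hbar}I^{\partial}=0$ in $\mathcal{O}(V)[[\hbar]]$. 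The boundary one is an equation in $\mathcal{O}(V)[[\hbar]]$ which, after conjugation by $e^{J^{\partial}/\hbar}$ on the left and $e^{-J^{\partial}/\hbar}$ on the right and repeated use of the associativity of $\star_{\hbar}$, simplifies to $H^{\partial}=-e^{J^{\partial}/\hbar}\star_{\hbar}I^{\partial}\star_{\hbar}e^{-J^{\partial}/\hbar}$.

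The hard part will be the precise matching between the cross-product expression $e^{\hbar\partial_{K_{+}^{\partial}}}_{\tmop{cross}}(e^{\hbar\partial_{(K_{+}^{\partial}-K_{-}^{\partial})/4}}I^{\partial},\,e^{J^{\partial}/\hbar})$ produced by Lemma \ref{bvanmlcmptnn37} and the Weyl-quantization formula (\ref{weylqtztnccrtfmlmycnvntn}) for $H^{\partial}$: the two decompositions $K^{\partial}=K_{-}^{\partial}+K_{+}^{\partial}$ appearing in these formulas play asymmetric roles, and the Koszul signs from graded symmetry of $V$ must be tracked carefully. Once this identification is in place, packaging the two equivalent cross-products back into honest Moyal products $\star_{\hbar}$ and reading off the $\mathcal{O}(V)[[\hbar]]$ identity is essentially formal.
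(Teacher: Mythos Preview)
Your proposal is correct and follows essentially the same route as the paper: factor out $e^{\mathbb{I}(\tmmathbf{\alpha})/\hbar}$, use the compatibility (\ref{cptbltycdtnofspltgandbvdtmyycccvtn}) together with Lemma \ref{bvanmlcmptnn37} for the BV part, use Lemma \ref{rltnlm38ok} together with (\ref{usefulfactabta2mycvtn1}) for the BFV part, then split the resulting equation into its bulk-integral and boundary-pullback components and simplify the latter via $J^{\partial}\in\mathcal{O}(L)[[\hbar]]$. One small caution: your claim that ``$e^{\mathbb{I}(\tmmathbf{\alpha})/\hbar}$ collects all dependence on $L'$'' is not immediate from the factorization $e^{\mathbb{I}(\tmmathbf{\alpha})/\hbar}\mathbb{I}_{\theta_t}(e^{I_t/\hbar})$, since $\mathbb{I}_{\theta_t}(e^{I_t/\hbar})$ itself lands in $\mathcal{O}(L')\otimes\mathcal{O}(\mathcal{E}_L)$; it is precisely Lemma \ref{rltnlm38ok} that brings the expression to a form where the only $\mathcal{O}(L')$-dependence sits in $e^{\mathbb{I}(\tmmathbf{\alpha})/\hbar}$, making (\ref{usefulfactabta2mycvtn1}) applicable.
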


\begin{proof}
  By (\ref{cptbltycdtnofspltgandbvdtmyycccvtn}) and Lemma
  \ref{bvanmlcmptnn37},
  \begin{eqnarray*}
    &  & (1 \otimes (\hbar \mathd + \hbar^2 \partial_{K_t}))
    \mathbb{I}_{\theta_t} (e^{(\pi^{\ast} (\tmmathbf{\alpha}) + I_t) /
    \hbar})\\
    & = & e^{\mathbb{I} (\tmmathbf{\alpha}) / \hbar} (1 \otimes (\hbar \mathd
    + \hbar^2 \partial_{K_t}) + \hbar \{ \mathbb{I} (\tmmathbf{\alpha}), -
    \}_t) \mathbb{I}_{\theta_t} (e^{I_t / \hbar})\\
    & = & e^{\mathbb{I} (\tmmathbf{\alpha}) / \hbar} \mathbb{I}_{\theta_t}
    ((\hbar \mathd + \hbar^2 \partial_{K_t}) e^{I_t / \hbar})\\
    & = & e^{\mathbb{I} (\tmmathbf{\alpha}) / \hbar} \mathbb{I}_{\theta_t}
    \Bigg(\nobracket \lim_{\varepsilon \rightarrow 0} e^{\hbar \partial_{P
    (\varepsilon, t)}} \Bigg(\nobracket e^{I_0 / \hbar}\\
    &  & \qquad \times \left. \left. \left( \pi^{\ast} \left( e^{-
    J^{\partial} / \hbar} e^{\hbar \partial_{K_+^{\partial}}}_{\tmop{cross}}
    \left( e^{\hbar \partial_{(K^{\partial}_+ - K^{\partial}_-) / 4}}
    I^{\partial}, e^{J^{\partial} / \hbar} \right) \right) +  \hbar^{- 1}
    \int_{\mathbb{R}_{\geqslant 0}} I^{\partial} \star_{\hbar} I^{\partial}
    \right) \right) \right) .
  \end{eqnarray*}
  By Lemma \ref{rltnlm38ok} and (\ref{usefulfactabta2mycvtn1}),
  \begin{eqnarray*}
    &  & (\tmmathbf{\Omega}_{L'}^{\tmop{left}} (H^{\partial}, -) \otimes 1)
    \mathbb{I}_{\theta_t} (e^{(\pi^{\ast} (\tmmathbf{\alpha}) + I_t) /
    \hbar})\\
    & = & (\tmmathbf{\Omega}_{L'}^{\tmop{left}} (H^{\partial}, -) \otimes 1)
    \lim_{\varepsilon \rightarrow 0} (1 \otimes e^{\hbar \partial_{P
    (\varepsilon, t)}}) e^{\left( \left( \int_{\mathbb{R}_{\geqslant 0}}
    I^{\partial} \right)_{\mathcal{E}_L} +\mathbb{I} (J^{\partial})
    +\mathbb{I} (\tmmathbf{\alpha}) \right) / \hbar}\\
    & = & \lim_{\varepsilon \rightarrow 0} (1 \otimes e^{\hbar \partial_{P
    (\varepsilon, t)}}) \left( e^{\left( \int_{\mathbb{R}_{\geqslant 0}}
    I^{\partial} \right)_{\mathcal{E}_L} / \hbar}
    (\tmmathbf{\Omega}_{L'}^{\tmop{left}} (H^{\partial}, -) \otimes 1)
    e^{(\mathbb{I} (J^{\partial}) +\mathbb{I} (\tmmathbf{\alpha})) / \hbar}
    \right)\\
    & = & \lim_{\varepsilon \rightarrow 0} (1 \otimes e^{\hbar \partial_{P
    (\varepsilon, t)}}) \left( e^{\left( \int_{\mathbb{R}_{\geqslant 0}}
    I^{\partial} \right)_{\mathcal{E}_L} / \hbar} e^{\mathbb{I}
    (\tmmathbf{\alpha}) / \hbar}  \right.\\
    &  & \hspace{8em} \times \left. \mathbb{I} \left( e^{- \hbar
    \partial_{K_-^{\partial}}}_{\tmop{cross}} \left( e^{\hbar
    \partial_{(K^{\partial}_+ - K^{\partial}_-) / 4}} H^{\partial},
    e^{J^{\partial} / \hbar} \right) \right) \right)\\
    & = & e^{\mathbb{I} (\tmmathbf{\alpha}) / \hbar} \mathbb{I}_{\theta_t}
    \left( \lim_{\varepsilon \rightarrow 0} e^{\hbar \partial_{P (\varepsilon,
    t)}} \left( e^{I_0 / \hbar} \times \pi^{\ast} \left( e^{- J^{\partial} /
    \hbar} e^{- \hbar \partial_{K_-^{\partial}}}_{\tmop{cross}} \left(
    e^{\hbar \partial_{(K^{\partial}_+ - K^{\partial}_-) / 4}} H^{\partial},
    e^{J^{\partial} / \hbar} \right) \right) \right) \right),
  \end{eqnarray*}
  with the last equality by the same reason as for Lemma \ref{rltnlm38ok}.
  
  So, the equation (\ref{mqmemqmemqme322}) is equivalent to
  \begin{eqnarray}
    I^{\partial} \star_{\hbar} I^{\partial} & = & 0, \nonumber\\
    - e^{\hbar \partial_{K_+^{\partial}}}_{\tmop{cross}} \left( e^{\hbar
    \partial_{(K^{\partial}_+ - K^{\partial}_-) / 4}} I^{\partial},
    e^{J^{\partial} / \hbar} \right) & = & e^{- \hbar
    \partial_{K_-^{\partial}}}_{\tmop{cross}} \left( e^{\hbar
    \partial_{(K^{\partial}_+ - K^{\partial}_-) / 4}} H^{\partial},
    e^{J^{\partial} / \hbar} \right) . 
    \label{mqmeeqn2bdrbfvoprtandblkactnrltn325}
  \end{eqnarray}
  Since we have imposed that $J^{\partial} \in \mathcal{O} (L) [[\hbar]]$, by
  the fact that $K^{\partial}_- \in L \otimes L'$,
  \[ e^{- \hbar \partial_{K_-^{\partial}}}_{\tmop{cross}} \left( e^{\hbar
     \partial_{(K^{\partial}_+ - K^{\partial}_-) / 4}} H^{\partial},
     e^{J^{\partial} / \hbar} \right) = e^{J^{\partial} / \hbar} e^{\hbar
     \partial_{(K^{\partial}_+ - K^{\partial}_-) / 4}} H^{\partial} . \]
  So (\ref{mqmeeqn2bdrbfvoprtandblkactnrltn325}) leads to
  \begin{eqnarray*}
    H^{\partial} & = & - e^{- \hbar \partial_{(K^{\partial}_+ -
    K^{\partial}_-) / 4}} \left( e^{- J^{\partial} / \hbar} e^{\hbar
    \partial_{K_+^{\partial}}}_{\tmop{cross}} \left( e^{\hbar
    \partial_{(K^{\partial}_+ - K^{\partial}_-) / 4}} I^{\partial},
    e^{J^{\partial} / \hbar} \right) \right)\\
    & = & - e^{\hbar \partial_{K^{\partial} / 2}}_{\tmop{cross}} \left( e^{-
    J^{\partial} / \hbar},  \left( e^{\hbar \partial_{K^{\partial} /
    2}}_{\tmop{cross}} (I^{\partial}, e^{J^{\partial} / \hbar}) \right)
    \right)\\
    & = & - e^{J^{\partial} / \hbar} \star_{\hbar} I^{\partial} \star_{\hbar}
    e^{- J^{\partial} / \hbar} .
  \end{eqnarray*}
  
\end{proof}

\begin{remark}
  The condition $I^{\partial} \star_{\hbar} I^{\partial} = 0$ is equivalent to
  $\int_{S^1} I^{\partial}$ determining a consistent interaction for TQM on
  $S^1$ in perturbative BV formalism (see {\cite[Theorem
  3.10]{2017qtztnalgindexsiliqinli}}).
\end{remark}

\begin{remark}
  Actually the $J^{\partial} \neq 0$ solution can be obtained by a ``boundary
  gauge transformation'' from the $J^{\partial} = 0$ solution: if
  $I^{\partial} \star_{\hbar} I^{\partial} = 0$,
  \begin{eqnarray*}
    0 & = & (1 \otimes (\hbar \mathd + \hbar^2 \partial_{K_t})
    -\tmmathbf{\Omega}_{L'}^{\tmop{left}} (I^{\partial}, -) \otimes 1)
    \lim_{\varepsilon \rightarrow 0} (1 \otimes e^{\hbar \partial_{P
    (\varepsilon, t)}}) e^{\left( \left( \int_{\mathbb{R}_{\geqslant 0}}
    I^{\partial} \right)_{\mathcal{E}_L} +\mathbb{I} (\tmmathbf{\alpha})
    \right) / \hbar}\\
    &  & \\
    & = & (1 \otimes (\hbar \mathd + \hbar^2 \partial_{K_t})
    -\tmmathbf{\Omega}_{L'}^{\tmop{left}} (e^{J^{\partial} / \hbar}
    \star_{\hbar} I^{\partial} \star_{\hbar} e^{- J^{\partial} / \hbar}, -)
    \otimes 1)\\
    &  & \hspace{7em} \lim_{\varepsilon \rightarrow 0} (1 \otimes e^{\hbar
    \partial_{P (\varepsilon, t)}}) (\tmmathbf{\Omega}_{L'}^{\tmop{left}}
    (e^{J^{\partial} / \hbar}, -) \otimes 1) e^{\left( \left(
    \int_{\mathbb{R}_{\geqslant 0}} I^{\partial} \right)_{\mathcal{E}_L}
    +\mathbb{I} (\tmmathbf{\alpha}) \right) / \hbar}\\
    & = & (1 \otimes (\hbar \mathd + \hbar^2 \partial_{K_t})
    -\tmmathbf{\Omega}_{L'}^{\tmop{left}} (e^{J^{\partial} / \hbar}
    \star_{\hbar} I^{\partial} \star_{\hbar} e^{- J^{\partial} / \hbar}, -)
    \otimes 1)\\
    &  & \hspace{7em} \lim_{\varepsilon \rightarrow 0} (1 \otimes e^{\hbar
    \partial_{P (\varepsilon, t)}}) e^{\left( \left(
    \int_{\mathbb{R}_{\geqslant 0}} I^{\partial} \right)_{\mathcal{E}_L}
    +\mathbb{I} (J^{\partial}) +\mathbb{I} (\tmmathbf{\alpha}) \right) /
    \hbar} .
  \end{eqnarray*}
\end{remark}

\section{BV-BFV Description of TQM on
Interval}\label{bvbfvdscptntqmintvlsctn4ttl}

Now, consider TQM on interval $\mathbf{I}= [0, 1]$. The spacetime is compact,
so we are free of those subtleties in Remark
\ref{rmkncmpctpprsrptfctsitactvthy}. We will be sketchy since the story is
similar to that in Section \ref{sctnbvbfvtqmpstvrllne3}.

\subsubsection*{Field spaces}

We have a free BV-BFV pair with bulk field data
\begin{equation}
  \left( \mathcal{E} \assign \Omega^{\bullet} (\mathbf{I}) \otimes V, \mathd,
  \omega \assign \int_{\mathbf{I}} \omega^{\partial} \right)
  \label{blkflddtintvl41}
\end{equation}
and boundary field data
\begin{equation}
  (\mathcal{E}^{\partial} \assign V \oplus V = (\Omega^{\bullet} \{ 0 \}
  \otimes V) \oplus (\Omega^{\bullet} \{ 1 \} \otimes V), 0, \omega^{\partial}
  \oplus (- \omega^{\partial})) \label{bdrflddt42}
\end{equation}
($\Omega^{\bullet} \{ 0 \}$ and $\Omega^{\bullet} \{ 1 \}$ denote the de Rham
complexes of the boundary points), and the restriction map
\begin{equation}
  \pi \assign (\pi_0, \pi_1), \label{bdrrstrctnmp43333}
\end{equation}
where $\pi_0, \pi_1$ are pullbacks of forms induced by $\{ 0 \}, \{ 1 \}
\hookrightarrow \mathbf{I}$, respectively. Now $(V, \omega^{\partial})$ is a
finite dimensional graded symplectic vector space endowed with two Lagrangian
decompositions
\begin{equation}
  V = L_0 \oplus L'_0 = L_1 \oplus L'_1, \label{twolgrgndcmpstn43}
\end{equation}
by which we can decompose $K^{\partial}$ (the inverse to $\omega^{\partial}$,
see (\ref{invrssplctc27})) as
\[ K^{\partial} = K^{\partial}_{0, -} + K^{\partial}_{0, +} = K^{\partial}_{1,
   -} + K^{\partial}_{1, +}, \quad K^{\partial}_{i, -} \in L_i \otimes L_i',
   \enspace K^{\partial}_{i, +} \in L_i' \otimes L_i \text{ for } i = 0, 1. \]
The restricted bulk field space with boundary condition $L_0$ at $\{ 0 \}$ and
$L_1$ at $\{ 1 \}$ is:
\begin{equation}
  \mathcal{E}_{L_0, L_1} \assign \left\{ f \in \Omega^{\bullet} (\mathbf{I})
  \otimes V| \pi_i (f) \in L_i \text{ for } i = 0, 1 \right\} .
  \label{whatihvnlblthisdfntn44}
\end{equation}
Then, $(\mathcal{E}_{L_0, L_1}, \mathd, \omega)$ is a dg $(- 1)$-symplectic
vector space.

\subsubsection*{Renormalized BV structure with restricted bulk field space}

Now we formulate the renormalized BV structure on $\mathcal{O}
(\mathcal{E}_{L_0, L_1})$. For Lagrangian decomposition $V = L_0 \oplus L'_0$,
we denote the propagator defined in (\ref{ppgtdfntnbvbfvwrkmycvtn}) as:
\begin{eqnarray*}
  P_{L_0}^{\mathbb{R}_{\geqslant 0}} (\varepsilon, \Lambda) & \assign & \left(
  \frac{- 1}{4}  \left( \text{d}^{\tmop{GF}} \otimes 1 + 1 \otimes
  \mathd^{\tmop{GF}} \right) \int_{\varepsilon}^{\Lambda} \tmop{dt}
  \widetilde{H_t} \right) \otimes K^{\partial}_{0, +}\\
  &  & + \sigma \left( \frac{1}{4}  \left( \text{d}^{\tmop{GF}} \otimes 1 + 1
  \otimes \mathd^{\tmop{GF}} \right) \int_{\varepsilon}^{\Lambda} \tmop{dt}
  \widetilde{H_t} \right) \otimes K^{\partial}_{0, -} 
\end{eqnarray*}
with
\[ \widetilde{H_t} = \frac{1}{\sqrt{4 \pi t}} \left( \phi (x - y) e^{-
   \frac{(x - y)^2}{4 t}} (\tmop{dy} - \tmop{dx}) - \phi (x + y) e^{- \frac{(x
   + y)^2}{4 t}} (\tmop{dy} + \tmop{dx}) \right) . \]
Suppose $\phi$ is supported on a subset of the open region $(- 0.1, 0.1)$.
Then it is direct to see that, on $(\mathbb{R}_{\geqslant 0} \times
\mathbb{R}_{\geqslant 0}) \backslash ([0, 0.1] \times [0, 0.1])$,
\[ P_{L_0}^{\mathbb{R}_{\geqslant 0}} (\varepsilon, \Lambda) = \left( \frac{-
   1}{4}  \left( \text{d}^{\tmop{GF}} \otimes 1 + 1 \otimes \mathd^{\tmop{GF}}
   \right) \int_{\varepsilon}^{\Lambda} \tmop{dt} \frac{1}{\sqrt{4 \pi t}}
   \phi (x - y) e^{- \frac{(x - y)^2}{4 t}} (\tmop{dy} - \tmop{dx}) \right)
   \otimes K^{\partial} . \]
Similarly we can write down a propagator for Lagrangian decomposition $V = L_1
\oplus L'_1$ with the spacetime modified to $\mathbb{R}_{\leqslant 1} \assign
(- \infty, 1]$:
\begin{eqnarray*}
  P_{L_1}^{\mathbb{R}_{\leqslant 1}} (\varepsilon, \Lambda) & \assign & \left(
  \frac{- 1}{4}  \left( \text{d}^{\tmop{GF}} \otimes 1 + 1 \otimes
  \mathd^{\tmop{GF}} \right) \int_{\varepsilon}^{\Lambda} \tmop{dt}
  \widetilde{H'_t} \right) \otimes K^{\partial}_{1, +}\\
  &  & + \sigma \left( \frac{1}{4}  \left( \text{d}^{\tmop{GF}} \otimes 1 + 1
  \otimes \mathd^{\tmop{GF}} \right) \int_{\varepsilon}^{\Lambda} \tmop{dt}
  \widetilde{H'_t} \right) \otimes K^{\partial}_{1, -} 
\end{eqnarray*}
with
\[ \widetilde{H'_t} = \frac{1}{\sqrt{4 \pi t}} \left( \phi (x - y) e^{-
   \frac{(x - y)^2}{4 t}} (\tmop{dy} - \tmop{dx}) - \phi (x + y - 2) e^{-
   \frac{(x + y - 2)^2}{4 t}} (\tmop{dy} + \tmop{dx}) \right) . \]
Then, on $(\mathbb{R}_{\leqslant 1} \times \mathbb{R}_{\leqslant 1})
\backslash ([0.9, 1] \times [0.9, 1])$,
\[ P_{L_1}^{\mathbb{R}_{\leqslant 1}} (\varepsilon, \Lambda) = \left( \frac{-
   1}{4}  \left( \text{d}^{\tmop{GF}} \otimes 1 + 1 \otimes \mathd^{\tmop{GF}}
   \right) \int_{\varepsilon}^{\Lambda} \tmop{dt} \frac{1}{\sqrt{4 \pi t}}
   \phi (x - y) e^{- \frac{(x - y)^2}{4 t}} (\tmop{dy} - \tmop{dx}) \right)
   \otimes K^{\partial} . \]
So, we can define $P_{L_0, L_1} (\varepsilon, \Lambda)$ on $[0, 1] \times [0,
1]$ by gluing:
\begin{equation}
  P_{L_0, L_1} (\varepsilon, \Lambda) \assign \left\{\begin{array}{l}
    P_{L_1}^{\mathbb{R}_{\leqslant 1}} (\varepsilon, \Lambda) \quad \text{on }
    (\mathbf{I} \times \mathbf{I}) \backslash ([0, 0.1] \times [0, 0.1])\\
    P_{L_0}^{\mathbb{R}_{\geqslant 0}} (\varepsilon, \Lambda) \quad \text{on }
    (\mathbf{I} \times \mathbf{I}) \backslash ([0.9, 1] \times [0.9, 1])
  \end{array}\right. . \label{ppgtintvltqm44}
\end{equation}
This is the propagator from scale $\varepsilon$ to scale $\Lambda$ of the
renormalized free theory on $\mathcal{O} (\mathcal{E}_{L_0, L_1})$. Similarly
we can define the BV kernel at scale $t$ by gluing, denoted by $K_{L_0, L_1,
t}$ (explicit formula omitted). Just as in Proposition
\ref{extddppgtrltnwithlmtofppgtmycvtn}, we can also define the extended
propagator
\begin{equation}
  \overline{P_{L_0, L_1}} (0, t) \in \Omega^0 (\mathbf{I} [2]) \otimes
  V^{\otimes 2} \label{extddppgttqmintvl44}
\end{equation}
with $\mathbf{I} [2] \assign \{ (x, y) |0 \leqslant x \leqslant y \leqslant 1
\} \sqcup \{ (x, y) |0 \leqslant y \leqslant x \leqslant 1 \}$. Then $K_{L_0,
L_1, t} = \mathd \overline{P_{L_0, L_1}} (0, t)$ on $\mathbf{I} [2]$.

In summary, $\left( \mathcal{O} (\mathcal{E}_{L_0, L_1}), \mathd,
\partial_{K_{L_0, L_1, t}} \right)$ is a differential BV algebra for $t > 0$,
and we have the following conjugation:
\begin{equation}
  \left( \mathcal{O} (\mathcal{E}_{L_0, L_1}) [[\hbar]], \mathd + \hbar
  \partial_{K_{L_0, L_1, \varepsilon}} \right) \begin{array}{c}
    e^{\hbar \partial_{P_{L_0, L_1} (\varepsilon, \Lambda)}}\\
    \rightleftharpoons\\
    e^{- \hbar \partial_{P_{L_0, L_1} (\varepsilon, \Lambda)}}
  \end{array} \left( \mathcal{O} (\mathcal{E}_{L_0, L_1}) [[\hbar]], \mathd +
  \hbar \partial_{K_{L_0, L_1, \Lambda}} \right) .
  \label{freerstctdthyrghmtpyrltn44}
\end{equation}
If we replace $\mathcal{O} (\mathcal{E}_{L_0, L_1})$ in
(\ref{freerstctdthyrghmtpyrltn44}) with $\mathcal{O} (\mathcal{E})$ where
$\mathcal{E}$ is defined in (\ref{blkflddtintvl41}), the resulting relation
still holds.

\subsubsection*{Renormalized splitting}

Similar to (\ref{nttnspcliso316}), for $\forall J \in \mathcal{O} (V), i = 0,
1$, define $\mathbb{I}_i (J) \in \mathcal{O} (L_i') \otimes \mathcal{O}
(\mathcal{E}_{L_0, L_1})$ to be
\begin{equation}
  \mathbb{I}_i (J) \assign \left( 1 \otimes \left( \pi_i |_{\mathcal{E}_{L_0,
  L_1}} \right)^{\ast} \right) (J), \label{lstfmlisweargdfk491003}
\end{equation}
where we regard $J$ as in $\mathcal{O} (L'_i) \otimes \mathcal{O} (L_i)$ and
$\pi_i |_{\mathcal{E}_{L_0, L_1}} : \mathcal{E}_{L_0, L_1} \rightarrow L_i$ is
the restriction of $\pi_i$ in (\ref{bdrrstrctnmp43333}) to $\mathcal{E}_{L_0,
L_1} \subset \mathcal{E}$. Now, for any splitting $\theta : L_0' \oplus L'_1
\rightarrow \mathcal{E}$ as in Definition \ref{plrztnandspltgmycvtn} and the
induced $\mathbb{I}_{\theta} : \mathcal{O} (\mathcal{E}) \rightarrow
\mathcal{O} (L_0') \otimes \mathcal{O} (L_1') \otimes \mathcal{O}
(\mathcal{E}_{L_0, L_1})$ in (\ref{ismbtwnfctnrgonfldspc}), it is direct to
see that $\mathbb{I}_{\theta} (\pi^{\ast}_i (J)) =\mathbb{I}_i (J)$.

For $i = 0, 1$, let $\tmmathbf{\alpha}_i \in (L'_i)^{\ast} \otimes L_i^{\ast}$
be determined by
\begin{equation}
  \tmmathbf{\alpha}_i (l' \otimes l) \assign \omega^{\partial} (l', l) \text{
  for } l' \in L'_i, l \in L_i . \label{spclbdrtrmtqmintvl2tms47}
\end{equation}
Regard $\tmmathbf{\alpha}_i$ as an element in $\tmop{Sym}^2 (V^{\ast})$. Then,
\[ \mathbb{I}_i (\tmmathbf{\alpha}_i) \in (L_i')^{\ast} \otimes
   (\mathcal{E}_{L_0, L_1})^{\ast}, \quad \mathbb{I}_i (\tmmathbf{\alpha}_i)
   (l' \otimes f) = \omega^{\partial} (l', \pi_i (f)) . \]
Similar to Definition \ref{dfntn34rnzdsplttglstlt}, we define:

\begin{definition}
  Given the extended propagator in (\ref{extddppgttqmintvl44}) and
  $\tmmathbf{\alpha}_0, \tmmathbf{\alpha}_1$ in
  (\ref{spclbdrtrmtqmintvl2tms47}), the scale $t$ renormalized splitting
  $\theta_t : L'_0 \oplus L'_1 \rightarrow \mathcal{E}$ for the current TQM on
  $\mathbf{I}$ is
  \begin{eqnarray}
    &  & \theta_t (l'_0, l'_1) \nonumber\\
    & \assign & 2 (\mathbb{I}_0 (\tmmathbf{\alpha}_0) (l'_0 \otimes -)
    \otimes 1) \overline{P_{L_0, L_1}} (0, t) |_{C_1} - 2 (\mathbb{I}_1
    (\tmmathbf{\alpha}_1) (l'_1 \otimes -) \otimes 1) \overline{P_{L_0, L_1}}
    (0, t) |_{C_2}  \label{scltspltgitvcfgrtn47}
  \end{eqnarray}
  where $l'_0 \in L'_0, l'_1 \in L'_1$, $C_1 \assign \{ (x, y) |0 \leqslant x
  \leqslant y \leqslant 1 \}, C_2 \assign \{ (x, y) | 0 \leqslant y \leqslant
  x \leqslant 1 \}$ here.
\end{definition}

The renormalization of $\theta_t$ reads:

\begin{theorem}
  \label{ppstnrgrnofspltgintvlppstn41}For $\forall \varepsilon, \Lambda > 0$,
  the renormalized splittings $\theta_{\varepsilon}, \theta_{\Lambda}$ in
  (\ref{scltspltgitvcfgrtn47}) and the conjugation map $e^{\hbar
  \partial_{P_{L_0, L_1} (\varepsilon, \Lambda)}}$ in
  (\ref{freerstctdthyrghmtpyrltn44}) make this diagram commute:
  \begin{eqnarray*}
    \mathcal{O} (\mathcal{E}) [[\hbar]] \hspace{2.5em} &
    \xrightarrow{\LARGE{\mathbb{I}_{\theta_{\varepsilon}}}} & \quad
    \mathcal{O} (L_0') \otimes \mathcal{O} (L_1') \otimes \mathcal{O}
    (\mathcal{E}_{L_0, L_1}) [[\hbar]]\\
    \downarrow \enspace e^{\hbar \partial_{P_{L_0, L_1} (\varepsilon,
    \Lambda)}} &  & \hspace{1.5em} \downarrow \enspace e^{- (\mathbb{I}_0
    (\tmmathbf{\alpha}_0) -\mathbb{I}_1 (\tmmathbf{\alpha}_1)) / \hbar} \left(
    1 \otimes 1 \otimes e^{\hbar \partial_{P_{L_0, L_1} (\varepsilon,
    \Lambda)}} \right) e^{(\mathbb{I}_0 (\tmmathbf{\alpha}_0) -\mathbb{I}_1
    (\tmmathbf{\alpha}_1)) / \hbar}\\
    \mathcal{O} (\mathcal{E}) [[\hbar]] \hspace{2.5em} &
    \xrightarrow{\LARGE{\mathbb{I}_{\theta_{\Lambda}}}} & \quad \mathcal{O}
    (L_0') \otimes \mathcal{O} (L_1') \otimes \mathcal{O} (\mathcal{E}_{L_0,
    L_1}) [[\hbar]]
  \end{eqnarray*}
\end{theorem}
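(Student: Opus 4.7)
The plan is to transport the proof of Theorem \ref{ppstnrgrnofspltgmycvtn} to the interval, keeping track of both boundaries simultaneously. Three ingredients drive the argument. First, an analog of (\ref{spltgalgismmycvtn}): writing $A \assign \mathbb{I}_0(\tmmathbf{\alpha}_0) - \mathbb{I}_1(\tmmathbf{\alpha}_1)$,
\begin{equation*}
\mathbb{I}_{\theta_t}(f) = e^{-A/\hbar}\left( e^{\hbar \partial_{2 \overline{P_{L_0, L_1}}(0, t)}}_{\tmop{cross}}\bigl(e^{A/\hbar}, f\bigr) \right)_{\mathcal{E}_{L_0, L_1}},
\end{equation*}
where $2\overline{P_{L_0, L_1}}(0, t)$ is viewed as a form on $\mathbf{I}[2] = C_1 \sqcup C_2$. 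The cross-contraction of $\mathbb{I}_0(\tmmathbf{\alpha}_0)$ (whose $(\mathcal{E}_{L_0, L_1})^{\ast}$-leg sits at $0$) with $f$ picks up $\overline{P_{L_0, L_1}}(0, t)|_{C_1}$, and of $-\mathbb{I}_1(\tmmathbf{\alpha}_1)$ (at $1$) picks up $-\overline{P_{L_0, L_1}}(0, t)|_{C_2}$, reproducing (\ref{scltspltgitvcfgrtn47}) with its signs. Second, vanishing of $P_{L_0, L_1}(\varepsilon, \Lambda)$ at all four corners of $\mathbf{I} \times \mathbf{I}$: at $(0, 0)$ and $(1, 1)$ this reduces to the half-line computation (Proposition \ref{extddppgtrltnwithlmtofppgtmycvtn}) applied to $P_{L_0}^{\mathbb{R}_{\geqslant 0}}$ and $P_{L_1}^{\mathbb{R}_{\leqslant 1}}$ respectively, while at $(0, 1)$ and $(1, 0)$ the support condition $\tmop{supp}\phi \subset (-0.1, 0.1)$ forces $\widetilde{H_t}$ and $\widetilde{H'_t}$ identically zero in an open neighborhood, so $P_{L_0, L_1}(\varepsilon, \Lambda)$ and all its derivatives vanish there. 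Hence
\[ (1 \otimes 1 \otimes \partial_{P_{L_0, L_1}(\varepsilon, \Lambda)}) e^{A/\hbar} = 0. \]
Third, the additivity $\overline{P_{L_0, L_1}}(0, \varepsilon) + P_{L_0, L_1}(\varepsilon, \Lambda) = \overline{P_{L_0, L_1}}(0, \Lambda)$ on $\mathbf{I}[2]$, which is immediate from the integral definitions of the propagators together with their smooth extendability to $\mathbf{I}[2]$.

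With these in hand, the diagram chase becomes a mechanical copy of the half-line calculation. Starting from $e^{A/\hbar}\mathbb{I}_{\theta_\varepsilon}(f)$ written in the cross-contraction form above, the operator $(1 \otimes 1 \otimes e^{\hbar \partial_{P_{L_0, L_1}(\varepsilon, \Lambda)}})$ factors (via the corner vanishing) into a cross-kernel piece $\partial_{2 P_{L_0, L_1}(\varepsilon, \Lambda)}$ together with $\partial_{P_{L_0, L_1}(\varepsilon, \Lambda)}$ acting on $f$ alone; the additivity then combines this cross-kernel with $2\overline{P_{L_0, L_1}}(0, \varepsilon)$ into $2\overline{P_{L_0, L_1}}(0, \Lambda)$, and one reads off $e^{A/\hbar}\mathbb{I}_{\theta_\Lambda}\bigl(e^{\hbar \partial_{P_{L_0, L_1}(\varepsilon, \Lambda)}} f\bigr)$. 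Conjugating back by $e^{-A/\hbar}$ yields exactly the commutativity claimed.

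The main new obstacle relative to the half-line case is the off-diagonal corner vanishing at $(0, 1)$ and $(1, 0)$: without it, $\partial_{P_{L_0, L_1}(\varepsilon, \Lambda)}$ would generate spurious cross-contractions between $\mathbb{I}_0(\tmmathbf{\alpha}_0)$ and $\mathbb{I}_1(\tmmathbf{\alpha}_1)$ that have no analog in Theorem \ref{ppstnrgrnofspltgmycvtn}, and the factorization in the previous paragraph would fail. Fortunately this is trivialized by the compactness of $\tmop{supp}\phi$ inside $(-0.1, 0.1)$ exploited in the gluing construction (\ref{ppgtintvltqm44}). Once this point is settled, the remaining algebraic rearrangement parallels the half-line proof verbatim.
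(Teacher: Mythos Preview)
Your proposal is correct and follows exactly the approach the paper intends: the paper gives no explicit proof of this theorem, stating only that the interval story is ``similar to that in Section~\ref{sctnbvbfvtqmpstvrllne3}'', so your transport of the proof of Theorem~\ref{ppstnrgrnofspltgmycvtn} with the two-boundary bookkeeping (the analog of (\ref{spltgalgismmycvtn}), corner vanishing including the off-diagonal corners $(0,1),(1,0)$ via $\operatorname{supp}\phi\subset(-0.1,0.1)$, and additivity on $\mathbf{I}[2]$) is precisely what is implicitly being invoked. Your identification of the off-diagonal corner vanishing as the only genuinely new point relative to the half-line case is apt.
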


This completes the description of homotopic renormalization for free TQM on
$\mathbf{I}$ in BV-BFV formalism.

\subsubsection*{The mQME and its generic solutions}

Using the extended propagator (\ref{extddppgttqmintvl44}), we can repeat the
proof of Proposition \ref{tqmuvfntthmprf} to show that, given
\begin{eqnarray}
  I^{\partial} & \in & \prod_{n \geqslant 2} \tmop{Sym}^n (V^{\ast}) \oplus
  \hbar \prod_{n > 0} \tmop{Sym}^n (V^{\ast}) [[\hbar]], \quad | I^{\partial}
  | = 1, \nonumber\\
  J^{\partial}_0 & \in & \prod_{n \geqslant 2} \tmop{Sym}^n (L^{\ast}_0)
  \oplus \hbar \prod_{n > 0} \tmop{Sym}^n (L^{\ast}_0) [[\hbar]], \quad |
  J^{\partial}_0 | = 0, \nonumber\\
  J^{\partial}_1 & \in & \prod_{n \geqslant 2} \tmop{Sym}^n (L^{\ast}_1)
  \oplus \hbar \prod_{n > 0} \tmop{Sym}^n (L^{\ast}_1) [[\hbar]], \quad |
  J^{\partial}_1 | = 0,  \label{fmldfntn410pleeesend}
\end{eqnarray}
the scale $t$ interaction
\begin{equation}
  I_t \assign \lim_{\varepsilon \rightarrow 0} \hbar \log (e^{\hbar
  \partial_{P (\varepsilon, t)}} e^{I_0 / \hbar}) \label{scltinactntqmintvl49}
\end{equation}
with $I_0 \assign \pi^{\ast}_0 (J^{\partial}_0) + \pi^{\ast}_1
(J^{\partial}_1) + \int_{\mathbf{I}} I^{\partial}$ is a well-defined element
in $\mathcal{O} (\mathcal{E}) [[\hbar]]$. Let
\begin{equation}
  H^{\partial}_0, H^{\partial}_1 \in \mathcal{O} (V) [[\hbar]], \quad |
  H^{\partial}_i | = 1, H^{\partial}_i \star_{\hbar} H^{\partial}_i = 0 \text{
  for } i = 0, 1. \label{fml412gdgdgdeeend}
\end{equation}
They induces a BFV operator
\begin{equation}
  \tmmathbf{\Omega}_{L_0'}^{\tmop{left}} (H^{\partial}_0, -) \otimes 1 + 1
  \otimes \tmmathbf{\Omega}_{L_1'}^{\tmop{right}} (-, H_1^{\partial})
  \label{bfvoprtrintvlcs415plsend}
\end{equation}
on $\mathcal{O} (L_0') \otimes \mathcal{O} (L_1') [[\hbar]]$, where
$\tmmathbf{\Omega}_{L_1'}^{\tmop{right}} (-, -)$ is defined in
(\ref{weylqtztnrightactnmycnvntn}) with the substitution $L \rightarrow L_1',
L' \rightarrow L_1$. Similar to Definition \ref{mqmetqmdfntn36rlylstlbl}, we
define:

\begin{definition}
  \label{mqmeintvlcsdfntn42plslstlbl}Given $I^{\partial}, J^{\partial}_0,
  J^{\partial}_1$ as in (\ref{fmldfntn410pleeesend}), we say that $I_t$ in
  (\ref{scltinactntqmintvl49}) satisfies scale $t$ modified quantum master
  equation (mQME) with (\ref{bfvoprtrintvlcs415plsend}) being the BFV operator
  if
  \begin{eqnarray}
    &  & ((\tmmathbf{\Omega}_{L_0'}^{\tmop{left}} (H^{\partial}_0, -) \otimes
    1 + 1 \otimes \tmmathbf{\Omega}_{L_1'}^{\tmop{right}} (-, H_1^{\partial}))
    \otimes 1 \nobracket \nonumber\\
    &  & \hspace{8em} \left. + 1 \otimes 1 \otimes \left( \hbar \mathd +
    \hbar^2 \partial_{K_{L_0, L_1, t}} \right) \right) \mathbb{I}_{\theta_t}
    (e^{(\pi^{\ast}_0 (\tmmathbf{\alpha}_0) - \pi^{\ast}_1
    (\tmmathbf{\alpha}_1) + I_t) / \hbar}) = 0.  \label{mqmeintvltqmmycvtn49}
  \end{eqnarray}
\end{definition}

By arguments similar to those in Section
\ref{tctntofitctnthytqmpstvrllnsctn32}, we have:

\begin{theorem}
  \label{ppstn42gnrcsltntqmintvllbbububu}Given $I^{\partial}, J^{\partial}_0,
  J^{\partial}_1$ as in (\ref{fmldfntn410pleeesend}), $H^{\partial}_0,
  H^{\partial}_1$ as in (\ref{fml412gdgdgdeeend}), the mQME
  (\ref{mqmeintvltqmmycvtn49}) is satisfied iff
  \begin{equation}
    I^{\partial} \star_{\hbar} I^{\partial} = 0, \quad H_0^{\partial} = -
    e^{J^{\partial}_0 / \hbar} \star_{\hbar} I^{\partial} \star_{\hbar} e^{-
    J_0^{\partial} / \hbar}, \quad H_1^{\partial} = e^{- J^{\partial}_1 /
    \hbar} \star_{\hbar} I^{\partial} \star_{\hbar} e^{J_1^{\partial} / \hbar}
    . \label{gnrcsltntqmitvl411}
  \end{equation}
\end{theorem}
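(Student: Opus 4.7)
The plan is to follow the strategy of the proof of Theorem~\ref{thm310gnrcsltntqmrplus}, running the argument simultaneously at the two boundary components $\{0\}$ and $\{1\}$ while carefully tracking the sign flip coming from $-\omega^{\partial}$ on the right-hand boundary in (\ref{bdrflddt42}) and the minus sign in front of $\pi_1^{\ast}(\tmmathbf{\alpha}_1)$ appearing in (\ref{scltspltgitvcfgrtn47}) and (\ref{mqmeintvltqmmycvtn49}).

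First I would establish the interval analog of Lemma~\ref{bvanmlcmptnn37}. Expanding $(\hbar \mathd + \hbar^2 \partial_{K_{L_0, L_1, t}}) e^{I_t / \hbar}$ as a sum over connected Feynman graphs, using $K_{L_0, L_1, t} = \mathd \overline{P_{L_0, L_1}}(0, t)$ on $\mathbf{I}[2]$, and applying Stokes' theorem to the integrals over $\mathbf{I}[n]$, the codimension-one boundary strata of $\mathbf{I}[n]$ come in three kinds: (a) a bulk coordinate $x_i \to 0$, (b) a bulk coordinate $x_i \to 1$, and (c) a coincidence $x_i = x_j$. The analogs of (\ref{spclbdrvleofextddppgt}) near the two corners $(0,0)$ and $(1,1)$, together with the diagonal jump (\ref{bdrdiffofextddppgt}), show that strata (a) contribute a $\pi_0^{\ast}$ boundary term controlled by $K^{\partial}_{0,+}$ (exactly as in Lemma~\ref{bvanmlcmptnn37}), strata (b) contribute a $\pi_1^{\ast}$ boundary term controlled by $K^{\partial}_{1,-}$ with an opposite overall sign coming from the reversed orientation of $\partial \mathbf{I}$ at $\{1\}$, and strata (c) produce the bulk term $\hbar^{-1} \int_{\mathbf{I}} I^{\partial} \star_{\hbar} I^{\partial}$.

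Next I would prove the interval analogs of Lemma~\ref{rltnlm38ok} and of the compatibility identity (\ref{cptbltycdtnofspltgandbvdtmyycccvtn}); both are routine given the splitting (\ref{scltspltgitvcfgrtn47}) and the identity $\overline{P_{L_0, L_1}}(0, \varepsilon) + P_{L_0, L_1}(\varepsilon, t) = \overline{P_{L_0, L_1}}(0, t)$ on $\mathbf{I}[2]$. Combining these with the BV anomaly formula above, I would rewrite the left-hand side of (\ref{mqmeintvltqmmycvtn49}) as $e^{(\mathbb{I}_0(\tmmathbf{\alpha}_0) - \mathbb{I}_1(\tmmathbf{\alpha}_1)) / \hbar} \mathbb{I}_{\theta_t}$ applied to a single expression, and match it against the Weyl quantization formulas for $\tmmathbf{\Omega}_{L_0'}^{\tmop{left}}(H_0^{\partial}, -)$ and $\tmmathbf{\Omega}_{L_1'}^{\tmop{right}}(-, H_1^{\partial})$ via (\ref{usefulfactabta2mycvtn1}) and its right-module counterpart derived from (\ref{weylqtztnrightactnmycnvntn}). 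This reduces (\ref{mqmeintvltqmmycvtn49}) to three decoupled equations: the interior equation $I^{\partial} \star_{\hbar} I^{\partial} = 0$, the $\{0\}$-boundary equation identical to that solved in Theorem~\ref{thm310gnrcsltntqmrplus} and yielding $H_0^{\partial} = -e^{J_0^{\partial}/\hbar} \star_{\hbar} I^{\partial} \star_{\hbar} e^{-J_0^{\partial}/\hbar}$, and the $\{1\}$-boundary equation whose sign flips conspire to give $H_1^{\partial} = +e^{-J_1^{\partial}/\hbar} \star_{\hbar} I^{\partial} \star_{\hbar} e^{J_1^{\partial}/\hbar}$.

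I expect the main obstacle to be the sign and orientation bookkeeping at the $\{1\}$ boundary: one must verify that three independent sign flips (the reversed orientation of $\partial \mathbf{I}$ at $\{1\}$, the replacement of $\omega^{\partial}$ by $-\omega^{\partial}$ in (\ref{bdrflddt42}), and the use of a right- rather than left-module Weyl quantization) combine consistently to produce the asymmetric sign pattern in (\ref{gnrcsltntqmitvl411}), in which $J_1^{\partial}$ enters with the opposite sign to $J_0^{\partial}$ while $H_1^{\partial}$ itself appears with the opposite overall sign to $H_0^{\partial}$. A convenient sanity check is to reflect $x \mapsto 1 - x$ and compare the $\{1\}$-boundary computation against the $\mathbb{R}_{\geqslant 0}$ case already handled in Theorem~\ref{thm310gnrcsltntqmrplus}.
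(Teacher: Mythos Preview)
Your proposal is correct and mirrors the paper's own approach: the paper does not write out a separate proof for the interval case but simply states ``By arguments similar to those in Section~\ref{tctntofitctnthytqmpstvrllnsctn32}, we have'' before the theorem, i.e., it instructs the reader to rerun the proof of Theorem~\ref{thm310gnrcsltntqmrplus} (via the interval analogs of Lemma~\ref{bvanmlcmptnn37}, Lemma~\ref{rltnlm38ok}, and relation~(\ref{cptbltycdtnofspltgandbvdtmyycccvtn})) at both boundary components, which is exactly the plan you outline. Your explicit attention to the three boundary strata of $\mathbf{I}[n]$ and to the sign bookkeeping at $\{1\}$ fills in precisely the details the paper leaves implicit.
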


This describes generic solutions to the mQME for TQM on interval.

\begin{example}
  \label{eg37thycnstctn}``BF theory with B-A boundary condition''
  
  Let $\mathfrak{g}$ be a Lie algebra with basis $\{ t^a \}_{a = 1}^{\ell}$,
  $[t^a, t^b] = f^{a b}_c t^c$.
  
  For $\beta \in \mathfrak{g}^{\ast}$, we use $\epsilon \beta$ to denote the
  element in $(\mathfrak{g}^{\ast}) [- 1]$ corresponding to $\beta$, where
  $\epsilon$ is a formal variable of degree $1$; similarly for $\alpha \in
  \mathfrak{g}$ we have $\eta \alpha \in \mathfrak{g} [1]$ where $\eta$ is a
  formal variable of degree $- 1$. There is a degree $0$ symplectic pairing
  $\omega^{\partial}$ on
  \[ V \assign (\mathfrak{g}^{\ast}) [- 1] \oplus \mathfrak{g} [1], \]
  determined by
  \[ \omega^{\partial} (\epsilon t_a, \eta t^b) \assign \delta_a^b \]
  where $\{ t_a \}_{a = 1}^{\ell}$ is the basis of $\mathfrak{g}^{\ast}$ dual
  to $\{ t^a \}_{a = 1}^{\ell}$. Then, for the graded symplectic vector space
  $(V, \omega^{\partial})$, the inverse to $\omega^{\partial}$ (see
  (\ref{invrssplctc27}) for definition) is
  \[ K^{\partial} = - \eta t^a \otimes \epsilon t_a - \epsilon t_a \otimes
     \eta t^a . \]
  The model we talk about is the so-called ``BF theory'' on $\mathbf{I}$, with
  bulk field space $\left( \mathcal{E}= \Omega^{\bullet} (\mathbf{I}) \otimes
  V, \mathd, \omega = \int_{\mathbf{I}} \omega^{\partial} \right)$ and
  boundary field space $(\mathcal{E}^{\partial} = V \oplus V, 0,
  \omega^{\partial} \oplus (- \omega^{\partial}))$. We choose the following
  polarization:
  \begin{equation}
    L_0 = (\mathfrak{g}^{\ast}) [- 1], L'_0 =\mathfrak{g} [1] ; \quad L_1
    =\mathfrak{g} [1], L'_1 = (\mathfrak{g}^{\ast}) [- 1],
    \label{bfthyintvlabplrztn}
  \end{equation}
  which is the so-called ``B boundary condition'' at $\{ 0 \}$ and ``A
  boundary condition'' at $\{ 1 \}$.
  
  Let $\{ B^a \}_{a = 1}^{\ell}$ be the basis of $((\mathfrak{g}^{\ast}) [-
  1])^{\ast}$ and $\{ A_a \}_{a = 1}^{\ell}$ be the basis of $(\mathfrak{g}
  [1])^{\ast}$ so that
  \[ B^a (\epsilon t_b) = A_b (\eta t^a) = \delta_b^a . \]
  Then, BF theory corresponds to the following choice:
  \begin{equation}
    I^{\partial} = \frac{1}{2} f^{a b}_c B^c A_a A_b, \quad J^{\partial}_0 =
    J^{\partial}_1 = 0, \quad H^{\partial}_0 = - H^{\partial}_1 = -
    I^{\partial} . \label{bfthydtactnbdractnbfvoprt414}
  \end{equation}
  This data satisfies (\ref{gnrcsltntqmitvl411}) because
  \[ I^{\partial} \star_{\hbar} I^{\partial} = - \hbar f^{a b}_{b'} f^{a'
     b'}_{c'} B^{c'} A_a A_b A_{a'} = 0 \]
  by Jacobi identity. Hence we indeed obtain a consistent interactive TQM in
  BV-BFV formalism.
\end{example}

\section{Return to the BV Description}\label{sctn5555555lst}

In {\cite{wangyantqm202203}} we only work on the restricted bulk field space
$\mathcal{E}_L$, and define the theory within BV formalism. This approach is
taken in several other works to deal with QFT's on manifold with boundary,
such as
{\cite{rabinovich2021factorization,gwilliam2021factorization,gwilliam2019onelpqtzncs,zeng2021monopole}}.
Moreover, factorization algebras in the sense of
{\cite{costello_gwilliam_2016,costello_gwilliam_2021}} can be constructed on
the spacetime to describe observables of the theory (see e.g.,
{\cite{rabinovich2021factorization}}). In this section, we will extract such a
BV description from the BV-BFV description for TQM in previous sections. Then,
the mQME is translated to a condition arising from factorization algebra data.
This translation exhibits the connection between the BV-BFV interpretation and
the factorization algebra interpretation of the ``state''.

\subsubsection*{QME with restricted bulk field space}

We first consider TQM on $\mathbb{R}_{\geqslant 0}$ discussed in Section
\ref{sctnbvbfvtqmpstvrllne3}.

There is a differential BV algebra $(\mathcal{O} (\mathcal{E}_L), \mathd,
\partial_{K_t})$, where the restricted bulk field space $\mathcal{E}_L$ is
defined in (\ref{rstctdfldspctqmrpstvsctn3fml4}) and the BV kernel $K_t$ is
defined in (\ref{bvknldfntnbvbfvwrk}). The complex $(\mathcal{O}
(\mathcal{E}_L) [[\hbar]], \mathd + \hbar \partial_{K_t})$ will be perturbed
by restriction of interaction $I_t$ in (\ref{intactnatfnttscldfntnmm}) to
$\mathcal{E}_L$, which is
\[ I_t |_{\mathcal{E}_L} = \lim_{\varepsilon \rightarrow 0} \hbar \log \left(
   e^{\hbar \partial_{P (\varepsilon, t)}} e^{(I_0 |_{\mathcal{E}_L}) / \hbar}
   \right) \]
with
\begin{equation}
  I_0 |_{\mathcal{E}_L} =\mathbb{I} (J^{\partial}) + \left(
  \int_{\mathbb{R}_{\geqslant 0}} I^{\partial} \right)_{\mathcal{E}_L}
  \label{scl0itactntqmrrpstv51dftn}
\end{equation}
and $J^{\partial} \in \mathcal{O} (L) [[\hbar]]$ so that $\mathbb{I}
(J^{\partial}) \in \mathcal{O} (\mathcal{E}_L) [[\hbar]]$ ($\mathbb{I}$ is
defined in (\ref{nttnspcliso316})). Then, BV quantization amounts to consider
the following QME:
\begin{equation}
  (\mathd + \hbar \partial_{K_t}) e^{(I_t |_{\mathcal{E}_L}) / \hbar} = 0.
  \label{bvqmerstctdfldspctqmrplus51}
\end{equation}
By Lemma \ref{bvanmlcmptnn37}, it is direct to see
\begin{eqnarray*}
  &  & (\hbar \mathd + \hbar^2 \partial_{K_t}) e^{(I_t |_{\mathcal{E}_L}) /
  \hbar}\\
  & = & \lim_{\varepsilon \rightarrow 0} e^{\hbar \partial_{P (\varepsilon,
  t)}} \left( e^{(I_0 |_{\mathcal{E}_L}) / \hbar} \left( \mathbb{I}p_L \left(
  e^{- J^{\partial} / \hbar} e^{\hbar
  \partial_{K_+^{\partial}}}_{\tmop{cross}} \left( e^{\hbar
  \partial_{(K^{\partial}_+ - K^{\partial}_-) / 4}} I^{\partial},
  e^{J^{\partial} / \hbar} \right) \right) \right. \right.\\
  &  & \hspace{12em} \left. \left. + 2 \hbar^{- 1} \left(
  \int_{\mathbb{R}_{\geqslant 0}} I^{\partial} \star_{\hbar} I^{\partial}
  \right)_{\mathcal{E}_L} \right) \right)\\
  & = & \lim_{\varepsilon \rightarrow 0} e^{\hbar \partial_{P (\varepsilon,
  t)}} \left( e^{(I_0 |_{\mathcal{E}_L}) / \hbar} \left( \mathbb{I} (e^{-
  J^{\partial} / \hbar} \tmmathbf{\Omega}_L^{\tmop{right}} (e^{J^{\partial} /
  \hbar}, I^{\partial})) +  \hbar^{- 1} \left( \int_{\mathbb{R}_{\geqslant
  0}} I^{\partial} \star_{\hbar} I^{\partial} \right)_{\mathcal{E}_L} \right)
  \right) .
\end{eqnarray*}
So we have:

\begin{proposition}
  \label{thm51bvdscrptntqmrpstvlstplsend5151}The QME
  (\ref{bvqmerstctdfldspctqmrplus51}) is equivalent to
  \begin{equation}
    \tmmathbf{\Omega}_L^{\tmop{right}} (e^{J^{\partial} / \hbar},
    I^{\partial}) = 0, \quad \text{and} \quad I^{\partial} \star_{\hbar}
    I^{\partial} = 0. \label{qmeeqvltcdttqmrplsbvfmlsm}
  \end{equation}
\end{proposition}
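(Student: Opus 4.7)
The statement follows almost directly from the explicit formula displayed immediately before it. The plan is to reduce the QME (\ref{bvqmerstctdfldspctqmrplus51}) to the vanishing of the scale-$0$ anomaly appearing in that formula, and then to separate the anomaly into its two summands by a locality argument.

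First I would justify the anomaly formula itself by repeating the proof of Lemma~\ref{bvanmlcmptnn37} with $I_0$ replaced by $I_0|_{\mathcal{E}_L} = \mathbb{I}(J^\partial) + (\int_{\mathbb{R}_{\geqslant 0}} I^\partial)_{\mathcal{E}_L}$. Since $P(\varepsilon, t), K_t \in \tmop{Sym}^2(\mathcal{E}_L)$, every operation involved --- Feynman-graph expansion with propagator $P(\varepsilon, t)$, the replacement of $K_t$ by $\mathd \bar{P}(0, t)$ via (\ref{bvknrlisdextddppgtonconfgspc}), and Stokes' theorem on $\mathbb{R}_{\geqslant 0}[n]$ --- descends to the restricted setting. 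The $x_i \to 0^+$ strata repackage via (\ref{weylqtztnrightactnmycnvntn}) into $\mathbb{I}(e^{-J^\partial/\hbar} \tmmathbf{\Omega}_L^{\tmop{right}}(e^{J^\partial/\hbar}, I^\partial))$ (now using the \emph{right} Weyl action since $\mathbb{I}(J^\partial)$ is pulled back from $\mathcal{O}(L)$), while the diagonal strata $x_i = x_{i+1}$ give $\hbar^{-1}(\int_{\mathbb{R}_{\geqslant 0}} I^\partial \star_\hbar I^\partial)_{\mathcal{E}_L}$, exactly as in Lemma~\ref{bvanmlcmptnn37}.

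Next I would reduce the QME to the vanishing of the bracketed anomaly. The operator $e^{\hbar \partial_{P(\varepsilon, t)}}$ is invertible for each $\varepsilon > 0$ with well-defined $\varepsilon \to 0$ limit by Proposition~\ref{tqmuvfntthmprf}, and $e^{(I_0|_{\mathcal{E}_L})/\hbar}$ is a unit in $\mathcal{O}(\mathcal{E}_L)[[\hbar]]$. Hence (\ref{bvqmerstctdfldspctqmrplus51}) is equivalent to
\[ \mathbb{I}\bigl(e^{-J^\partial/\hbar} \tmmathbf{\Omega}_L^{\tmop{right}}(e^{J^\partial/\hbar}, I^\partial)\bigr) + \hbar^{-1}\Bigl(\textstyle\int_{\mathbb{R}_{\geqslant 0}} I^\partial \star_\hbar I^\partial\Bigr)_{\mathcal{E}_L} = 0. \]

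The final step is a locality argument separating the two summands. The first is pulled back along $\pi|_{\mathcal{E}_L}: \mathcal{E}_L \to L$, so it depends on $f \in \mathcal{E}_L$ only through the boundary value $f(0)$; the second is the integral of a local density over $\mathbb{R}_{\geqslant 0}$. Testing against fields compactly supported in $(0, \infty)$ kills the first term and forces the integrand of the second to vanish, giving $I^\partial \star_\hbar I^\partial = 0$. Feeding this back, the boundary term must also vanish, and since $\pi|_{\mathcal{E}_L}$ is surjective, $\mathbb{I}$ is injective on $\mathcal{O}(L)[[\hbar]]$, and $e^{-J^\partial/\hbar}$ is a unit, this forces $\tmmathbf{\Omega}_L^{\tmop{right}}(e^{J^\partial/\hbar}, I^\partial) = 0$. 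The converse direction is immediate from the anomaly formula. I expect the main subtlety to be making the locality separation precise --- in particular, verifying that a boundary-pulled-back functional and a nontrivial bulk-integrated functional cannot cancel as elements of $\mathcal{O}(\mathcal{E}_L)[[\hbar]]$; once this is clear, everything else is formal algebra.
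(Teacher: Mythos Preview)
Your proposal is correct and follows essentially the same route as the paper: the paper's entire argument is the displayed anomaly computation immediately preceding the proposition (obtained by restricting Lemma~\ref{bvanmlcmptnn37} to $\mathcal{E}_L$, where $\pi^{\ast}$ becomes $\mathbb{I}\,p_L$ and the boundary term is then identified with $\mathbb{I}(e^{-J^{\partial}/\hbar}\,\tmmathbf{\Omega}_L^{\tmop{right}}(e^{J^{\partial}/\hbar}, I^{\partial}))$), followed by the assertion ``So we have''. You have in fact supplied more detail than the paper does, since the locality separation of the boundary-supported term from the bulk-integrated term, and the invertibility of the RG operator, are both left implicit there (just as in the proof of Theorem~\ref{thm310gnrcsltntqmrplus}).
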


(\ref{qmeeqvltcdttqmrplsbvfmlsm}) ensures that $(\mathcal{O} (\mathcal{E}_L)
[[\hbar]], \mathd + \hbar \partial_{K_t} + \{ I_t |_{\mathcal{E}_L}, - \}_t)$
is a cochain complex ($\{ -, - \}_t$ is the BV bracket of $\partial_{K_t}$ on
$\mathcal{O} (\mathcal{E}_L)$).

\begin{remark}
  \label{rmk51plsplsendpls}(\ref{qmeeqvltcdttqmrplsbvfmlsm}) also ensures that
  \begin{equation}
    \left( \mathcal{O} (L) [[\hbar]], \frac{1}{\hbar}
    \tmmathbf{\Omega}_L^{\tmop{right}} (-, e^{J^{\partial} / \hbar}
    \star_{\hbar} I^{\partial} \star_{\hbar} e^{- J^{\partial} / \hbar})
    \right) \label{eftwithinbvbint53}
  \end{equation}
  is a derived BV algebra in the sense of {\cite[Definition
  2.1]{bandiera2020cumulants}} (we also refer to {\cite[Section
  5]{wangyantqm202203}} for a review of this notion). If we use homological
  perturbation theory to transfer the operator $\hbar \partial_{K_t} + \{ I_t
  |_{\mathcal{E}_L}, - \}_t$ on $\mathcal{O} (\mathcal{E}_L) [[\hbar]]$ to
  $\mathd$-cohomology of $\mathcal{O} (\mathcal{E}_L) [[\hbar]]$ following
  {\cite[Section 4.1]{wangyantqm202203}}, (\ref{eftwithinbvbint53}) will be
  the resulting ``effective observable complex'' for the current model.
\end{remark}

Similarly, if we consider TQM on $\mathbf{I}$ with restricted bulk field space
$\mathcal{E}_{L_0, L_1}$ defined in (\ref{whatihvnlblthisdfntn44}), the QME
would be
\begin{equation}
  \left( \mathd + \hbar \partial_{K_{L_0, L_1, t}} \right) e^{\left( I_t
  |_{\mathcal{E}_{L_0, L_1}} \right) / \hbar} = 0
  \label{qmetqmrstctdfldspcitvl53}
\end{equation}
where $I_t$ here is defined in (\ref{scltinactntqmintvl49}). Repeating the
arguments for the previous case, we have:

\begin{proposition}
  \label{qmesltntqmintvl56endendend}The QME (\ref{qmetqmrstctdfldspcitvl53})
  is equivalent to
  \begin{equation}
    \tmmathbf{\Omega}_{L_0}^{\tmop{right}} (e^{J^{\partial}_0 / \hbar},
    I^{\partial}) = 0, \quad \tmmathbf{\Omega}_{L_1}^{\tmop{left}}
    (I^{\partial}, e^{J^{\partial}_1 / \hbar}) = 0, \quad \text{and} \quad
    I^{\partial} \star_{\hbar} I^{\partial} = 0
    \label{qmesltnbvrstctdfldspctqmitnvl54}
  \end{equation}
  where $\tmmathbf{\Omega}_{L_1}^{\tmop{left}} (-, -)$ is defined in
  (\ref{weylqtztnccrtfmlmycnvntn}) with the substitution $L \rightarrow L_1',
  L' \rightarrow L_1$.
\end{proposition}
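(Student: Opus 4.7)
The plan is to follow exactly the strategy of Proposition \ref{thm51bvdscrptntqmrpstvlstplsend5151}: extend the BV anomaly computation of Lemma \ref{bvanmlcmptnn37} from $\mathbb{R}_{\geqslant 0}$ to the interval $\mathbf{I}$, and then read off the three vanishing conditions from mutually independent codimension-one strata of the configuration space $\mathbf{I}[n]$.

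First I would establish the interval analog of Lemma \ref{bvanmlcmptnn37}. Using $K_{L_0,L_1,t} = \mathd \overline{P_{L_0,L_1}}(0,t)$ on $\mathbf{I}[2]$, each Feynman-graph contribution to $(\hbar \mathd + \hbar^2 \partial_{K_{L_0,L_1,t}}) e^{(I_t|_{\mathcal{E}_{L_0,L_1}})/\hbar}$ is converted by Leibniz rule and Stokes' theorem into a sum of integrals over $\partial \mathbf{I}[|V_{\tmop{bulk}}(\Gamma)|]$. The codimension-one boundary of $\mathbf{I}[n]$ decomposes into three mutually independent kinds of strata: a bulk vertex going to $\{0\}$, a bulk vertex going to $\{1\}$, and a collision of two bulk vertices in the interior. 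The boundary values of $\overline{P_{L_0,L_1}}$ near each endpoint satisfy the analogs of (\ref{bdrdiffofextddppgt}) and (\ref{spclbdrvleofextddppgt}) with respect to the Lagrangian decomposition at that endpoint, with an orientation sign distinguishing $\{0\}$ from $\{1\}$ (consistent with the $-\omega^\partial$ in (\ref{bdrflddt42})).

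Repackaging each stratum contribution via the identity (\ref{usefulfactabta2mycvtn1}) (adapted respectively to the decompositions $V = L_0 \oplus L_0'$ and $V = L_1 \oplus L_1'$), I expect the BV anomaly to take the form
\begin{eqnarray*}
(\hbar\mathd+\hbar^2\partial_{K_{L_0,L_1,t}})\, e^{(I_t|_{\mathcal{E}_{L_0,L_1}})/\hbar}
&=& \lim_{\varepsilon\to 0} e^{\hbar\partial_{P_{L_0,L_1}(\varepsilon,t)}} \Big( e^{(I_0|_{\mathcal{E}_{L_0,L_1}})/\hbar} \cdot \mathcal{R} \Big),
\end{eqnarray*}
with
\begin{eqnarray*}
\mathcal{R} &=& \mathbb{I}_0\big( e^{-J_0^\partial/\hbar}\, \tmmathbf{\Omega}_{L_0}^{\tmop{right}}(e^{J_0^\partial/\hbar}, I^\partial)\big) + \mathbb{I}_1\big( e^{-J_1^\partial/\hbar}\, \tmmathbf{\Omega}_{L_1}^{\tmop{left}}(I^\partial, e^{J_1^\partial/\hbar})\big) \\
&& +\ \hbar^{-1}\Big(\int_{\mathbf{I}} I^\partial \star_\hbar I^\partial\Big)_{\mathcal{E}_{L_0,L_1}}.
\end{eqnarray*}
Because the three summands in $\mathcal{R}$ depend on boundary data at $\{0\}$, at $\{1\}$, and on bulk data respectively, their simultaneous vanishing is equivalent to (\ref{qmesltnbvrstctdfldspctqmitnvl54}), which proves the proposition.

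The main obstacle will be matching signs and Lagrangian decompositions at $\{1\}$ to confirm that the contribution there lands in the left-Weyl form $\tmmathbf{\Omega}_{L_1}^{\tmop{left}}(I^\partial, e^{J_1^\partial/\hbar})$ rather than in right-Weyl form. This is precisely where the opposite orientation at $\{1\}$ interacts with the $-\omega^\partial$ in (\ref{bdrflddt42}) to swap the role of the Lagrangian and its complement, paralleling the asymmetry between $H_0^\partial$ and $H_1^\partial$ recorded in (\ref{gnrcsltntqmitvl411}). Once this sign bookkeeping is completed, the remainder of the computation transcribes the $\mathbb{R}_{\geqslant 0}$ argument verbatim, with compactness of $\mathbf{I}$ sidestepping the proper-support subtleties of Remark \ref{rmkncmpctpprsrptfctsitactvthy}.
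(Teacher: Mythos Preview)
Your proposal is correct and follows exactly the approach indicated in the paper, which states only ``Repeating the arguments for the previous case, we have:'' before Proposition \ref{qmesltntqmintvl56endendend}. You have correctly spelled out what that repetition entails---the interval analog of Lemma \ref{bvanmlcmptnn37} with three boundary-stratum types, and the orientation bookkeeping at $\{1\}$ that produces the left-Weyl form $\tmmathbf{\Omega}_{L_1}^{\tmop{left}}(I^\partial, e^{J_1^\partial/\hbar})$.
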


\begin{example}
  \label{eg52savemestpstp}We continue discussing BF theory with B-A boundary
  condition mentioned in Example \ref{eg37thycnstctn}.
  
  With polarization (\ref{bfthyintvlabplrztn}) and $(I^{\partial},
  J^{\partial}_0, J^{\partial}_1)$ in (\ref{bfthydtactnbdractnbfvoprt414}), we
  have $I^{\partial} \star_{\hbar} I^{\partial} = 0$, and
  \begin{eqnarray*}
    \tmmathbf{\Omega}_{L_0}^{\tmop{right}} (e^{J^{\partial}_0 / \hbar},
    I^{\partial}) & = & p_{L_0} \left( e^{\hbar \partial_{(K^{\partial}_{0, +}
    - K^{\partial}_{0, -}) / 4}} I^{\partial} \right) = 0,\\
    \tmmathbf{\Omega}_{L_1}^{\tmop{left}} (I^{\partial}, e^{J^{\partial}_1 /
    \hbar}) & = & p_{L_1} \left( e^{\hbar \partial_{(K^{\partial}_{1, -} -
    K^{\partial}_{1, +}) / 4}} I^{\partial} \right) = - \frac{1}{2} f^{c b}_c
    A_b .
  \end{eqnarray*}
  ($K_{0, -}^{\partial} = K_{1, +}^{\partial} = - \epsilon t_a \otimes \eta
  t^a, K_{0, +}^{\partial} = K_{1, -}^{\partial} = - \eta t^a \otimes \epsilon
  t_a$ here.)
  
  So, if we work on the restricted bulk field space and study BV quantization,
  the B boundary condition will be anomaly free,\footnote{This validates an
  expectation in {\cite[Remark 5.0.3]{rabinovich2021factorization}}.} but
  there will be an anomaly associated to the A boundary condition. If the Lie
  algebra $\mathfrak{g}$ is unimodular (i.e., $f^{c b}_c = 0$), then A
  boundary condition is also anomaly free, and we obtain a consistent theory
  within BV formalism.
\end{example}

\subsubsection*{The mQME revisited}

Now, suppose $(I^{\partial}, J^{\partial})$ satisfies the condition
(\ref{qmeeqvltcdttqmrplsbvfmlsm}), hence defining a TQM on
$\mathbb{R}_{\geqslant 0}$ within BV formalism. The QME solution $I_t
|_{\mathcal{E}_L}$ induces an interactive observable complex
\begin{equation}
  (\mathcal{O} (\mathcal{E}_L) [[\hbar]], \mathd + \hbar \partial_{K_t} + \{
  I_t |_{\mathcal{E}_L}, - \}_t) . \label{glblobscplxtqmrpls55}
\end{equation}
Recall the mQME (\ref{mqmemqmemqme322}) (with $H^{\partial} = -
e^{J^{\partial} / \hbar} \star_{\hbar} I^{\partial} \star_{\hbar} e^{-
J^{\partial} / \hbar}$):
\[ (1 \otimes (\hbar \mathd + \hbar^2 \partial_{K_t})
   +\tmmathbf{\Omega}_{L'}^{\tmop{left}} (H^{\partial}, -) \otimes 1)
   \mathbb{I}_{\theta_t} (e^{(\pi^{\ast} (\tmmathbf{\alpha}) + I_t) / \hbar})
   = 0, \]
it can be rewritten as
\begin{equation}
  (1 \otimes (\hbar \mathd + \hbar^2 \partial_{K_t} + \hbar \{ I_t
  |_{\mathcal{E}_L}, - \}_t) +\tmmathbf{\Omega}_{L'}^{\tmop{left}}
  (H^{\partial}, -) \otimes 1) e^{- (I_t |_{\mathcal{E}_L}) / \hbar}
  \mathbb{I}_{\theta_t} (e^{(\pi^{\ast} (\tmmathbf{\alpha}) + I_t) / \hbar}) =
  0. \label{rwrtngofmqme155}
\end{equation}
It is direct to verify that
\begin{equation}
  e^{- (I_t |_{\mathcal{E}_L}) / \hbar} \mathbb{I}_{\theta_t} (e^{(\pi^{\ast}
  (\tmmathbf{\alpha}) + I_t) / \hbar}) = e^{- (I_t |_{\mathcal{E}_L}) / \hbar}
  \lim_{\varepsilon \rightarrow 0} (1 \otimes e^{\hbar \partial_{P
  (\varepsilon, t)}}) e^{(\mathbb{I} (\tmmathbf{\alpha}) + I_0
  |_{\mathcal{E}_L}) / \hbar} \label{mdleqtnind57lst}
\end{equation}
is of the form $e^{A_t / \hbar}$ with $A_t \in \mathcal{O} (L') \otimes
\mathcal{O} (\mathcal{E}_L) [[\hbar]]$.\footnote{$A_t$ is the summation over
connected Feynman graphs containing vertices corresponding to $\mathbb{I}
(\tmmathbf{\alpha})$.}

Then, we can use the pairing
\[ \ll -, - \gg : \mathcal{O} (L) [[\hbar]] \otimes \mathcal{O} (L') [[\hbar]]
   \rightarrow \mathbb{R} [[\hbar]] \]
defined in (\ref{hilbertpairingnondegnmycnvntn}) to dualize the $\mathcal{O}
(L')$ component of $e^{- (I_t |_{\mathcal{E}_L}) / \hbar}
\mathbb{I}_{\theta_t} (e^{(\pi^{\ast} (\tmmathbf{\alpha}) + I_t) / \hbar})$. A
consequence of (\ref{usefulfactabta2mycvtn1}) is that, the map $\ll -,
e^{\mathbb{I} (\tmmathbf{\alpha}) / \hbar} \gg$ actually equals to
$\mathbb{I}$ in (\ref{nttnspcliso316}), mapping $\mathcal{O} (L) [[\hbar]]$ to
$\mathcal{O} (\mathcal{E}_L) [[\hbar]]$. So by (\ref{mdleqtnind57lst}),
\begin{equation}
  \mathbb{I}_{(0, t)} \assign \ll -, e^{- (I_t |_{\mathcal{E}_L}) / \hbar}
  \mathbb{I}_{\theta_t} (e^{(\pi^{\ast} (\tmmathbf{\alpha}) + I_t) / \hbar})
  \gg = e^{- (I_t |_{\mathcal{E}_L}) / \hbar} \lim_{\varepsilon \rightarrow 0}
  e^{\hbar \partial_{P (\varepsilon, t)}} \left( e^{(I_0 |_{\mathcal{E}_L}) /
  \hbar} \mathbb{I} (-) \right), \label{wvfctnisamp56}
\end{equation}
which also maps $\mathcal{O} (L) [[\hbar]]$ to $\mathcal{O} (\mathcal{E}_L)
[[\hbar]]$. $\mathbb{I}_{(0, t)}$ identifies the observable at scale $t$
corresponding to a given ``boundary observable at scale $0$'' under
renormalization group flow. Then, by (\ref{hilbtprcptbwithmdleactn}), the mQME
(\ref{rwrtngofmqme155}) is equivalent to the condition that
\begin{equation}
  \mathbb{I}_{(0, t)} : \left( \mathcal{O} (L) [[\hbar]], \frac{- 1}{\hbar}
  \tmmathbf{\Omega}_L^{\tmop{right}} (-, H^{\partial}) \right) \rightarrow
  (\mathcal{O} (\mathcal{E}_L) [[\hbar]], \mathd + \hbar \partial_{K_t} + \{
  I_t |_{\mathcal{E}_L}, - \}_t) \label{cchnmpeftbdrtoglblobsvbls57}
\end{equation}
is a cochain map.

There is a natural way to understand this condition. For the TQM induced by
$(I^{\partial}, J^{\partial})$, we can construct a factorization algebra
$\mathcal{F}$ of observables in the sense of
{\cite{costello_gwilliam_2016,costello_gwilliam_2021}}, as described in
{\cite{rabinovich2021factorization}}. This factorization algebra assigns a
cochain complex\footnote{Actually, to each open subset, $\mathcal{F}$ assigns
a family of cochain complex labelled by scales.} $\mathcal{F} (U)$ to each
open subset $U$ of $\mathbb{R}_{\geqslant 0}$, and assigns a cochain map
\[ \mathcal{F} (U_1) \otimes \mathcal{F} (U_2) \otimes \cdots \otimes
   \mathcal{F} (U_n) \rightarrow \mathcal{F} (U) \]
to non-intersecting $U_1, \ldots, U_n$ which all lies in $U$. Particularly,
the ``global observables'' $\mathcal{F} (\mathbb{R}_{\geqslant 0})$ (at scale
$t$) is just (\ref{glblobscplxtqmrpls55}). $\mathcal{F} (U)$ is the subset of
$\mathcal{F} (\mathbb{R}_{\geqslant 0})$ consisting of functionals supported
on $U$.\footnote{Honestly speaking, elements in $\mathcal{F} (U)$ at scale $t$
may have supports larger than $U$. We refer to {\cite[Definition
4.7.5]{rabinovich2021factorization}} for precise definition.}

For nested open subsets $[0, x_1) \subset [0, x_2) \subset \cdots \subset
\mathbb{R}_{\geqslant 0}$, the factorization algebra data leads to the
following sequence of embeddings:
\[ \mathcal{F} ([0, x_1)) \hookrightarrow \mathcal{F} ([0, x_2))
   \hookrightarrow \cdots \hookrightarrow \mathcal{F} (\mathbb{R}_{\geqslant
   0}) = (\mathcal{O} (\mathcal{E}_L) [[\hbar]], \mathd + \hbar \partial_{K_t}
   + \{ I_t |_{\mathcal{E}_L}, - \}_t) . \]
We can imagine a limit among all such $\mathcal{F} ([0, x))$'s, which should
consist of functionals supported on $\{ 0 \}$. {\cite[Section
10.1]{costello_gwilliam_2021}} contains a rigorous formulation on this kind of
limit based on the factorization algebra data, which we do not explain here
for brevity. By the content of (\ref{wvfctnisamp56}), it is convincing that
\[ \left( \mathcal{O} (L) [[\hbar]], \frac{- 1}{\hbar}
   \tmmathbf{\Omega}_L^{\tmop{right}} (-, H^{\partial}) \right) \]
will be the expected limit of these $\mathcal{F} ([0, x))$'s if we make things
precise. Besides, $\mathbb{I}_{(0, t)}$ should be the limit of the ``local to
global maps'' $\mathcal{F} ([0, x)) \hookrightarrow \mathcal{F}
(\mathbb{R}_{\geqslant 0})$, hence (\ref{cchnmpeftbdrtoglblobsvbls57}) follows
from axioms of factorization algebra.

As for TQM on interval, the story is similar. Suppose $(I^{\partial},
J^{\partial}_0, J^{\partial}_1)$ satisfies
(\ref{qmesltnbvrstctdfldspctqmitnvl54}), we have an interactive observable
complex
\[ \left( \mathcal{O} (\mathcal{E}_{L_0, L_1}) [[\hbar]], \mathd + \hbar
   \partial_{K_{L_0, L_1, t}} + \left\{ I_t |_{\mathcal{E}_{L_0, L_1}}, -
   \right\}_t \right) \]
induced by the solution $I_t |_{\mathcal{E}_{L_0, L_1}}$ to QME
(\ref{qmetqmrstctdfldspcitvl53}). Similar to (\ref{wvfctnisamp56}), we define
a map
\begin{equation}
  \mathbb{I}_{(0, t)} \assign e^{- \left( I_t |_{\mathcal{E}_{L_0, L_1}}
  \right) / \hbar} \lim_{\varepsilon \rightarrow 0} e^{\hbar \partial_{P_{L_0,
  L_1} (\varepsilon, t)}} \left( e^{\left( I_0 |_{\mathcal{E}_{L_0, L_1}}
  \right) / \hbar} m_{\mathcal{O} (\mathcal{E}_{L_0, L_1})} (\mathbb{I}_0
  \otimes \mathbb{I}_1 (-)) \right) \label{tqmintvlbdrobsvbltoscltblkobs59}
\end{equation}
from $\mathcal{O} (L_0) \otimes \mathcal{O} (L_1) [[\hbar]]$ to $\mathcal{O}
(\mathcal{E}_{L_0, L_1}) [[\hbar]]$, where $I_0 |_{\mathcal{E}_{L_0, L_1}}
=\mathbb{I}_0 (J^{\partial}_0) +\mathbb{I}_1 (J^{\partial}_1) + \left(
\int_{\mathbf{I}} I^{\partial} \right)_{\mathcal{E}_{L_0, L_1}}$,
$\mathbb{I}_0, \mathbb{I}_1$ are defined in (\ref{lstfmlisweargdfk491003}) and
$m_{\mathcal{O} (\mathcal{E}_{L_0, L_1})} : \mathcal{O} (\mathcal{E}_{L_0,
L_1})^{\otimes 2} \rightarrow \mathcal{O} (\mathcal{E}_{L_0, L_1})$ denotes
the symmetric product on $\mathcal{O} (\mathcal{E}_{L_0, L_1})$. Then, the
mQME (\ref{mqmeintvltqmmycvtn49})
\begin{eqnarray*}
  &  & ((\tmmathbf{\Omega}_{L_0'}^{\tmop{left}} (H^{\partial}_0, -) \otimes 1
  + 1 \otimes \tmmathbf{\Omega}_{L_1'}^{\tmop{right}} (-, H_1^{\partial}))
  \otimes 1 \nobracket\\
  &  & \hspace{8em} \left. + 1 \otimes 1 \otimes \left( \hbar \mathd +
  \hbar^2 \partial_{K_{L_0, L_1, t}} \right) \right) \mathbb{I}_{\theta_t}
  (e^{(\pi^{\ast}_0 (\tmmathbf{\alpha}_0) - \pi^{\ast}_1 (\tmmathbf{\alpha}_1)
  + I_t) / \hbar}) = 0
\end{eqnarray*}
(with $H_0^{\partial} = - e^{J^{\partial}_0 / \hbar} \star_{\hbar}
I^{\partial} \star_{\hbar} e^{- J_0^{\partial} / \hbar}, H_1^{\partial} = e^{-
J^{\partial}_1 / \hbar} \star_{\hbar} I^{\partial} \star_{\hbar}
e^{J_1^{\partial} / \hbar}$) is equivalent to the condition that
(\ref{tqmintvlbdrobsvbltoscltblkobs59}) is a cochain map:
\begin{eqnarray}
  \mathbb{I}_{(0, t)} & : & \left( \mathcal{O} (L_0) [[\hbar]], \frac{-
  1}{\hbar} \tmmathbf{\Omega}_{L_0}^{\tmop{right}} (-, H_0^{\partial}) \right)
  \otimes \left( \mathcal{O} (L_1) [[\hbar]], \frac{- 1}{\hbar}
  \tmmathbf{\Omega}_{L_1}^{\tmop{left}} (H_1^{\partial}, -) \right)
  \rightarrow \nonumber\\
  &  & \left( \mathcal{O} (\mathcal{E}_{L_0, L_1}) [[\hbar]], \mathd + \hbar
  \partial_{K_{L_0, L_1, t}} + \left\{ I_t |_{\mathcal{E}_{L_0, L_1}}, -
  \right\}_t \right) .  \label{the59mpisacchmp510}
\end{eqnarray}
Just like the previous case, suppose $\mathcal{F}$ is the factorization
algebra constructed for the current TQM on interval, then the map
$\mathbb{I}_{(0, t)}$ should be regarded as the limit among all such ``local
to global maps'':
\[ \mathcal{F} ([0, x)) \otimes \mathcal{F} ((x', 1]) \rightarrow \mathcal{F}
   (\mathbf{I}) = \left( \mathcal{O} (\mathcal{E}_{L_0, L_1}) [[\hbar]],
   \mathd + \hbar \partial_{K_{L_0, L_1, t}} + \left\{ I_t
   |_{\mathcal{E}_{L_0, L_1}}, - \right\}_t \right) \]
associated to $[0, x) \sqcup (x', 1] \hookrightarrow [0, 1]$. This explains
mQME in the framework of
{\cite{costello_gwilliam_2016,costello_gwilliam_2021}}.

\begin{remark}
  As mentioned in Remark \ref{rmk51plsplsendpls}, we can use homological
  perturbation theory to construct a projection from $\mathcal{O}
  (\mathcal{E}_L) [[\hbar]]$ (or $\mathcal{O} (\mathcal{E}_{L_0, L_1})
  [[\hbar]]$) to the effective observable complex. Then, by composing
  $\mathbb{I}_{(0, t)}$ with this projection, we obtain a map from
  $\mathcal{O} (L) [[\hbar]]$ (or $\mathcal{O} (L_0) \otimes \mathcal{O} (L_1)
  [[\hbar]]$) to the effective observable complex. This map is the ``state''
  from the factorization algebra perspective. (Moreover, if we go through the
  calculations in Remark \ref{rmk51plsplsendpls} for TQM on
  $\mathbb{R}_{\geqslant 0}$, this ``state'' will be the identity map on
  $\mathcal{O} (L) [[\hbar]]$, reflecting the trivial time evolution of TQM.)
  
  Accordingly, we can take $\mathbb{I}_{\theta_t} (e^{(\pi^{\ast}
  (\tmmathbf{\alpha}) + I_t) / \hbar})$ in (\ref{mqmemqmemqme322}) (or
  $\mathbb{I}_{\theta_t} (e^{(\pi^{\ast}_0 (\tmmathbf{\alpha}_0) -
  \pi^{\ast}_1 (\tmmathbf{\alpha}_1) + I_t) / \hbar})$ in
  (\ref{mqmeintvltqmmycvtn49})) and project its $\mathcal{O} (\mathcal{E}_L)$
  (or $\mathcal{O} (\mathcal{E}_{L_0, L_1})$) component to the
  $\mathd$-cohomology using homological perturbation theory. The outcome is
  the ``state'' in perturbative BV-BFV formalism, which mimics physicists'
  ``wave function'' description.
  
  So, our translation of mQME connects these two interpretations of the
  ``state''.
\end{remark}

Yau Mathematical Sciences Center, Tsinghua University, Bejing 100084, P.R.
China

Email: wmh18@mails.tsinghua.edu.cn

Institute for Advanced Study, Tsinghua University, Bejing 100084, P.R.
China

Email: ygw17@mails.tsinghua.edu.cn

\end{document}